%%
%% This is file `sample-acmsmall-conf.tex',
%% generated with the docstrip utility.
%%
%% The original source files were:
%%
%% samples.dtx  (with options: `acmsmall-conf')
%% 
%% IMPORTANT NOTICE:
%% 
%% For the copyright see the source file.
%% 
%% Any modified versions of this file must be renamed
%% with new filenames distinct from sample-acmsmall-conf.tex.
%% 
%% For distribution of the original source see the terms
%% for copying and modification in the file samples.dtx.
%% 
%% This generated file may be distributed as long as the
%% original source files, as listed above, are part of the
%% same distribution. (The sources need not necessarily be
%% in the same archive or directory.)
%%
%% The first command in your LaTeX source must be the \documentclass command.
\documentclass[acmsmall]{ec19acm}

%%
%% \BibTeX command to typeset BibTeX logo in the docs
\AtBeginDocument{%
  \providecommand\BibTeX{{%
    \normalfont B\kern-0.5em{\scshape i\kern-0.25em b}\kern-0.8em\TeX}}}

%% Rights management information.  This information is sent to you
%% when you complete the rights form.  These commands have SAMPLE
%% values in them; it is your responsibility as an author to replace
%% the commands and values with those provided to you when you
%% complete the rights form.

\copyrightyear{2019} 
\acmYear{2019} 
\setcopyright{acmlicensed}
\acmConference[EC '19]{ACM EC '19: ACM Conference on Economics and Computation}{June 24--28, 2019}{Phoenix, AZ, USA}
\acmBooktitle{ACM EC '19: ACM Conference on Economics and Computation (EC '19), June 24--28, 2019, Phoenix, AZ, USA}
\acmPrice{15.00}
\acmDOI{10.1145/3328526.3329581}
\acmISBN{978-1-4503-6792-9/19/06}

% Authors, replace the red X's with your assigned DOI string during the rightsreview eform process.

%% These commands are for a PROCEEDINGS abstract or paper.

%%
%% Submission ID.
%% Use this when submitting an article to a sponsored event. You'll
%% receive a unique submission ID from the organizers
%% of the event, and this ID should be used as the parameter to this command.
%%\acmSubmissionID{123-A56-BU3}

%%
%% The majority of ACM publications use numbered citations and
%% references.  The command \citestyle{authoryear} switches to the
%% "author year" style.
%%
%% If you are preparing content for an event
%% sponsored by ACM SIGGRAPH, you must use the "author year" style of
%% citations and references.
%% Uncommenting
%% the next command will enable that style.
%%\citestyle{acmauthoryear}

%%
%% end of the preamble, start of the body of the document source.

%%%%%%%%%%%
\usepackage{amsmath,amssymb}
\usepackage[mathscr]{eucal}
\usepackage{algorithmic}
\usepackage{mathtools}
\usepackage{bbm}

\newcommand{\E}{\mathbb{E}}

\newenvironment{proof*}[1]
  {%
  \begin{proof}}
  {\end{proof}}

\begin{document}

%%
%% The "title" command has an optional parameter,
%% allowing the author to define a "short title" to be used in page headers.
\title{Optimal Algorithm for Bayesian Incentive-Compatible Exploration}

%%
%% The "author" command and its associated commands are used to define
%% the authors and their affiliations.
%% Of note is the shared affiliation of the first two authors, and the
%% "authornote" and "authornotemark" commands
%% used to denote shared contribution to the research.
\author{Lee Cohen}
\authornote{This work was supported in part by The Yandex Initiative for Machine Learning.}
\email{leecohencs@gmail.com}
\orcid{0000-0002-3153-7965}
% \author{G.K.M. Tobin}
% \authornotemark[1]
% \email{webmaster@marysville-ohio.com}
\affiliation{%
  \institution{Blavatnik School of Computer Science, Tel Aviv University}
%   \streetaddress{P.O. Box 1212}
%   \city{Dublin}
%   \state{Ohio}
%   \postcode{43017-6221}
    \country{Israel}
}

\author{Yishay Mansour}
\authornote{This work was supported in part by a grant from ISF.}
\affiliation{%
  \institution{Blavatnik School of Computer Science, Tel Aviv University \& Google Research}
  \country{Israel}}
\email{mansour.yishay@gmail.com}
\orcid{0000-0001-6891-2645}

%%
%% By default, the full list of authors will be used in the page
%% headers. Often, this list is too long, and will overlap
%% other information printed in the page headers. This command allows
%% the author to define a more concise list
%% of authors' names for this purpose.
\renewcommand{\shortauthors}{Cohen and Mansour}

%%
%% The abstract is a short summary of the work to be presented in the
%% article.
\begin{abstract}
%We show how a smart AI system can achieve optimal social welfare by concealing information from its users while ensuring no individual is harmed. 
We consider a social planner faced with a stream of myopic selfish agents. The goal of the social planner is to maximize the social welfare, however, it is limited to using only information asymmetry (regarding previous outcomes) and cannot use any monetary incentives. The planner recommends actions to agents, but her recommendations need to be \textit{Bayesian Incentive Compatible} to be followed by the agents.

Our main result is an {\bf optimal} algorithm for the planner, in the case that the actions realizations are deterministic and have limited support, making significant important progress on this open problem. Our optimal protocol has two interesting features. First, it always completes the exploration of {\em a priori} more beneficial actions before exploring a priori less beneficial actions. Second, the randomization in the protocol is correlated across agents and actions (and not independent at each decision time).
\end{abstract}

%%
%% The code below is generated by the tool at http://dl.acm.org/ccs.cfm.
%% Please copy and paste the code instead of the example below.
%%
\begin{CCSXML}
<ccs2012>
<concept>
<concept_id>10003752.10010070.10010071.10010079</concept_id>
<concept_desc>Theory of computation~Online learning theory</concept_desc>
<concept_significance>500</concept_significance>
</concept>
</ccs2012>
\end{CCSXML}

\ccsdesc[500]{Theory of computation ~ Online learning theory}
% \ccsdesc[300]{Computer systems organization~Redundancy}
% \ccsdesc{Computer systems organization~Robotics}
% \ccsdesc[100]{Networks~Network reliability}

%%
%% Keywords. The author(s) should pick words that accurately describe
%% the work being presented. Separate the keywords with commas.
\keywords{bayesian incentive-compatible; 
exploration versus exploitation; multi-arm bandit}%; mechanism design}

%%
%% This command processes the author and affiliation and title
%% information and builds the first part of the formatted document.
\maketitle

\section{Introduction}
% Exploration
The inherent trade-off between exploration and exploitation is at the core of any reactive learning algorithm.
Multi-arm bandit is a simple model which highlights this inherent trade-off.
Multi-arm bandits can model a variety of scenarios, including pricing (where the actions are prices), recommendation (e.g., where actions are news articles) and many other settings.
%There is a significant difference between performing abstract actions and making recommendations.
%The main goal of the MAB algorithms is to focus on ek 
%Exploration and Incentives
%To a large part, multi-arm bandits are view as a model for learning and optimization and the planner can select any action which is available.
To a large part, multi-arm bandit is viewed as a model for learning and optimization in which the planner can select any available action.
However, when we are considering human agents as the entities performing the action, then incentives become a major issue. While a planner can recommend actions to the agents (in order to explore different alternatives), the agents ultimately decide whether to follow the given recommendation. This raises the issue of incentives in addition to the exploration-exploitation trade-off.
%with regard to the trade-off of exploration and exploitation.
The planner can induce explorations in many ways. The simplest is using monetary transfers, paying the agents in order to explore (for example, Frazier et al. \cite{FKKK14}).
We are interested in the case when the social planner is unable or prefers to avoid any monetary transfers.
(This can be due to regulatory constraints, business model, social norms, or any other reason.)
The main advantage of the planner in our model is the \textit{information asymmetry}, namely, the fact that the planner has much more information than the agents.
%Information Asymmetry + waze
As a motivating example for \textit{information asymmetry}, consider a GPS driving application. The application (social planner) is recommending to the drivers (agents) the best route to drive (action), given the changing road delays, and observes the actual road delays when the route is driven. While the application can recommend driving routes, ultimately, the driver decides which route to actually drive. The application needs periodically to send drivers on exploratory routes, where it has uncertainty regarding the actual delay, in order to observe their delay. The driver is aware that the application has updated information regarding the current delays on various roads. For this reason, the driver would be willing to follow the recommendation even if she knows that there is a small probability that she is asked to explore. On the other extreme, if the driver would assume that with high probability a certain recommended route has a higher delay, she might drive an alternate route. This inherent balancing of exploration and exploitation while satisfying agents' incentives, is at the core of this work.

%Previous work on the same model

The abstract model that we consider is the following. 
There is a finite set of actions, and for each action there is a prior distribution on its rewards.
%A sequence of selfish agents arrive and each select an action. 
A social planner is faced with a sequence of myopic selfish agents, and each agent appears only once. The social planner would like to maximize the social welfare, the sum of the agents' utilities. The social planner recommends to each agent an action, and if the recommendation is \textit{Bayesian incentive compatible} (henceforth, \textit{BIC}), the agent will follow the action. This model was presented in Kremer et al. \cite{Kremer-JPE14} and studied in \cite{MansourSS15,MansourSSW16,MansourSW18}. The work of Kremer et al. \cite{Kremer-JPE14} presented an optimal algorithm for the social planner in the case of {\em two} actions with deterministic outcome. (Deterministic outcome implies that each time the action is performed we receive the same reward, and the uncertainty is what that value will be, which is governed by the prior distribution.)

Our main focus is to make progress on this important open problem of providing an \textbf{optimal} policy for this setting for any number of actions. For this end, we consider a somewhat more restricted setting, where each action has a finite support. 
If we assume that there are only two possible values, say $\{-1,+1\}$, then the task becomes trivial.
We can simply order the agents according to the actions' expectation, and ask them to explore until we reach an action of value $+1$, and then recommend it forever. This would work even if we provide the agents with the realizations of the previous actions. In this work we take a small, yet significant, step away from this trivial model. We assume that the best a priori action has a larger support. For the most part we analyze the case that the support of the a priori best action is $\{-1,0,+1\}$, while the other actions have support $\{-1,+1\}$. We later extend our results to handle a more general setting of any continuous distribution with full support on $[-1,1]$ for the a priori best action (in Section \ref{contSec}).
%split

Our simple model has a significant complexity and allows us to draw a few interesting insights. To understand the challenges, consider the case where the actions have a negative expected reward. (For simplicity, we assume that the actions are sorted by their expected reward, where action $1$ has the highest expectation.) In such a case, if the realization of action $1$ is $+1$, clearly the planner would recommend it for all the following agents. If the realization of action $1$ is $-1$, clearly any other action is superior to it. However, the challenging case occurs when the realization of action $1$ is $0$. In this case, the selfish agents would prefer to perform action $1$ with $0$ reward (since other actions have negative expected reward). The challenge to the social planner is to incentivize the agents to explore. The main idea is that of \textit{information asymmetry}. When the planner recommends action $2$, the agent is unsure whether the social planner observed that outcome of action $1$ is $-1$, in which she would like to perform it, or whether the social planner observed that the outcome is $0$ and asks the agent to explore. The social planner, by a delicate balancing of the exploration probability, can make the recommendation BIC.

% If the information of the realizations of the sampled actions is transparent to all agents, and no agent will ever explore in case the reward of action $1$ is $0$, better rewards from unexplored actions could never be revealed. By contrast, we show that the planner, by controlling the information, can guarantee exploration in this case. There is a positive probability for a utility of $+1$ for every agent from every unexplored action, thus revealing the realizations to the agents sums up to a linear (in the number of agents) regret. This means that adding valuable information to the system does not automatically increases social welfare, but depends on who gets it. As a consequence, transparency can sometimes damage the society. %key ingredient for efficient translation of this knowledge into optimal social welfare is to encapsulate this extra knowledge in a form where the realization can only be available to the planner.

Our main result is an optimal algorithm for the social planner when faced with $k$ actions, both for support $\{-1,0,+1\}$ and $[-1,+1]$ for the best apriori action. First, the algorithm makes sure that the BIC constraints are tight, which is a simple intuitive requirement and is clearly required for optimality.
However, we need to exhibit much more refined properties to construct an optimal algorithm. 
%We achieve the same results for the general setting as well by using the same logic.
%
An interesting issue regarding the exploration order is whether when we force a tight BIC constraint we might be forced to explore an action $j$ before we know the values of actions $1, \ldots , j-1$ (that have better expected reward than action $j$).
We show that this is not the case in the optimal algorithm, namely,
the exploration of action $j$ starts only after the social planner knows the realizations of all the better a priori actions, i.e., $1, \ldots, j-1$. 
While this seems like an intuitive outcome, it relies on the very delicate way in which our algorithm performs its randomization.
(Recall that the recommendation algorithm uses randomization to balance between exploring and exploiting.)

The implementation of the randomization is the second interesting property of our algorithm.
In our randomization, we use a correlation between agents and actions. Specifically, the randomization selects for each action a random agent that might potentially explore it (if needed). Special care needs to be taken to make sure that for different unexplored actions we always select different agents. 
%In addition, we need to make sure not to give the agents information this way.
%Our results.
%\subsection*{Our Results}

We show that our algorithm does not only maximize the social welfare but in addition minimize the exploration time, the time until the social planner does not need to explore any more. For the most part we assume that the number of agents is large enough that the social planner completes the exploration. We show also how to derive the optimal policy in the case of a limited number of agents. 

%Related works

\subsection*{Related works}

As mentioned, the work of Kremer et al. \cite{Kremer-JPE14} presented the model and derived the optimal policy for two deterministic actions.
Mansour et al. \cite{MansourSS15} derive tight asymptotic regret bound in the case of stochastic actions as well as a reduction
from an arbitrary non-BIC policy to a BIC one. 
Bahar et al. \cite{BaharST16}
enrich the model by embedding the agents in a
social network, and allowing them to observe their neighbors.
Mansour et al. \cite{MansourSSW16} extended the model to allow a multi-agent game in each time step, rather than a single agent.
Mansour et al. \cite{MansourSW18} consider the case of two competing planners. 

Frazier et al. \cite{FKKK14} consider a model with monetary transfers, where the social planner can pay agents to explore. 
Che and H{\"o}rner \cite{Che-13} consider a setting with two binary-valued actions and continuous information flow
and a continuum of agents.
Finally, Slivkins \cite{Slivkins17} has an excellent overview of the topic.

%
% Without incentives and
% BIC constraint Bayesian Exploration reduces to the multi-armed
% bandit problem with stochastic rewards \cite{bandits-ucb1}, with
% action set $CA$ and rewards equal to the principal's utility.

Another related topic is that of \textit{Bayesian Persuasion} by Kamenica and Gentzkow \cite{Kamenica-aer11} where the planner tries to infer a value of an ``unobservable'' state using interaction 
with multiple agents. See \cite{DughmiKQ16,DughmiX17,DughmiX16} for a more
algorithmic perspective of Bayesian Persuasion.

Multi-armed bandit \cite{regret-book,Gittins-book11} is a
well-studied model for exploration-exploitation trade-off both in operations research and machine learning.
The main focus in learning multi-arm bandits is on designing efficient algorithms that have a guaranteed performance compared to the best single action.
\section{Model}
Let $A:=\{1,2,...,k\}$ be the set of possible actions.
The prior distribution $D=D_1\times D_2\times ...\times D_k$ defines random variables $X_j$ for the rewards of actions $j\in A$. The reward of action $j\in A$, denoted by $x_j$, is sampled from $D_j$ (it is sampled once, and any application of action $j$ yields the same reward $x_j$).
The prior expected reward of action $j$ is $\mu_j:=\E_{D_j}[X_j]$, and for notational convenience we assume that $\mu_1>\mu_2>\cdots >\mu_k$.

In this work we focus on the case that the support of distribution of $D_1$ is $\{-1,0,+1\}$ (the case of support $[-1,1]$ appears in Section \ref{contSec}). The support of distribution $D_j$, for $j\geq 2$, is $\{+1,-1\}$. We denote by $p_j^\alpha:=\Pr[X_j=\alpha]$, which implies that the distribution $D_j$, for $j\geq 2$, has a single parameter, $p_j^1$ (and $p_j^{-1}=1-p_j^1$). 
We assume that $p^{1}_j>0$, otherwise the action has a constant reward of $-1$.
%otherwise the planner cannot incentivize any agent to perform an action with negative expectation. (This assumption appear also in prior works \cite{Kremer-JPE14,MansourSS15}.)

The interaction between the planner and the agents proceeds as follows.
At time $t$, the $t$-th agent arrives, and the planner recommends to the $t$-th agent action $\sigma_t \in A$, which is called {\em the recommended action}. 
Given the recommended action $\sigma_t$, the $t$-th agent selects an action $a_t$, receives a reward $x_{a_t}$, and leaves. Formally, the $t$-th agent has a \textit{utility function}, $u_t$, and $u_t(a)=x_{a}$ if action $a$ has been explored, else $\E[u_t(a)]=\mu_{a}$.
A {\em history} at time $t$, $h_{t}$, contains all the previous chosen actions by the agents, i.e., $a_1,\ldots,a_{t}$, and their corresponding rewards, $x_{a_1},\ldots,x_{a_t}$.
A strategy for the planner is a {\em recommendation policy}, $\pi$, where $\pi_t(h_{t-1})=v_t\in \Delta(A)$, where $\Delta(A)$ is the set of distributions over $A$, i.e.,
$\Delta(A)=\{v_t\in \mathbb{R}^k|\forall j\in A, v_t[j]\geq0\ and\ \Sigma_{j=1}^k v_t[j]=1\}$.
The value of $v_t[j]$ is the probability that $\sigma_t=j$, i.e., $v_t[j]=\Pr[\sigma_t=j]$.

A recommended action, $\sigma_t$, is \textit{Bayesian incentive-compatible (BIC)} if for any action $j\in A$, we have $\E[u_t(\sigma_t)-u_t(j)|D,\pi,\sigma_t,t]\geq 0$. \footnote{The expectation is implicitly conditioned on the actions that were selected by the previous agents, as it conditioned on the policy, the prior and the agent's place in line.} Such constrains are called \textit{BIC constrains}.
I.e., if $\sigma_t$ is BIC then there is no other action $j\in A$ that can increase agent $t$'s expected reward, based on the prior $D$, the policy $\pi$, the recommended action $\sigma_t$, and the agent's place in line $t$, all of which are known to agent $t$ before selecting an action (note that the agents do not observe the history $h_{t-1}$). 
A recommendation policy for the planner, $\pi$, is BIC if all it's recommendations are BIC. Namely, for any agent $t$ and any history with positive probability $h_{t-1}$, the recommendation $\sigma_t$ is BIC. 

The {\em social welfare} is the expected cumulative reward of all the agents. The social welfare of a BIC recommendation policy $\pi$ is:
$SW_T(\pi):=\E[\Sigma_{t=1}^T u_t(\sigma_t)]=\E[\Sigma_{t=1}^T u_t(\pi_t(h_{t-1}))]$.

The Bayesian prior $D$ on the rewards, is a common knowledge to the planner as well as all the agents. W.l.o.g, we restrict the planner's recommendation policy to be BIC, which assures that the agents follow the recommended actions.
Our main goal is to design a BIC algorithm that maximizes social welfare (i.e., the cumulative reward of the agents). 
%This is achieved by exploring new actions and thus implicitly increasing the likelihood of revealing an action with the maximum reward. 
% three actions
\section{Optimal BIC Algorithm for $k$ actions}%$k$ actions 

We start with a simpler case that will have most of the ingredients of the more general case. We restrict the first action to have only three possible values $\{-1,0,1\}$, namely, the support of $D_1$ is $\{-1,0,1\}$. The second restriction is that we assume that there are only three actions, i.e., $k=3$. The terminology is provided for $k$-actions settings, but some of the intuition and motivation 
are provided for three actions settings. 
The proofs appear in Appendix~\ref{app:sec3}. The algorithm
for the general case of $k>3 $ actions, and some of its proofs are in the appendix \ref{appendix1}.

Given this special case, we claim that the challenging case is when $0>\mu_2>\mu_3$. In the case that $\mu_1>\mu_2>\mu_3>0$, we can simply recommend to the first agent action $1$, i.e., $\sigma_1=1$. When we observe $x_1$, then: (1) If $x_1=1$, we recommend to all the agents action $1$, i.e., $\sigma_t=1$. (2) If $x_1=0$ or $x_1=-1$, we recommend to the second agent action $2$, i.e., $\sigma_2=2$. This is BIC since $\mu_2>0\geq  x_1$ in this case. If $x_2=1$ we recommend to all the agents $\sigma_t=2$. Otherwise, $x_2=-1$, and we recommend to the third agent action $3$, i.e., $\sigma_3=3$. Again, this is BIC since $\mu_3>0\geq x_1 \geq x_2$. %In all cases 
Either way, all the agents after the first three will be performing the optimal action. The above policy maximizes social welfare even if we do not restrict the information flow, and the planner announces to the agents the actions' realizations.
In the case that $\mu_1>\mu_2>0>\mu_3$, we can execute for the first two agents the above strategy, and essentially reduce the number of actions to two, for which the optimal policy was given by Kremer et al.  \cite{Kremer-JPE14}.
For this reason, we assume that $0>\mu_2>\mu_3$. (And for $k$ actions, we assume $0>\mu_2>\dots >\mu_k$.)

To build intuition we start with a simple example, in order to explain how a BIC policy can give a recommendation $\sigma_t\ne 1$.
\begin{example}\label{exampleSigmat2}
Consider a recommendation $\sigma_t=2$ to agent $t$. The possible reasons for it is one of the following:
\begin{enumerate}
\item \textbf{Exploitation driven recommendation}: Action $2$ is the best action given the history. This can be due to one of the following cases:
\begin{enumerate}
\item \textbf{A known reward}: The planner already observed that $x_{2}=1$, which is the maximum possible reward. From that time, the recommended action is $\sigma_t=2$, as it has the maximum possible reward.
\item \textbf{An unknown reward}: The observed realizations have the minimum possible reward, i.e., $x_1=- 1$ and maybe $x_3=-1$. Given this realization, we know that $\E[u_t(2)]=\mu_2>x_1$ (and in case that $x_3=-1$, also $\E[u_t(2)]=\mu_2\geq x_3$). This makes action $2$ the best action to execute, considering the history.
\end{enumerate}
\item \textbf{Exploration driven recommendation}: The planner has not yet observed an action with the best possible reward (i.e., $1$), and observed $x_1=0$. Since we assume that $0>\mu_2>\mu_3$, such a recommendation would not benefit for agent $t$ (but the planner is recommending it since it might benefit future agents).
\end{enumerate}
\end{example}
Fortunately, the agents do not know the realizations of the actions' rewards, hence cannot infer the reason for their recommendations. This is where the \textit{information asymmetry} translates into an advantage for the planner, and enables her to maximize social welfare. 

\subsection{Information States}

It would be very useful to partition the histories depending on the information that the planner has, regarding the realized values of the actions.
Since we have only three actions, we have at most three realized values, and we can encode them in a vector of length three. We use the $*$ symbol to indicate that a value is still unknown. For example, 
$\langle 0,-1,*\rangle$ implies that we know that $x_1=0$, $x_2=-1$ and we never explored the value of $X_3$.
Any history of the first $t-1$ agents which is compatible with $\vec{z}=\langle 0,-1,*\rangle$
is assigned to the information state $S^{\vec{z}}_t$. The recommendation to the $t$-th agent would depend on the planner's information state.

Note that the agents do not know the planner's information state.
However, given the recommendation $\sigma_t$, and the planner policy $\pi$, they can deduce the probabilities of each state, conditioned on the recommendation $\sigma_t$ they received. Those probabilities allow them to test whether the recommended action is indeed BIC, i.e., maximizes their expected reward given the information they observe.

%%%%%%%%%%%%%%%%%%%%%%%%%
Going back to example \ref{exampleSigmat2}, we can now describe it using information states.
\begin{example}\label{expstates}
Consider a recommendation to agent $t$,  $\sigma_t=2$. Every possible reason for it can be one of the following: 
\begin{enumerate}
\item \textbf{States that result in exploitation driven recommendation}, action $2$ either has:
\begin{enumerate}
\item \textbf{A known reward}: The planner has already observed action $2$'s reward and it is the maximum possible reward. I.e., the planner is in one of the following information states: 
$S_t^{\langle - 1,1,*\rangle}$, $S_t^{\langle- 1,1,- 1\rangle}$, $S_t^{\langle 0,1,-1\rangle}$ or $S_t^{\langle 0,1,*\rangle}$.
\item \textbf{An unknown reward}: The only action with a better prior expected reward compared to action $2$, action $1$, has been explored and resulted in minimal reward (i.e, $x_1=-1$). Action $2$ that now has the best utility, has not yet explored. We denote this state with $S_t^{\langle-1,*,*\rangle}$. (An additional possible state is $S_t^{\langle-1,*,-1\rangle}$ where the planner also observed that $x_3=-1$.)
\end{enumerate}
The set of these \textit{exploitation states} is denoted by $\Gamma^{j+}_t$, for the reason that following a recommendation for action $j$ in such states produces higher expected utility for agent $t$ compared to action $1$.
\item \textbf{States that may result in exploration driven recommendation}: Action $2$ has not been explored yet, whereas $x_1=0$. This implies that the planner is either in information state $S_t^{\langle 0,*,*\rangle}$,  or in information state $S_t^{\langle 0,*,-1\rangle}$. \\
The set of these \textit{exploration states} is denoted by $\Gamma^{j-}_t$, for the reason that following a recommendation for action $j$ in such states produces lower expected utility for agent $t$ then selecting action $1$.
\end{enumerate}
\end{example}

\subsection{The optimal BIC recommendation algorithm}

Given the information states, we can describe the planner's recommendation policy.
The recommendation policy will map the information states to recommended actions.
In the case of an ``Exploration driven Recommendation'' the mapping would be stochastic, to make sure that the incentives are maintained.
{\tt Algorithm 3-actions} is described in Table ~\ref{RecTable}, defining what recommendation to give in each information state.

{\tt Algorithm 3-actions} uses two functions, $f^2_t(y)$ and $f^3_t(y)$, which control the exploration and are based on a mutual parameter $y$, which will be selected uniformly at random in $[0,1]$. The states are marked also as {\em terminal states} if there is a unique recommendation for all future agents, and {\em exploration} if the recommended action might not have the highest expected reward ($S_t^{\vec{z}}\in \Gamma_t^{j-}$). States not marked as \textit{exploration} result in a exploitation driven recommendation, and are therefore \textit{exploitation} states ($S_t^{\vec{z}}\in \Gamma_t^{j+}$).

Looking at {\tt Algorithm 3-actions} in Table ~\ref{RecTable} might be intimidating, however, in most of the information states the recommendations are rather straightforward. In the initial information state, i.e., $\langle *,*,*\rangle$, the only BIC recommendation is action $1$, since the first agent knows that the planner has no additional information beyond the prior. In any information state in which some $x_i=1$, the planner recommends that action, the agents get the maximum reward, and the state does not change (i.e., terminal state). In any information state in which all the realized actions are $x_i=-1$, the planner recommends an unexplored action with the highest expected reward, the agents get the maximum expected reward, and after it the state does change to include the new explored action. 

The main challenge is in the cases that the realized value of action $1$ is $x_1=0$ and $0>\mu_2>\mu_3$. In such information states we have a tension between the agent incentive, to perform action $1$ and maximize her expected reward, and the planner incentive to explore new actions to the benefit of future agents. Indeed we have two information states in which we explore stochastically, balancing between the incentives of the agent and making the recommendation BIC. In information state $\langle 0,*,*\rangle$ the planner explores with some probability action $2$, and in information state $\langle 0,-1,*\rangle$ the planner explores with some probability action $3$.

We stress that the stochastic exploration is not done in an ``independent'' way, but rather in a coordinated way through the parameter $y\in[0,1]$, which is selected initially uniformly at random, and never changes. The property that we will have is that while we are in information state $\langle 0,*,*\rangle$ we eventually have an agent that explores action $2$, and its index is $f^2(y)$. Similarly, while we are in information state $\langle 0,-1,*\rangle$ we eventually have an agent that tries action $3$, and its index is $f^3(y)$. We need to take special care to make sure that agent $f^2(y)$, which explores action $2$, is different than agent $f^3(y)$, which explores action $3$. (Clearly, each agent can explore at most one action.)
This is why we use a \textit{coordinate sampling} (to be defined later).

We also show that some information are never reachable, namely,
$\langle 0,*,1\rangle$, $\langle 0,*,-1\rangle$ and $\langle 0,1,-1\rangle$. This will be due to the fact that for any $y\in[0,1]$ we will show that $f^2(y)< f^3(y)$, which implies that we complete the exploration of action $2$ before exploring action $3$.
As we extend to $k$ actions, we use the same $y$ to coordinate between the stochastic exploration of all the actions. Then again, by showing that for any pair of actions $i<j$, it holds that $f^i(y)<f^j(y)$, we deduce that the order in which the actions are explored is from the a priori highest expected reward to the lowest, i.e., $2,3,\dots,k$.

\begin{table}[t]
\begin{tabular}{ |p{1.5cm}||p{3.9cm}|p{3.7cm}|p{1.2cm}|p{1.7cm}|}
 \hline
 \multicolumn{5}{|c|}{Recommendation Table. Policy Parameters: $(y,t)$} \\
 \hline
State& Information state &Recommendation $(\sigma_t$)& Terminal&Exploration \\
 \hline
 $S_1^{\langle*,*,*\rangle}$   & $X_1=*$    &1&&\\
 $S_t^{\langle1,*,*\rangle}$   & $X_1=1$    &1&\checkmark&\\
 $S_2^{\langle-1,*,*\rangle}$   & $X_1=-1, X_2=*$    &2&&\\
 $S_t^{\langle-1,1,*\rangle}$   & $X_1=-1, X_2=1$    &2&\checkmark&\\
 $S_3^{\langle-1,-1,*\rangle}$   & $X_1=-1, X_2=-1, X_3=*$    &3&&\\
  $S_t^{\langle-1,-1,1\rangle}$   & $X_1=-1, X_2=-1, X_3=1$    &3&\checkmark&\\
 $S_t^{\langle-1,-1,-1\rangle}$   & $X_1=-1, X_2=-1, X_3=-1$    &1&\checkmark&\\
    $S_t^{\langle0,*,*\rangle}$   & $X_1=0, X_2=*$    &$f^2_t(y)\in\{1,2\}$&&\checkmark\\
 $S_t^{\langle0,1,*\rangle}$   & $X_1=0, X_2=1$    &2&\checkmark&\\
 $S_t^{\langle0,-1,*\rangle}$   & $X_1=0,X_2=-1, X_3=*$    &$f^3_t(y)\in\{1,3\}$&&\checkmark\\
 $S_t^{\langle0,-1,1\rangle}$   & $X_1=0, X_2=-1, X_3=1$    &3&\checkmark&\\
  $S_t^{\langle0,-1,- 1\rangle}$   & $X_1=0, X_2=-1, X_3=-1$    &1
  &\checkmark&\\ 
 $S_t^{\langle0,*,-1\rangle}$   & $X_1=0, X_2=*, X_3=-1$    &infeasible[corollary \ref{infeasable}]&&\checkmark\\
 $S_t^{\langle0,*,1\rangle}$   & $X_1=0, X_2=*, X_3=1$    &infeasible[corollary \ref{infeasable}]&\checkmark&\\ 
 $S_t^{\langle0,1,- 1\rangle}$   & $X_1=0, X_2=1, X_3=- 1$    &infeasible[corollary \ref{infeasable}]&\checkmark&\\ 
 \hline
\end{tabular}\\
\caption{Algorithm 3-Action's recommendation policy\label{RecTable} } % the figure caption (numbered by latex)
\end{table}

\subsection{Exploration Rates}

In this section we formalize the exploration rate that the planner can have.
A \textit{BIC exploration rate}, denoted by $q$, measures the probability that a BIC recommendation $\sigma_t=j$ is given when the planner is in some exploration state. Namely, for any BIC recommendation policy $\pi$, the BIC exploration rate is
$\sum_{S_t^{\vec{z}}\in \Gamma_t^{j-}}\Pr_{\pi}[S_t^{\vec{z}},\sigma_t=j]$,
where $\Pr_{\pi}[S_t^{\vec{z}},\sigma_t=j]$ is the probability that the planner is in $S_t^{\vec{z}}$ at time $t$, and recommends to explore action $j$, assuming that all the recommendations until the current agent use $\pi$.

Let $\hat{\pi}$ denote a BIC recommendation policy that recommends actions base on Table~\ref{RecTable} (or Table~\ref{tblExtj} for $k>3$ actions) and uses maximum BIC exploration rates for every agent $t$ and for every $j\in A$.
We explain exactly how are exploration driven recommendations assigned to agents in a way that maximizes exploration rates in subsection \ref{subsec:implementation}.\\
\textit{Maximal BIC exploration rate}, denoted by
$q_t^j$ is the maximum probability of exploration, subject to the BIC constraints, and bounded by the probability that the planner is in exploration state at time $t$ with $j$ as a recommended action. I.e., $q_t^j$ is the solution of
\begin{equation}\label{qdefpart1}
\begin{aligned}
&q_t^j=\textnormal{max}_q
& & q \\
& \textnormal{s.t.} & &
\sum_{S_t^{\vec{z}}\in \Gamma_t^{j+}}\Pr_{\hat{\pi}}[S_t^{\vec{z}}|\sigma_t=j]\;\E[u_t(j)-u_t(1)|S^{\vec{z}}_t,\sigma_t=j]+
\mu_j\frac{q}{\Pr_{\hat{\pi}}[\sigma_t=j]}\geq0\\
& & &
0\leq q\leq \sum_{S_t^{\vec{z}} \in \Gamma_t^{j-}}\Pr_{\hat{\pi}}[S_t^{\vec{z}}]_{\textstyle.}
\end{aligned}
\end{equation}
The first constraint makes sure that $\sigma_t$ is a BIC recommendation. Its first summand is a summation taken over each exploitation state probability, multiplied by the ``gain'' from choosing action $j$ instead of action $1$ in this state. The second summand is the ``loss'' of the agent, namely the prior expected reward of action $j$ (i.e., $\mu_j$), multiplied by the exploration rate $q$ and divided by the probability of the event $\sigma_t=j$ (which includes also the exploration probability $q$). The terms ``gain'' and ``loss'' are from the agent's perspective. By looking at Table ~\ref{RecTable}, we can see that when $\sigma_t=j$ is given in exploitation state, the expected utility difference is positive, therefore the agent has a ``gain'' of reward in these states. On the other hand, as we assume that $\mu_j<0$, the agent has a ``loss'' of reward in the exploration states (all of which share $x_1=0$).
When this entire expression is non-negative (i.e., the first constraint holds), it is BIC.

Notice that $\hat{\pi}$ is defined as a BIC policy, and as such every recommendation $\sigma_t=j$ is BIC, i.e., its BIC constraints must be met for every action $i\ne j$. 
We argue that in $\hat{\pi}$, if the BIC constraint of action $j$ compared to action $1$ is satisfied, all the other BIC constraints for agent $t$ are met. Therefore, we only refer to the BIC constraint with respect to action $1$ when calculating $q_t^j$.
The reason is that for any pair of actions $a<b$, and for every $y\in[0,1]$, we show that $f^a(y)<f^b(y)$, i.e., the exploration of action $a$ is done before the exploration of action $b$. 
Along with Table \ref{RecTable} that represents the recommendations of $\hat{\pi}$, we deduce that whenever a recommendation $\sigma_t=j$ is given, the reward of action $j$ is either unknown (i.e, the expected reward is $\mu_j>-1$) or $X_j$ has been observed and $x_j=1$.
Now, for any action $i$ such that $1\ne i<j$, $f^i(y)<f^j(y)$ yields that $X_i$ has been observed and $x_i=- 1$.
As for every action $j<i$, since $f^j(y)<f^i(y)$ yields that $X_i$ has not to been sampled yet, and from the assumption that $\mu_i<\mu_j$ we know that $\mu_i<\mu_j\leq \E[u_t(j)]$.
Either way $\E[u_t(\sigma_t)-u_t(i)]\geq 0$.

The second constraint in (\ref{qdefpart1}) prevents the exploration rate from exceeding the probability that the planner is in exploration state ($S_t^{\vec{z}}\in{\Gamma_t^{j-}}$). This guarantees that we can actually use of all of $q$ to give an exploration driven recommendation. Namely, $q=\sum_{S_t^{\vec{z}}\in{\Gamma_t^{j-}}}\Pr_{\hat{\pi}}[S_t^{\vec{z}},\sigma_t=j]\leq \sum_{S_t^{\vec{z}}\in{\Gamma_t^{j-}}}\Pr_{\hat{\pi}}[S_t^{\vec{z}}]$.\\
Let $n_j$ denote the index of \textit{last agent} $t\leq T$ that might explore action $j$, i.e.,
$n_j:=argmax_t(q_t^j>0)$.
For convenience, for every agent $t$ and action $j$, we denote 
\[
A_t^j:= 
\begin{cases}
0 &\quad t< j\\
\frac{2p_{j}^{1}\Pi_{i<j}p_{i}^{-1}+p_{j}^{1}\sum_{\tau=j}^{t-1}{q^j_{\tau}}}{1-2p_{j}^{1}}&\quad t\geq j\\
\end{cases}_{\textstyle ,}
\]
\[
B_t^j:=
     \begin{cases}
     0 &\quad t< j\\
     p_1^0-\sum_{m=2}^{t-1}{q_{m}^2}  &\quad t\geq j=2\\
    p_{j-1}^{- 1}\sum_{\tau=j-1}^{t-1}{q^{j-1}_{\tau}}-\sum_{\tau=j}^{t-1}{q^j_{\tau}}\quad  &\quad t\geq j\geq 3\\
     \end{cases}_{\textstyle .}
\]
\subsection{Computing the Maximum BIC Exploration Rates}
\label{sec:maxBIC}

We now calculate the maximum BIC exploration rates. 
(The next lemma's proof for $k=3$, namely, $q_t^2$ and $q_t^3$, is in Appendix \ref{app:sec3}, and the proof for $k>3$, i.e., any $q_t^j$, is in Appendix \ref{appendix1}.)
\begin{lemma}\label{BigThm}
Given $q_2^2, \ldots,q_{t-1}^2$, we have
%\ldots,q_{j-1}^{j-1},\ldots,q_{t-1}^{j-1}$,
\begin{equation}
q_{t}^2 = 
     \begin{cases}
       0 &\quad t=1\\
       \min (A^2_t,B^2_t) &\quad t\geq 2\\
     \end{cases}_{\textstyle ,}
\end{equation}
And for action $j\geq 3$, given $q^i_\tau$ for $i\leq j-1$ and $\tau\leq t-1$, assuming $q_{t-1}^{j-1}=A_{t-1}^{j-1}$ and $t\leq n_{j-1}$, we have
\begin{equation}
q^j_{t} = 
     \begin{cases}
       0 &\quad t<j\\
       \min(A^j_t,B^j_t) &\quad t\geq j\\
     \end{cases}_{\textstyle .}
\end{equation}
In addition we show that $q_{t}^{j}\leq  p_{j-1}^{-1}A_{t-1}^{j-1}$.
\end{lemma}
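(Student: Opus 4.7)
The plan is to solve the linear program in~(\ref{qdefpart1}) explicitly for each pair $(t,j)$. Since both constraints are linear in $q$ and we maximize $q$, the optimum is attained when the tighter one binds; define $A_t^j$ to be the value of $q$ at which the BIC constraint becomes tight and $B_t^j$ the value at which the feasibility constraint becomes tight, so that $q_t^j = \min(A_t^j, B_t^j)$. The boundary $q_t^j = 0$ for $t < j$ follows because, by the property $f^i(y) < f^j(y)$ for $i < j$ that the algorithm will be shown to have, action $j$ cannot be recommended before the preceding actions have been explored, so no exploration state for $j$ is reachable and the exploitation gain is zero.

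To evaluate $A_t^j$, I would enumerate the exploitation states in $\Gamma_t^{j+}$ reachable under $\hat\pi$. The ordered-exploration property rules out the states marked infeasible in Table~\ref{RecTable} and leaves three positive-gain histories: (i) the ``first exploitation'' state $\langle -1,\dots,-1,*,*,\dots\rangle$, reachable only at $t = j$ with probability $\Pi_{i<j} p_i^{-1}$ and gain $\mu_j + 1 = 2 p_j^1$; (ii) the terminal state $\langle -1,\dots,-1,1,*,\dots\rangle$ for $t > j$ with probability $\Pi_{i<j} p_i^{-1}\,p_j^1$ and gain $2$; and (iii) the terminal state $\langle 0,-1,\dots,-1,1,*,\dots\rangle$ reached when a past exploration of $j$ observed $+1$, with probability $p_j^1 \sum_{\tau=j}^{t-1} q_\tau^j$ and gain $1$ (since $x_1 = 0$). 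Summing the contributions and solving the equality form of the BIC constraint gives $A_t^j = (2 p_j^1 \Pi_{i<j} p_i^{-1} + p_j^1 \sum_{\tau=j}^{t-1} q_\tau^j)/(1 - 2 p_j^1)$; the $j = 2$ case specializes since $\Pi_{i<2} p_i^{-1} = p_1^{-1}$.

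For the feasibility bound $B_t^j$, the ordered exploration property implies that the unique reachable exploration state for action $j$ is $\langle 0,-1,\dots,-1,*,*,\dots\rangle$: it is entered at time $\tau$ when an exploration of $j-1$ yields $-1$, adding $p_{j-1}^{-1} q_\tau^{j-1}$ to its probability, and exited when action $j$ is explored, subtracting $q_\tau^j$. Summing the net contributions recovers the definition of $B_t^j$ in each case ($j = 2$ and $j \geq 3$). The feasibility constraint in~(\ref{qdefpart1}) caps $q$ by this probability, so $B_t^j$ is the value at which the feasibility constraint becomes tight.

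The final inequality $q_t^j \leq p_{j-1}^{-1} A_{t-1}^{j-1}$ is the most delicate step; I would prove it by induction on $t$ with a case split. When $q_t^j = B_t^j$, I would use the inductive hypothesis $q_\tau^j = p_{j-1}^{-1} q_{\tau-1}^{j-1}$ for previous feasibility-tight steps to telescope the subtracted sum inside $B_t^j$, leaving exactly $p_{j-1}^{-1} q_{t-1}^{j-1} = p_{j-1}^{-1} A_{t-1}^{j-1}$. When $q_t^j = A_t^j$, substituting the explicit formulas for $A_t^j$ and $A_{t-1}^{j-1}$ and canceling the common factor $\Pi_{i < j-1} p_i^{-1}$ reduces the desired bound to a monotonicity inequality of the form $g(p_j^1) \leq g(p_{j-1}^1)$ with $g(p) = p(1-p)/(1-2p)^2$, which holds because $g$ is increasing on $(0, 1/2)$ and the action ordering $\mu_j < \mu_{j-1}$ gives $p_j^1 < p_{j-1}^1$. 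The main obstacle I expect is the bookkeeping needed to generalize this monotonicity argument across all $t$: one must carefully track how the sum $\sum_{\tau=j}^{t-1} q_\tau^j$ inside $A_t^j$ propagates under the assumption $q_{t-1}^{j-1} = A_{t-1}^{j-1}$ and $t \leq n_{j-1}$, so that in both cases the inductive step produces the clean closed-form bound.
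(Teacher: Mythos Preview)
Your derivation of $q_t^j=\min(A_t^j,B_t^j)$ is essentially the paper's: enumerate the reachable exploitation states, plug into~(\ref{qdefpart1}), and read off the two bounds. One structural caution: you invoke the ordered-exploration property (``$f^i(y)<f^j(y)$'') to prune states and to justify $q_t^j=0$ for $t<j$, but that property is \emph{proved later using the very inequality $q_t^j\le p_{j-1}^{-1}A_{t-1}^{j-1}$ you are establishing here}. The paper avoids this circularity by making the inequality itself the inductive hypothesis on $t$; the hypothesis for $\tau<t$ is what guarantees the infeasible states have probability zero at step $t$, and then the current step re-establishes it.

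The real gap is in your argument for $q_t^j\le p_{j-1}^{-1}A_{t-1}^{j-1}$. First, the $q_t^j=B_t^j$ branch does not work: the inductive hypothesis gives only $q_\tau^j\le p_{j-1}^{-1}q_{\tau-1}^{j-1}$, and telescoping that inside $B_t^j$ yields $B_t^j\ge p_{j-1}^{-1}q_{t-1}^{j-1}$, the wrong direction. (In fact, under the lemma's standing assumption $q_{t-1}^{j-1}=A_{t-1}^{j-1}$ the paper's chain shows $A_t^j\le p_{j-1}^{-1}q_{t-1}^{j-1}\le B_t^j$, so this branch is vacuous; but you have not established that.) Second, in the $q_t^j=A_t^j$ branch you cannot reduce the comparison to a single function $g(p)=p(1-p)/(1-2p)^2$: after cancelling $\Pi_{i<j-1}p_i^{-1}$ the two sides still carry different sums $\sum_{\tau}q_\tau^j$ versus $\sum_{\tau}q_\tau^{j-1}$, so no function of $p$ alone captures the inequality. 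The paper's maneuver is a two-stage bound: first use that $p\mapsto (2Cp+pS)/(1-2p)$ is increasing on $(0,\tfrac12)$ to replace $p_j^1$ by $p_{j-1}^1$ \emph{with the sum $S=\sum_\tau q_\tau^j$ held fixed}, and only then apply the inductive hypothesis termwise to replace $\sum_{\tau=j}^{t-1}q_\tau^j$ by $p_{j-1}^{-1}\sum_{\tau=j-1}^{t-2}q_\tau^{j-1}$, which lands exactly on $p_{j-1}^{-1}A_{t-1}^{j-1}$. That separation of the $p$-monotonicity step from the sum-replacement step is the missing idea.
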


%The proof of the next lemma is in Appendix \ref{app:sec3}.
The next lemma derives the value of $q^j_t$ (without an assumption on $q^{j-1}_{t-1}$).
\begin{lemma}
\label{qAfternj}
For action $j\geq 3$, given $q^i_\tau$ for $i\leq j-1$ and $\tau\leq t-1$, such that $t>n_{j-1}$, we have
\begin{equation}
q^j_{t} = 
     \begin{cases}
       0 &\quad t<j\\
       \min(A^j_t,B^j_t) &\quad t\geq j\\
     \end{cases}_{\textstyle .}
\end{equation}
\end{lemma}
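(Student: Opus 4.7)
The plan is to show that the formula $q^j_t=\min(A^j_t,B^j_t)$ already established in Lemma~\ref{BigThm} extends verbatim to the regime $t>n_{j-1}$. Both upper bounds come directly from the LP \eqref{qdefpart1}: the BIC constraint supplies $q^j_t\le A^j_t$, and the feasibility constraint supplies $q^j_t\le B^j_t$. I will argue that the derivation of each bound is insensitive to whether action $j-1$ is still being explored, and that both are achieved simultaneously at $q^j_t=\min(A^j_t,B^j_t)$.

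First I would handle the feasibility side. The only exploration state for recommendation $j$ (under $\hat\pi$) is $S^{\langle 0,-1,\ldots,-1,\ast,\ldots\rangle}_t$: the first $j-1$ entries are observed and the $j$-th is still unknown. Its probability mass is accumulated by one mechanism only, namely each prior exploration of action $j-1$ that yields $-1$, which contributes $p_{j-1}^{-1}q^{j-1}_\tau$ per step $\tau$; it is drained only by prior explorations of action $j$, contributing $-q^j_\tau$ per step. Summing gives $\Pr_{\hat\pi}[S^{\vec z}_t]=p_{j-1}^{-1}\sum_{\tau=j-1}^{t-1}q^{j-1}_\tau-\sum_{\tau=j}^{t-1}q^j_\tau=B^j_t$, independently of whether $t\le n_{j-1}$ or $t>n_{j-1}$. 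In the regime $t>n_{j-1}$ we simply use that $q^{j-1}_\tau=0$ for $\tau>n_{j-1}$, so the incoming sum has saturated at $\sum_{\tau=j-1}^{n_{j-1}}q^{j-1}_\tau$ and no further exploration-state mass arrives. The feasibility constraint of \eqref{qdefpart1} then reads $q\le B^j_t$.

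Next I would re-derive the BIC bound exactly as in Lemma~\ref{BigThm}. The exploitation states consistent with $\sigma_t=j$ partition into those where action $j$ was already observed to equal $+1$ (with total probability $2p_j^1\prod_{i<j}p_i^{-1}+p_j^1\sum_{\tau=j}^{t-1}q^j_\tau$, tracked by the same probability-mass bookkeeping) and the state where all of $x_1,\ldots,x_{j-1}$ are $-1$ and $j$ is unexplored. Substituting these probabilities into the first constraint of \eqref{qdefpart1}, together with the coefficient $\mu_j$ on the exploration term and the normalization by $\Pr_{\hat\pi}[\sigma_t=j]$, and solving for the tight $q$, yields $q\le A^j_t$; this manipulation is algebraically identical to the one in Lemma~\ref{BigThm} and never uses that $t\le n_{j-1}$ or that $q^{j-1}_{t-1}$ takes the maximal value $A^{j-1}_{t-1}$. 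It relies only on the cumulative counts $\sum_\tau q^i_\tau$, which are well-defined in either regime.

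Combining the two bounds and noting that they are achieved together when we set the LP to its optimum gives $q^j_t=\min(A^j_t,B^j_t)$; the case $t<j$ is vacuous since action $j$ has no valid exploration state before time $j$. The main obstacle I anticipate is the probability-mass accounting for the exploration state after action $j-1$ has ceased to be explored: one must verify that no new probability can enter $S^{\langle 0,-1,\ldots,-1,\ast,\ldots\rangle}_t$ beyond time $n_{j-1}$, and that any mass still present has been preserved (not rerouted to a different state) through the intervening steps. This follows from the coordinate-sampling property, which guarantees that during $\tau\in(n_{j-1},t)$ the only recommendations made in this state are either exploratory of action $j$ (accounted for in $B^j_t$) or exploitative of the same action, so the state is neither silently drained nor augmented.
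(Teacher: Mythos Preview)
Your derivation of the two bounds $q^j_t\le A^j_t$ and $q^j_t\le B^j_t$ is correct and matches the paper's: once one accepts that the only reachable exploration state for action $j$ is $\langle 0,-1,\dots,-1,\ast,\dots\rangle$, the probability-mass bookkeeping and the BIC algebra go through unchanged for $t>n_{j-1}$, exactly as you say.

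The one substantive difference is how that reachability claim is justified. In Lemma~\ref{BigThm} the corresponding fact is established inductively via the inequality $q_t^j\le p_{j-1}^{-1}A_{t-1}^{j-1}$, which relies on $q_{t-1}^{j-1}=A_{t-1}^{j-1}$; for $t>n_{j-1}$ that hypothesis is false. The paper closes this gap with a short, self-contained argument from the LP \eqref{qdefpart1} applied to action $j-1$: since $q_t^{j-1}=0$ by definition of $n_{j-1}$, and the gain term in the BIC constraint is strictly positive while the loss coefficient $\mu_{j-1}$ is strictly negative, the only way the LP can return $0$ is if the second constraint binds at zero, i.e.\ $\sum_{S^{\vec z}_t\in\Gamma_t^{(j-1)-}}\Pr_{\hat\pi}[S^{\vec z}_t]=0$. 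This immediately gives that action $j-1$'s reward has been observed (in every history consistent with the recommendation $\sigma_t=j$), so the induction from Lemma~\ref{BigThm} carries over without the extra inequality.

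Your alternative resolution---appealing to ``the coordinate-sampling property''---is circular as stated: the functions $f^j$ and the ordering $f^{j-1}(y)<f^j(y)$ (Lemma~\ref{f2smallest}, Corollary~\ref{infeasable}) are defined and proven \emph{after} all the $q^j_t$ have been computed, so they cannot be invoked here. What you actually need, and what suffices, is just the recommendation table: in the state $\langle 0,-1,\dots,-1,\ast,\dots\rangle$ the policy $\hat\pi$ outputs only $f^j_t(y)\in\{1,j\}$, so mass is drained solely by exploration of $j$ and replenished solely by exploration of $j-1$. That is a definitional fact about $\hat\pi$, not a consequence of coordinate sampling. If you replace the final paragraph's appeal with this observation (or, more cleanly, with the paper's LP argument for $j-1$), the proof is complete.
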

The following are consequences of Lemma \ref{BigThm} and Lemma \ref{qAfternj}.
\begin{lemma}\label{corq3bigger0}\label{qjPositive}
For every $j>1$, the exploration rate of agent $j$ for action $j$ is strictly positive, i.e., $q_{j}^j>0$.
\end{lemma}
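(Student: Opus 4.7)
The plan is to prove $q_j^j > 0$ by induction on $j \geq 2$, exploiting the fact that Lemma~\ref{BigThm} and Lemma~\ref{qAfternj} together give the closed form $q_j^j = \min(A_j^j, B_j^j)$. The only case split is which of the two lemmas supplies this formula at time $t=j$ (depending on whether $n_{j-1} \geq j$ or $n_{j-1} = j-1$), but both yield the same expression, so the task reduces to showing that both $A_j^j$ and $B_j^j$ are strictly positive.

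For the base case $j=2$, the sums $\sum_{\tau=2}^{1}$ are empty, so $A_2^2 = \frac{2 p_2^1 p_1^{-1}}{1 - 2 p_2^1}$ and $B_2^2 = p_1^0$. Each of $p_2^1$, $p_1^{-1}$, and $p_1^0$ is positive under the standing assumptions on the prior, and $1 - 2 p_2^1 > 0$ since $\mu_2 < 0$ forces $p_2^1 < 1/2$. Hence $q_2^2 = \min(A_2^2, B_2^2) > 0$.

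For the inductive step, fix $j \geq 3$ and assume $q_{j-1}^{j-1} > 0$. At $t=j$ the inner sums are again empty, leaving
\[
A_j^j \;=\; \frac{2 p_j^1 \prod_{i<j} p_i^{-1}}{1 - 2 p_j^1}, \qquad B_j^j \;=\; p_{j-1}^{-1}\, q_{j-1}^{j-1}.
\]
The denominator of $A_j^j$ is positive because $\mu_j < 0$ gives $p_j^1 < 1/2$; every factor in its numerator is positive since $p_j^1 > 0$ by assumption, $p_1^{-1} > 0$ by the support of $D_1$, and for $2 \leq i < j$ we have $p_i^{-1} = 1 - p_i^1 > 0$ because $\mu_i < 0$ forces $p_i^1 < 1/2$. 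The quantity $B_j^j$ is positive by the inductive hypothesis and $p_{j-1}^{-1} > 0$. Therefore $q_j^j = \min(A_j^j, B_j^j) > 0$, completing the induction.

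The only place where one must be careful is verifying that the formula $q_j^j = \min(A_j^j, B_j^j)$ is legitimately available at $t=j$ for every $j$: one must check that $(t,j)$ falls into the hypothesis of at least one of Lemma~\ref{BigThm} or Lemma~\ref{qAfternj}. Since the two lemmas collectively handle the sub-cases $t \leq n_{j-1}$ (with the side-assumption $q_{t-1}^{j-1} = A_{t-1}^{j-1}$, which one maintains as an auxiliary invariant in the induction) and $t > n_{j-1}$, this is a pure bookkeeping step; the positivity argument itself, once the closed form is in hand, is immediate from the explicit expressions for $A_j^j$ and $B_j^j$ at $t=j$.
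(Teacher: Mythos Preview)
Your argument is correct and matches the paper's approach: the paper does not give a standalone proof of this lemma but embeds exactly your computation inside the proof of Lemma~\ref{BigThm} (see the derivation of $q_3^3>0$ in the appendix, where the numerator and denominator of $A_3^3$ are checked to be positive and $B_3^3=p_2^{-1}q_2^2$ is handled via the previous step). Your induction on $j$ with the explicit evaluation of $A_j^j$ and $B_j^j$ at $t=j$ is precisely what the paper is doing, just stated more modularly.

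One small caveat worth tightening: the ``auxiliary invariant'' you invoke, namely $q_{j-1}^{j-1}=A_{j-1}^{j-1}$ whenever $j\le n_{j-1}$, is not pure bookkeeping---it is the content of the inequality $A_j^j < p_{j-1}^{-1}A_{j-1}^{j-1}$ (which follows from $p_j^1<p_{j-1}^1$ and monotonicity of $p\mapsto p/(1-2p)$). That inequality gives $A_j^j<B_j^j$ whenever $q_{j-1}^{j-1}=A_{j-1}^{j-1}$, so the invariant propagates; and if instead $q_{j-1}^{j-1}=B_{j-1}^{j-1}<A_{j-1}^{j-1}$, one checks directly that the exploration budget for action $j-1$ is exhausted so $n_{j-1}=j-1$ and Lemma~\ref{qAfternj} applies. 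The paper establishes this same inequality inside its inductive proof, so you are not missing anything---just be aware that this step carries real content rather than being a formality.
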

%The proof of the following lemmas appears in Appendix \ref{app:sec3}.
The following lemmas relate the exploration rates $q^j_t$ and the parameters $A^j_t$ and $B^j_t$. 
\begin{lemma}\label{APositive}
%The series $(A_t^j)_{t=j}^{T}$ is monotonically increasing, which means that 
For every action $j$ and agent $t> j$ and that $q^j_{t-1}>0$, it holds that $A_j^j\leq A_{t-1}^j< A_{t}^j$.
\end{lemma}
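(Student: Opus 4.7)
The plan is to prove both inequalities by direct computation from the definition of $A_t^j$, using the fact that for $j \geq 2$ the denominator $1 - 2p_j^1$ is strictly positive.

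First I would establish the sign of $1 - 2p_j^1$. Since $D_j$ is supported on $\{-1,+1\}$ for $j \geq 2$, we have $\mu_j = 2p_j^1 - 1$. The standing assumption $0 > \mu_2 > \cdots > \mu_k$ then gives $p_j^1 < \tfrac{1}{2}$, so $1 - 2p_j^1 > 0$. Combined with the Model-section hypothesis $p_j^1 > 0$, the coefficient $\tfrac{p_j^1}{1-2p_j^1}$ is strictly positive.

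Next I would compute the one-step difference from the definition. For $t > j$, both $A_t^j$ and $A_{t-1}^j$ use the same constant term $2p_j^1\prod_{i<j} p_i^{-1}$, and differ only in the summation range. Therefore
\[
A_t^j - A_{t-1}^j \;=\; \frac{p_j^1}{1 - 2p_j^1}\left(\sum_{\tau=j}^{t-1} q^j_\tau \;-\; \sum_{\tau=j}^{t-2} q^j_\tau\right) \;=\; \frac{p_j^1\, q^j_{t-1}}{1 - 2p_j^1}.
\]
Under the hypothesis $q^j_{t-1} > 0$, this expression is strictly positive, yielding $A_{t-1}^j < A_t^j$.

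Finally, for the weak inequality $A_j^j \leq A_{t-1}^j$ I would telescope. The same calculation as above (applied with $t$ replaced by each $s+1$ in the range $j \leq s \leq t-2$) gives $A_{s+1}^j - A_s^j = \tfrac{p_j^1\, q^j_s}{1 - 2p_j^1} \geq 0$, since $q^j_s \geq 0$ is a probability. Summing these non-negative increments from $s = j$ to $s = t-2$ gives $A_j^j \leq A_{t-1}^j$, which completes the proof. There is no real obstacle here: once the sign of $1 - 2p_j^1$ is pinned down, the lemma is an immediate consequence of the recursive structure of the definition of $A_t^j$.
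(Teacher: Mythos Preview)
Your proposal is correct and follows essentially the same idea as the paper's proof: both boil down to the identity $A_{s+1}^j - A_s^j = \dfrac{p_j^1\,q_s^j}{1-2p_j^1}$, with the paper phrasing this as an induction on $t$ and you phrasing it as a telescoping sum. Your version is slightly tidier in that you explicitly justify $1-2p_j^1>0$ from $\mu_j<0$ and only invoke $q_s^j\ge 0$ (not strict positivity) for the weak inequality $A_j^j\le A_{t-1}^j$, whereas the paper leans on Lemma~\ref{corq3bigger0} in its base case; but these are presentation differences, not a different route.
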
\label{incAtj}

\begin{lemma}
\label{BigThm2}
For every action $j$, for every $n_j\geq t>n_{j-1}$, it holds that 
\begin{enumerate}\label{props}
\item $B_{t-1}^j > B_t^j\geq 0$.
%\item $0<A_{t-1}^j<A_t^j$ for every $t>n_{j-1}+1$.
\item $q_t^j>0$.
%\item If $q_t^j=B_t^j(>0)$, then it holds that $q_{t}^{j-1}=0$.
\item If $q_t^j=B_t^j(>0)$, then it holds that $q_{t+i}^j=B_{t+i}^j=0$ for every $i\geq1$, therefore $t=n_j$ and we stop.
\end{enumerate}
%\item $t\geq n_{j-1}+1$.
\end{lemma}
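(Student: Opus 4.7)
The plan is to prove Properties~1--3 together, by induction on $t$ in $(n_{j-1}, n_j]$, using the one-step identity
\begin{equation*}
B_{t+1}^j - B_t^j \;=\; p_{j-1}^{-1}\, q_t^{j-1} - q_t^j,
\end{equation*}
which is immediate from the definition of $B_t^j$. By the definition of $n_{j-1}$, we have $q_t^{j-1}=0$ for every $t>n_{j-1}$, so on the interval of interest the identity collapses to $B_{t+1}^j = B_t^j - q_t^j$. I will also use that $q_t^j = \min(A_t^j, B_t^j)$ (Lemma~\ref{qAfternj}) together with $A_t^j \geq A_j^j > 0$ (Lemma~\ref{APositive} applied inductively along the chain of previously positive $q$'s), which gives the key equivalence $q_t^j > 0 \Leftrightarrow B_t^j > 0$.

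I would first dispatch Property~3, which is the cleanest. If $q_t^j = B_t^j > 0$ for some $t>n_{j-1}$, the simplified recursion gives $B_{t+1}^j = 0$, hence $q_{t+1}^j = \min(A_{t+1}^j, 0) = 0$. Induction on $i$ then yields $B_{t+i}^j = q_{t+i}^j = 0$ for every $i\geq 1$, so by the definition of $n_j$ we conclude $t = n_j$.

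For Properties~1 and~2, I would first establish $B_t^j \geq 0$ globally by induction on $t\geq j$, using $B_j^j = p_{j-1}^{-1} q_{j-1}^{j-1}\geq 0$ together with $q_\tau^j \leq B_\tau^j$ (from the $\min$ defining $q_\tau^j$) to propagate non-negativity through $B_{\tau+1}^j = B_\tau^j - q_\tau^j + p_{j-1}^{-1}q_\tau^{j-1} \geq 0$. Since $B^j$ is non-increasing on $[n_{j-1}+1, \infty)$ (by the simplified recursion) and $q_{n_j}^j > 0$ forces $B_{n_j}^j > 0$, monotonicity yields $B_t^j \geq B_{n_j}^j > 0$ for all $t\in[n_{j-1}+1, n_j]$. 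This delivers Property~2 and the non-negativity half of Property~1. The strict drop $B_{t-1}^j > B_t^j$ for $t \in [n_{j-1}+2, n_j]$ is then immediate from $B_{t-1}^j - B_t^j = q_{t-1}^j > 0$ (using Property~2 at $t-1$).

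The main obstacle is the strict drop at the transition step $t = n_{j-1}+1$, i.e., establishing $q_{n_{j-1}}^j > p_{j-1}^{-1} q_{n_{j-1}}^{j-1}$. At this boundary point both actions $j-1$ and $j$ may be simultaneously explored, so the simplification $q_t^{j-1}=0$ is unavailable. The plan is to apply Property~3 inductively to action $j-1$ at $n_{j-1}$ to obtain $q_{n_{j-1}}^{j-1}=B_{n_{j-1}}^{j-1}$ (otherwise $q_{n_{j-1}+1}^{j-1}>0$, contradicting the definition of $n_{j-1}$), combine this with the auxiliary bound $q_{n_{j-1}}^j \leq p_{j-1}^{-1} A_{n_{j-1}-1}^{j-1}$ from Lemma~\ref{BigThm}, and reduce the remaining inequality to a direct algebraic comparison between $A_{n_{j-1}-1}^{j-1}$ and $B_{n_{j-1}}^{j-1}$ using their closed-form recursions.
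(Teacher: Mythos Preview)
Your overall plan mirrors the paper's double induction (outer on $j$, inner on $t$) and relies on exactly the same ingredients: the one-step recursion $B_{t+1}^j=B_t^j+p_{j-1}^{-1}q_t^{j-1}-q_t^j$, the formula $q_t^j=\min(A_t^j,B_t^j)$ from Lemmas~\ref{BigThm} and~\ref{qAfternj}, and $A_t^j>0$. Your handling of Property~3, the global non-negativity of $B_t^j$, and the strict drop $B_{t-1}^j>B_t^j$ for $t\ge n_{j-1}+2$ is correct and in fact cleaner than the paper's written argument (which contains several reversed inequality signs).

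The gap is in your plan for the transition step $t=n_{j-1}+1$. You correctly identify that what is needed there is the \emph{lower} bound $q_{n_{j-1}}^j>p_{j-1}^{-1}q_{n_{j-1}}^{j-1}$, but the only tool you invoke is the \emph{upper} bound $q_{n_{j-1}}^j\le p_{j-1}^{-1}A_{n_{j-1}-1}^{j-1}$ from Lemma~\ref{BigThm}. No comparison between $A_{n_{j-1}-1}^{j-1}$ and $B_{n_{j-1}}^{j-1}$ can turn that upper bound into the required lower bound: if $A_{n_{j-1}-1}^{j-1}>B_{n_{j-1}}^{j-1}$ you only learn $q_{n_{j-1}}^j\le(\text{something})>p_{j-1}^{-1}q_{n_{j-1}}^{j-1}$, which concludes nothing, while if $A_{n_{j-1}-1}^{j-1}<B_{n_{j-1}}^{j-1}$ you obtain the \emph{opposite} inequality $q_{n_{j-1}}^j<p_{j-1}^{-1}q_{n_{j-1}}^{j-1}$. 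The paper's written proof does not resolve this boundary case either: it takes $t=j$ as the inner base and writes ``since we assume that $t=j>n_{j-1}$,'' i.e., it tacitly assumes $n_{j-1}=j-1$, which is not established (and indeed at $t=j$ one has $B_{j-1}^j=0<B_j^j$, so Property~1 as stated fails there). To close the transition you would need a genuine lower bound on $q_{n_{j-1}}^j$, for instance by working directly with $A_{n_{j-1}}^j$ rather than with the inequality from Lemma~\ref{BigThm}.
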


\subsection{Properties of the Exploration Rates}
The policy assigns exploration driven recommendations of actions to agents while maximizing all exploration rates simultaneously.
In this subsection we show properties regarding the exploration rates which later enable to have correlated randomization between agents and actions. It would also help us to show in subsection \ref{subsec:implementation} and section \ref{OptimalitySec} that as a BIC policy that recommends actions base on Table~\ref{RecTable} (or Table~\ref{tblExtj} for $k>3$ actions) and maximizes exploration rates, $\hat{\pi}$: (1) has a well-defined implementation (that assigns a single action for every agent), (2) eventually reaches a terminal state, and (3) maximizes expected social welfare.
The following theorem is a corollary to Lemmas \ref{BigThm} - \ref{BigThm2}, and states the exact exploration rates.
\begin{theorem}\label{qBeforeLast}
For action $2$ and for agent $t$ we have,
\[ 
q_{t}^2 = 
     \begin{cases}
       0 &\quad t=1\\
       \frac{2p_{1}^{-1}p_{2}^{1}+p_{2}^{1}\sum_{m=2}^{t-1}{q^2_{m}}}{1-2p_{2}^{1}} &\quad 2\leq t<n_2\\
        p_1^0-\sum_{m=2}^{n_2-1}{q_{m}^2}&\quad t=n_2\\
0 &\quad t>n_2\\
     \end{cases}_{\textstyle ,}
\]
for action $j\geq 3$ and agent $t$ we have,
\[   
q_t^j = 
     \begin{cases}
       0  &\quad  t<j\\
  \frac{2p_{j}^{1}\Pi_{i<j}p_{i}^{-1}+p_{j}^{1}\sum_{\tau=j}^{t-1}{q^j_{\tau}}}{1-2p_{j}^{1}}  &\quad j\leq t<n_j\\
       p_{j-1}^{-1}\sum_{\tau=j-1}^{n_{j}-1}{q^{j-1}_{\tau}}-\sum_{\tau=j}^{n_j-1}{q^j_{\tau}}\quad  &\quad t=n_j\\
       0  &\quad t>n_j
     \end{cases}_{\textstyle .}
\]
\end{theorem}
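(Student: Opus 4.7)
The plan is to obtain the closed-form formula by a case split on $t$, with Lemmas \ref{BigThm} and \ref{qAfternj} supplying the identity $q_t^j=\min(A_t^j,B_t^j)$ throughout, and Lemmas \ref{APositive} and \ref{BigThm2} pinning down which argument of the minimum is attained in each regime. Action $2$ admits the same analysis as $j\ge 3$ with only the explicit form of $B_t^2=p_1^0-\sum_{m=2}^{t-1}q_m^2$ differing from the generic $B_t^j=p_{j-1}^{-1}\sum_{\tau=j-1}^{t-1}q_\tau^{j-1}-\sum_{\tau=j}^{t-1}q_\tau^j$, so I would run one uniform argument.

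For $t<j$, the value $q_t^j=0$ is the first case of Lemmas \ref{BigThm} and \ref{qAfternj}. For $j\le t<n_j$, I apply whichever of those two lemmas is appropriate (depending on whether $t\le n_{j-1}$), obtaining $q_t^j=\min(A_t^j,B_t^j)$. Since $q_t^j>0$ by Lemmas \ref{qjPositive} and \ref{BigThm2}(2), and since Lemma \ref{BigThm2}(3) would force $t=n_j$ whenever $q_t^j=B_t^j>0$, we conclude $q_t^j=A_t^j$; unfolding the definition of $A_t^j$ produces the middle line of the theorem.

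For $t=n_j$ the target is $q_{n_j}^j=B_{n_j}^j$. I would rule out $A_{n_j}^j<B_{n_j}^j$ as follows: by definition of $n_j$ we have $q_{n_j+1}^j=0=\min(A_{n_j+1}^j,B_{n_j+1}^j)$, while Lemma \ref{APositive} gives $A_{n_j+1}^j>A_{n_j}^j>0$, so $B_{n_j+1}^j=0$. The one-step telescoping identity
\[
B_{t+1}^j-B_t^j=p_{j-1}^{-1}q_t^{j-1}-q_t^j
\]
(with $p_{j-1}^{-1}q_t^{j-1}$ replaced by $0$ in the $j=2$ case) applied at $t=n_j$ then yields $B_{n_j}^j=q_{n_j}^j-p_{j-1}^{-1}q_{n_j}^{j-1}\le q_{n_j}^j$, incompatible with the supposed strict inequality. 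Hence $q_{n_j}^j=B_{n_j}^j$, and substituting the definition of $B_{n_j}^j$ gives the third line of the formula. Finally, $t>n_j$ is handled by the definition of $n_j$.

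The only non-mechanical step is the boundary case $t=n_j$, because Lemma \ref{BigThm2}(3) is one-way (it tells us that $q=B>0$ forces termination, but not conversely). The telescoping identity above, combined with Lemma \ref{APositive}'s strict monotonicity of $A$ and the defining property $q_{n_j+1}^j=0$, supplies the missing direction; I expect this to be the only place in the proof where actual manipulation rather than direct citation is needed.
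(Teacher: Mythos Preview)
Your overall plan matches the paper's: the theorem is presented there simply as ``a corollary to Lemmas \ref{BigThm}--\ref{BigThm2}'', and the case split plus the boundary argument at $t=n_j$ via the telescoping identity is exactly the right way to unpack that. Your treatment of $t=n_j$ is correct (and sharper than what the paper spells out): the contradiction from $B_{n_j+1}^j=0$ together with $B_{n_j}^j=q_{n_j}^j-p_{j-1}^{-1}q_{n_j}^{j-1}\le q_{n_j}^j$ cleanly rules out $A_{n_j}^j<B_{n_j}^j$.

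There is, however, one genuine gap in the middle range $j\le t<n_j$. Lemma \ref{BigThm2} is stated only for $n_{j-1}<t\le n_j$, so your appeal to item (3) does not cover $j\le t\le n_{j-1}$. And the conclusion of (3) can actually fail there: if $q_t^j=B_t^j>0$ with $t\le n_{j-1}$, the telescoping gives $B_{t+1}^j=p_{j-1}^{-1}q_t^{j-1}$, which is positive (since $t\le n_{j-1}$), so $q_{t+1}^j>0$ and $t$ need not equal $n_j$. What saves the range $t\le n_{j-1}$ is instead the extra clause of Lemma \ref{BigThm}: its proof actually establishes the stronger inequality $A_t^j\le p_{j-1}^{-1}A_{t-1}^{j-1}$ (this is how the stated bound $q_t^j\le p_{j-1}^{-1}A_{t-1}^{j-1}$ is obtained). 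Using this together with the outer induction $q_{\tau-1}^{j-1}=A_{\tau-1}^{j-1}$ for $\tau\le n_{j-1}$, sum over $\tau$ to get $\sum_{\tau=j}^{t-1}q_\tau^j\le p_{j-1}^{-1}\sum_{\tau=j-1}^{t-2}q_\tau^{j-1}$, whence $B_t^j\ge p_{j-1}^{-1}q_{t-1}^{j-1}\ge A_t^j$; so $q_t^j=A_t^j$ directly, without invoking the termination property. With this patch the argument is complete.
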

Let $\rho_j=p_1^0\Pi_{i=2}^{j-1}p_i^{- 1}$. 
We show that $\rho_j$ is the total exploration rate of action $j$.
%The proof of the following lemma appears in Appendix \ref{app:sec3}.
\begin{lemma}\label{lemmaRho}
For $T\geq n_k$, the probability for exploration driven recommendation for any action $j$ is $\rho_j$, i.e.,
\[\Pr[\exists t: \sigma_t=j, S_t^{\vec{z}}\in \Gamma^{j-}_t]=\rho_j.\]
\end{lemma}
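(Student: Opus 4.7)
The plan is to prove this by induction on $j$, using the exact expressions for the exploration rates $q_t^j$ from Theorem \ref{qBeforeLast} and then summing a telescoping-style identity to show $\sum_{t=j}^{n_j} q_t^j = \rho_j$. The last ingredient is translating this sum of per-agent exploration probabilities into the probability of the event that \emph{some} agent receives an exploration driven recommendation for action $j$.

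For the base case $j=2$, the formula in Theorem \ref{qBeforeLast} at $t=n_2$ reads $q_{n_2}^2 = p_1^0 - \sum_{m=2}^{n_2-1} q_m^2$, which rearranges to $\sum_{m=2}^{n_2} q_m^2 = p_1^0 = \rho_2$. For the inductive step $j\ge 3$, the formula at $t=n_j$ similarly rearranges to
\[
\sum_{\tau=j}^{n_j} q_\tau^j \;=\; p_{j-1}^{-1}\sum_{\tau=j-1}^{n_j-1} q_\tau^{j-1}.
\]
Here I would invoke the ordering $f^{j-1}(y)<f^j(y)$ for every $y\in[0,1]$, already established in the preceding subsections, to conclude that $n_{j-1}<n_j$; together with the definition of $n_{j-1}$ as the last agent with $q_t^{j-1}>0$, this lets us extend the upper limit of the inner sum:
\[
\sum_{\tau=j-1}^{n_j-1} q_\tau^{j-1} \;=\; \sum_{\tau=j-1}^{n_{j-1}} q_\tau^{j-1} \;=\; \rho_{j-1},
\]
where the last equality is the induction hypothesis. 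Multiplying by $p_{j-1}^{-1}$ gives $\sum_{\tau=j}^{n_j}q_\tau^j = p_{j-1}^{-1}\rho_{j-1} = \rho_j$, closing the induction.

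To finish, I would argue that for each realization of the priors and of $y$, at most one agent ever receives an exploration driven recommendation for action $j$: once $X_j$ is sampled, the information state leaves $\Gamma_t^{j-}$ forever, so the events $\{\sigma_t=j,\ S_t^{\vec{z}}\in \Gamma_t^{j-}\}$ across different $t$ are pairwise disjoint. Hence
\[
\Pr[\exists t:\sigma_t=j,\ S_t^{\vec{z}}\in \Gamma_t^{j-}] \;=\; \sum_{t=1}^{T}\sum_{S_t^{\vec{z}}\in\Gamma_t^{j-}}\Pr_{\hat\pi}[S_t^{\vec{z}},\sigma_t=j] \;=\; \sum_{t=j}^{n_j} q_t^j \;=\; \rho_j,
\]
where the assumption $T\ge n_k\ge n_j$ ensures no truncation of the sum. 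The main subtlety, and the step I would flag as potentially delicate, is the strict inequality $n_{j-1}<n_j$: it is intuitive from the coordinated sampling ($f^{j-1}(y)<f^j(y)$) but deserves to be spelled out, since it is precisely what lets us close the induction by recognizing $\rho_{j-1}$ inside the recurrence.
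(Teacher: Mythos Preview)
Your proof is correct and follows essentially the same route as the paper's: disjointness of the events $\{\sigma_t=j,\ S_t^{\vec z}\in\Gamma_t^{j-}\}$ across $t$, then induction on $j$ using the $t=n_j$ case of Theorem~\ref{qBeforeLast} to telescope $\sum_{\tau=j}^{n_j}q_\tau^j$ down to $p_{j-1}^{-1}\rho_{j-1}=\rho_j$. The only remark is on your appeal to $f^{j-1}(y)<f^j(y)$ for $n_{j-1}<n_j$: in the paper's ordering this is Lemma~\ref{f2smallest}/Corollary~\ref{lastAgnetInd}, which appear \emph{after} Lemma~\ref{lemmaRho}, not before, but there is no circularity since their proofs rely only on Lemma~\ref{BigThm} and the definition of $\rho_j$; the paper's own proof uses the same fact tacitly, and you are right to flag it as the one step that deserves an explicit citation.
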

\subsection{Determining the explorers- correlating across agents and actions}\label{subsec:implementation}
We now explain how the algorithm chooses which agent will explore each action.
Recall that the planner knows the history $h_{t-1}$, and therefore knows the current state at time $t$, as defined in Table ~\ref{RecTable}. Table ~\ref{RecTable} has clear recommendation for any exploitation state (i.e., any state that is not marked as exploration). 

Before explaining how the policy chooses recommendation for the exploration states $q_t^j$ with a special correlated randomization technique, we want to point out two problems that occur by simply recommending agent $t$ to explore action $j$ with probability $q_t^j$ independently of the other agents and actions. First, is that agents might be sampled to explore two (or more) actions, which will delay the exploration of all but one of the actions, as the planner can only recommend one action per agent. Second, is that action $j$ might be explored before the planner knows the rewards of $2,\dots,j-1$, which we later show that is not optimal.

We now return to describe how the planner should select which action to recommend based on Table \ref{RecTable}. Knowing the current information state, the planner sets $\pi_t(h_{t-1}) = v_t$ to be the corresponding recommendation for this state in Table ~\ref{RecTable}. Together with the policy parameters and the functions $f^j$ that we later define in Definitions \ref{defInput} and \ref{defF}, respectively, she returns $\sigma_t \sim v_t$ as the recommended action.

\begin{definition}\label{defInput} a valid input for our algorithm is a triple $\langle y, t, Q\rangle$ such that:
\begin{enumerate}
\item $y\in[0,1]$ is a real number that is sampled from a uniform distribution in $[0,1]$.
\item $t$ indicates the agent number (the agent for which the algorithm is run).
\item $Q=\{q^j|j\in\{2,...,k\}\}$ is a set that contains exploration rates vectors for each action excluding action $1$, such that $q^j[t]:=q_t^j$ (i.e, the exploration rate for agent $t$ with $j$ as recommended action). 
\end{enumerate}
\end{definition}
We now define the functions $f^j$ that determines which agent will explore action $j$.
\begin{definition}\label{defF}
Let $f^j:[0,1]\rightarrow \{1,...,n_j\}$ be the function that maps a real number $y\in[0,1]$ to an agent $t$ such that 
\[
f^j(y)=\textnormal{argmax}_{t}(\sum_{\tau=1}^{t-1}q_{\tau}^j<y\rho_j)_{\textstyle .}
\]
Let $f^j_t:[0,1]\rightarrow \{1,j\}$ be the function for action $j$ and agent $t$ that maps a real number $y\in[0,1]$ to a recommendation for agent $t$, i.e., $\sigma_t$, and is defined as follows:
\[   
f_t^j(y) := 
     \begin{cases}
       j &\quad f^j(y)=t\\
       1 &\quad else\\
     \end{cases}_{\textstyle .}
\]
\end{definition}
%The proof of the following lemma appears in Appendix \ref{app:sec3}.

The following lemma shows that different actions are explored by different agents, and that better a priori actions are explored always earlier.

\begin{lemma}\label{f2smallest}
For every $y\in [0,1]$, and for every action $j$, $f^{j}(y)<f^{j+1}(y)$.
\end{lemma}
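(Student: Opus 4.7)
My approach is to recast the lemma as a statement about the CDFs of the exploration distributions. Define $F^j(s):=\sum_{\tau=1}^{s}q_\tau^j/\rho_j$; by Lemma~\ref{lemmaRho} this is a proper CDF with $F^j(n_j)=1$, and $f^j(y)$ is exactly its quantile function, i.e.\ the largest $t$ with $F^j(t-1)<y$ (equivalently, the smallest $t$ with $F^j(t)\geq y$). The lemma then reduces to the pointwise ``shift'' inequality
\[
F^{j+1}(s)\;\leq\;F^j(s-1)\qquad\text{for every }s\geq 1,
\]
which encodes that the cumulative exploration of action $j+1$ through agent $s$ is bounded by the cumulative exploration of action $j$ through agent $s-1$. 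Given this, taking $s:=f^j(y)$ immediately yields $F^{j+1}(s)\leq F^j(s-1)<y$, whence $f^{j+1}(y)>s=f^j(y)$.

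The shift inequality falls out of the explicit form of $B_s^{j+1}$. Writing $Q_s^i:=\sum_{\tau=i}^{s}q_\tau^i$ and using $\rho_{j+1}=p_j^{-1}\rho_j$, the definition of $B$ in Section~\ref{sec:maxBIC} rearranges to
\[
B_s^{j+1}\;=\;p_j^{-1}Q_{s-1}^{j}\;-\;Q_{s-1}^{j+1}.
\]
Since the maximum-BIC rate satisfies $q_s^{j+1}=\min(A_s^{j+1},B_s^{j+1})\leq B_s^{j+1}$, adding $Q_{s-1}^{j+1}$ to both sides gives $Q_s^{j+1}\leq p_j^{-1}Q_{s-1}^{j}$, and dividing by $\rho_{j+1}$ yields $F^{j+1}(s)\leq F^j(s-1)$ in the range $s\in\{j+1,\ldots,n_{j+1}\}$; for $s\leq j$ the left-hand side is zero and the inequality is trivial.

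The remaining range $s>n_{j+1}$ needs $F^j(s-1)=1$, which amounts to $n_{j+1}>n_j$. But this itself drops out of the shift inequality at $s=n_{j+1}$: we obtain $\rho_{j+1}=Q_{n_{j+1}}^{j+1}\leq p_j^{-1}Q_{n_{j+1}-1}^{j}$, forcing $Q_{n_{j+1}-1}^{j}\geq\rho_j$ and hence $n_{j+1}-1\geq n_j$. The main conceptual step, and really the only nontrivial part of the argument, is recognizing that the probability constraint $q_s^{j+1}\leq B_s^{j+1}$ already expresses a ``lag by one agent'' between the cumulative exploration masses of the two consecutive actions; once this observation is made, no induction is needed and the proof collapses to a single algebraic manipulation.
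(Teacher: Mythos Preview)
Your proof is correct. Both arguments reduce the lemma to the cumulative ``shift'' inequality $\sum_{\tau\leq s}q_\tau^{j+1}\leq p_j^{-1}\sum_{\tau\leq s-1}q_\tau^{j}$ and then read off the quantile comparison, but they reach that inequality by different routes. The paper invokes the \emph{pointwise} bound $q_{\tau+1}^{j+1}\leq p_j^{-1}q_\tau^j$ established (with some work, by induction) as the ``In addition'' clause of Lemma~\ref{BigThm}, and sums it over $\tau$. You instead observe that the second constraint in the optimization~(\ref{qdefpart1}) is literally $q_s^{j+1}\leq B_s^{j+1}=p_j^{-1}Q_{s-1}^j-Q_{s-1}^{j+1}$, so the cumulative inequality is the $B$-constraint itself, rearranged. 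This is a genuinely cleaner derivation: it uses only the trivial half $q=\min(A,B)\leq B$ of Lemmas~\ref{BigThm}--\ref{qAfternj} and bypasses the pointwise bound entirely, while making transparent that the ordering $f^j(y)<f^{j+1}(y)$ is nothing more than the ``exploration of $j{+}1$ cannot outpace the probability that $j$ was already explored and came up $-1$'' constraint. The paper's route, conversely, yields the stronger per-agent comparison as a by-product. One small remark: your treatment of the range $s>n_{j+1}$ is unnecessary for the main conclusion, since $f^j(y)\leq n_j$ and you have already derived $n_j<n_{j+1}$ from the shift inequality at $s=n_{j+1}$; so $s=f^j(y)$ always lands in the first two cases.
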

Since $f^j(y)=n_j$, Lemma \ref{f2smallest} implies the following corollaries.
\begin{corollary}\label{lastAgnetInd}
For every action $j$, it holds that $n_j+1\leq n_{j+1}$.
\end{corollary}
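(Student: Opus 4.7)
The corollary is essentially an immediate consequence of Lemma \ref{f2smallest} together with the fact that the function $f^{j}$ attains the value $n_{j}$. My plan is to make the attainment statement precise and then invoke Lemma \ref{f2smallest} at a single well-chosen value of $y$.

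First, I would show that $n_{j}$ is in the range of $f^{j}$. By Lemma \ref{lemmaRho}, the total exploration probability for action $j$ equals $\rho_{j}$, which is exactly $\sum_{t=1}^{n_{j}} q_{t}^{j}$ (the sum truncates at $n_{j}$ because $q_{t}^{j}=0$ for $t>n_{j}$ by definition of $n_{j}$, and $q_{n_{j}}^{j}>0$). Consequently, for any $y\in[0,1]$ satisfying
\[
\sum_{\tau=1}^{n_{j}-1} q_{\tau}^{j} \;<\; y\,\rho_{j} \;\leq\; \sum_{\tau=1}^{n_{j}} q_{\tau}^{j},
\]
the argmax in Definition \ref{defF} is $t=n_{j}$, so $f^{j}(y)=n_{j}$. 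Such a $y$ exists since $q_{n_{j}}^{j}>0$ makes the interval of admissible $y$ nonempty (take, e.g., any $y$ slightly below $1$).

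Second, fix such a $y$ and apply Lemma \ref{f2smallest}, which gives $f^{j}(y)<f^{j+1}(y)$. Since $f^{j+1}$ maps into $\{1,\ldots,n_{j+1}\}$, we obtain
\[
n_{j} \;=\; f^{j}(y) \;<\; f^{j+1}(y) \;\leq\; n_{j+1},
\]
hence $n_{j}+1\leq n_{j+1}$, as claimed.

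There is no real obstacle; the only subtle point is making sure $n_{j}$ is truly achieved by $f^{j}$, which follows from $q_{n_{j}}^{j}>0$ (by definition of $n_{j}$, and consistent with Lemma \ref{qjPositive}) together with the normalization $\sum_{t} q_{t}^{j}=\rho_{j}$ from Lemma \ref{lemmaRho}. Everything else is a one-line invocation of Lemma \ref{f2smallest}.
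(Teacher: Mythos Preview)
Your proposal is correct and follows exactly the paper's approach: the paper simply writes ``Since $f^j(y)=n_j$, Lemma \ref{f2smallest} implies the following corollaries,'' and you have made this one-line remark precise by exhibiting a $y$ (any $y$ close to $1$ works, and in fact $y=1$ itself does) at which $f^j(y)=n_j$, then invoking Lemma \ref{f2smallest} together with the fact that $f^{j+1}$ has codomain $\{1,\ldots,n_{j+1}\}$.
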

\begin{corollary}\label{infeasable} Action $j$ is explored before any action $i>j$, making every state $S_t^{\vec{z}}$ such that $S_t^{\vec{z}}[j]=*$ and $S_t^{\vec{z}}[i]\ne *$ (e.g., $S_t^{\langle0,*,-1\rangle}$) infeasible for every agent $t$. Namely,
\[\Gamma^{j-}_t=\{S_t^{\vec{z}}|S_t^{\vec{z}}[1]=0,\forall 1<i<j: S_t^{\vec{z}}[i]=- 1, \forall i\geq j:  S_t^{\vec{z}}[i]=*]\}_{\textstyle .}\]
Hence, $|\Gamma^{j-}_t|=1$.
\end{corollary}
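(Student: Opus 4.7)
The plan is to derive both parts of the corollary from the time-ordering $f^j(y)<f^i(y)$ for $j<i$, which follows by iterating Lemma~\ref{f2smallest}. Starting from that lemma for consecutive actions, a direct induction on $i-j$ gives $f^{j}(y)<f^{j+1}(y)<\cdots<f^{i}(y)$ for every $y\in[0,1]$ and every pair $j<i$. In particular, the (potential) explorer of action $j$ is strictly earlier than the explorer of any later action $i$, which is already the first claim of the corollary at the level of the auxiliary randomness.

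Next, I would argue that on any sample path of positive probability, $x_i$ can become known only after $x_j$ has become known, for every $2\le j<i$. Inspecting Table~\ref{RecTable} (and its $k>3$ extension, Table~\ref{tblExtj}), the first-time selection of an action $i\ge 2$ can occur in only two ways: an exploitation-driven recommendation issued at time $t=i$ in the state $\langle -1,\ldots,-1,*,\ldots,*\rangle$ (with $-1$'s in positions $1,\ldots,i-1$), or an exploration-driven recommendation at time $t=f^{i}(y)$ in the state $\langle 0,-1,\ldots,-1,*,\ldots,*\rangle$. Both of these states already have $x_2=\cdots=x_{i-1}=-1$ recorded; hence, by induction on $i$, every reachable information state satisfies $\vec{z}[i]\ne *\Rightarrow \vec{z}[j]\ne *$ for every $j<i$. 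The contrapositive is exactly the infeasibility of any $S_t^{\vec{z}}$ with $\vec{z}[j]=*$ and $\vec{z}[i]\ne *$.

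For $|\Gamma_t^{j-}|=1$, I would characterize the unique surviving coordinate pattern as follows. A state in $\Gamma_t^{j-}$ must have $\vec{z}[j]=*$ and, by definition of an exploration state, the agent's expected utility from $j$ must be strictly below that of action~$1$; this forces $\vec{z}[1]=0$, since $\vec{z}[1]=-1$ turns the recommendation into an exploitation-driven one (placing the state in $\Gamma_t^{j+}$) and $\vec{z}[1]=1$ is terminal with recommendation~$1$. By the infeasibility just established, $\vec{z}[i]=*$ for every $i>j$. Finally, for $2\le i<j$ the reward $x_i$ is already known (by the ordering), and it cannot equal $+1$ — otherwise the state would be terminal with recommendation $i$ rather than an exploration state for $j$ — so $\vec{z}[i]=-1$. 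These four constraints determine the vector uniquely, giving $\Gamma_t^{j-}=\{S_t^{\langle 0,-1,\ldots,-1,*,\ldots,*\rangle}\}$.

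The hard part is the case analysis in step two: verifying that no row of Table~\ref{RecTable}, nor any row of its $k$-action generalization Table~\ref{tblExtj}, admits a sample path on which some action $i$ is first selected from a state where an earlier $x_j$ is still $*$. Once this inspection is completed, the rest of the proof is purely combinatorial and needs no further computation.
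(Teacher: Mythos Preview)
Your proposal is correct and follows the same approach the paper takes: the paper presents Corollary~\ref{infeasable} as an immediate consequence of Lemma~\ref{f2smallest} without a separate proof, and your argument is precisely the natural unpacking of that implication --- iterate $f^{j}(y)<f^{j+1}(y)$ to get the ordering for all pairs, combine with an inspection of Tables~\ref{RecTable} and~\ref{tblExtj} to see that action $i$ is first recommended only from states where $x_1,\ldots,x_{i-1}$ are already known, and then read off the unique surviving pattern for $\Gamma^{j-}_t$. One small remark: the infeasibility itself can be obtained from the table inspection alone (in state $\langle 0,-1,\ldots,-1,*,\ldots,*\rangle$ with $j$ unknown the policy only outputs $1$ or $j$, never any $i>j$), so the ordering $f^{j}(y)<f^{i}(y)$ is not strictly needed for that part --- it is needed, however, for the first sentence of the corollary and for the ``$x_i$ is already known for $i<j$'' step in your singleton argument, so your use of it is appropriate.
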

We finish this section by showing that the policy $\hat{\pi}$ has a well-defined implementation in a sense that every agent gets exactly one action as a recommendation in Theorem \ref{welldefined}.
%whose proof is in Appendix \ref{app:sec3}.
\begin{theorem}\label{welldefined}
Recommendation policy 
$\hat{\pi}$ has a well-defined implementation recommendation policy, since for every $y\in[0,1]$, and for pair of actions action $i\ne j$, $f^i(y) \ne f^{j}(y)$, and there exists $t\in \{1,\ldots,n_j\}$ such that $f^j(y)=t$.
This implies that every agent receives a recommendation for exactly one action.
\end{theorem}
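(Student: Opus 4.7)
The plan is to verify the two assertions of the theorem separately: first the pairwise distinctness of the explorers assigned to different actions, and then the existence of an explorer in the valid index range $\{1,\ldots,n_j\}$ for each action $j$. The well-definedness of the implementation then follows quickly by combining the two.

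For the distinctness claim, $f^i(y)\neq f^j(y)$ for $i\neq j$, I would iterate Lemma~\ref{f2smallest}. Without loss of generality assume $i<j$; chaining the strict inequalities $f^i(y)<f^{i+1}(y)<\cdots<f^{j}(y)$ shows that all these values are distinct. Equivalently, for every fixed $y$ the map $j\mapsto f^j(y)$ is strictly increasing on $\{2,\ldots,k\}$, so no two actions ever compete for the same agent.

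For the range claim, that $f^j(y)\in\{1,\ldots,n_j\}$, I would appeal to Lemma~\ref{lemmaRho}, which states that the total exploration mass for action $j$ equals $\rho_j$. Combined with Theorem~\ref{qBeforeLast} (which gives $q_\tau^j=0$ for $\tau>n_j$), this yields $\sum_{\tau=1}^{n_j}q_\tau^j=\rho_j$. Let $S_t:=\sum_{\tau=1}^{t-1}q_\tau^j$; then $S_1,S_2,\ldots,S_{n_j+1}$ is non-decreasing with $S_1=0$ and $S_{n_j+1}=\rho_j$. For any $y\in(0,1]$ we have $y\rho_j\in(0,\rho_j]$, so the largest index $t$ with $S_t<y\rho_j$ exists and lies in $\{1,\ldots,n_j\}$. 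Hence the $\mathrm{argmax}$ defining $f^j(y)$ in Definition~\ref{defF} is well-defined and takes values in $\{1,\ldots,n_j\}$. The single point $y=0$ has measure zero and can be assigned arbitrarily without affecting any expectation.

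Combining the two claims gives the implementation's well-definedness: at each time $t$ the information state $S_t^{\vec z}$ is either an exploitation state, in which case Table~\ref{RecTable} prescribes a single deterministic action, or an exploration state $S_t^{\vec z}\in\Gamma_t^{j-}$, in which case $\hat\pi$ recommends $j$ exactly when $f_t^j(y)=j$, i.e., $f^j(y)=t$. By the distinctness claim at most one $j$ satisfies $f^j(y)=t$, so the recommendation is unambiguous, and by the range claim every action $j$ does eventually get a dedicated explorer $f^j(y)\in\{1,\ldots,n_j\}$. I expect no real obstacle here: the subtlety has been absorbed into Lemma~\ref{f2smallest} and Lemma~\ref{lemmaRho}, which is exactly why the preceding section built up those properties. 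The only minor care point is verifying the boundary identity $\sum_{\tau=1}^{n_j}q_\tau^j=\rho_j$, which follows from the definition of $n_j$ together with the terminating ``$t=n_j$'' case of Theorem~\ref{qBeforeLast}.
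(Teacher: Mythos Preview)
Your proposal is correct and mirrors the paper's own proof: the distinctness claim is obtained by iterating Lemma~\ref{f2smallest}, and the range claim is obtained from Lemma~\ref{lemmaRho} (plus Theorem~\ref{qBeforeLast}) via the partial-sum sandwich $0=\sum_{\tau<1}q_\tau^j\le\sum_{\tau<t}q_\tau^j\le\sum_{\tau\le n_j}q_\tau^j=\rho_j$. The only cosmetic difference is that the paper also records the monotonicity $q_t^j<q_{t+1}^j$ (via Lemmas~\ref{APositive} and~\ref{qBeforeLast}), whereas you handle the $y=0$ boundary explicitly; neither addition is essential to the argument.
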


\section{Optimality}\label{OptimalitySec}

\subsection{Finite exploration}
Clearly the flow between the information states is acyclic. 
From Table \ref{RecTable}, when the planner is in a non-terminal state, she is exploring, with some probability.
This implies that after $n_k$ (the last agent that might explore action $k$), there is no more exploration. Therefore,
$\hat{\pi}$ will eventually reach a terminal state and thus complete the exploration. From this we derive the following theorem:
\begin{theorem}\label{lemFinishExp}
By using the policy $\hat{\pi}$, as long as the planner has not observed an action $j$ with $x_j=1$, she will keep exploring until all actions' rewards are revealed. Therefore $\hat{\pi}$ always reaches a terminal state.
Formally, if  at time $t$ action $j$ is the a priory best action with an unknown reward,  ${\vec{z}}[j]=*$, and for every action $j'$ does not have optimal reward, $\vec{z}[j']\ne 1$ implies that $\Pr_{\hat{\pi}}[\sigma_t=j]>0$ and therefore $S_{n_k+1}^{\vec{z}}$ is terminal. 
\end{theorem}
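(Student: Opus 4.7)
The plan is to establish two things in sequence: first, that in any non-terminal state where no reward-$1$ realization has been observed, the a priori best unknown action $j$ is recommended with strictly positive probability; and second, that this forces the state at time $n_k+1$ to be terminal.

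For the first claim, fix a time $t$ and a state $\vec{z}$ satisfying the hypotheses of the theorem. By Corollary~\ref{infeasable}, any reachable such $\vec{z}$ must have $\vec{z}[1]\in\{0,-1\}$ and $\vec{z}[i]=-1$ for $1<i<j$, with $\vec{z}[i]=*$ for $i\geq j$. I split on $\vec{z}[1]$. If $\vec{z}[1]=-1$, then Table~\ref{RecTable} (or Table~\ref{tblExtj} for $k>3$) prescribes a deterministic recommendation for action $j$ from the state, so $\Pr_{\hat{\pi}}[\sigma_t=j,S_t^{\vec{z}}]=\Pr_{\hat{\pi}}[S_t^{\vec{z}}]>0$, since $S_t^{\vec{z}}$ is the unique outcome of observing the sequence $x_1=-1,x_2=-1,\ldots,x_{j-1}=-1$ and each factor $p_i^{-1}$ is positive. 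If $\vec{z}[1]=0$, then by Corollary~\ref{infeasable} the set $\Gamma_t^{j-}=\{S_t^{\vec{z}}\}$ is a singleton, hence $\Pr_{\hat{\pi}}[\sigma_t=j,S_t^{\vec{z}}]=q_t^j$, and it suffices to verify $q_t^j>0$ in the time window where $S_t^{\vec{z}}$ is reachable. Reachability of this state requires $t>n_{j-1}$ (since $\vec{z}[j-1]=-1$ demands that exploration of $j-1$ has already occurred) and $t\leq n_j$ (beyond $n_j$ the exploration budget for $j$ is exhausted). Inside the window $n_{j-1}<t\leq n_j$, Lemma~\ref{BigThm2} directly delivers $q_t^j>0$, closing the first claim.

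For the second claim, combine Lemma~\ref{lemmaRho} with Corollary~\ref{lastAgnetInd}. Lemma~\ref{lemmaRho} says the total mass of exploration-driven recommendations for action $j$ equals $\rho_j$, which matches exactly the marginal probability of ever entering the (unique) exploration state for $j$; hence conditional on reaching that state, action $j$ is explored with probability one, and by the definition of $n_j$ this must happen no later than time $n_j$. Chaining these observations along $n_2<n_3<\cdots<n_k$ (Corollary~\ref{lastAgnetInd}), by time $n_k+1$ either every action's reward has been revealed or the algorithm has already been absorbed into a terminal exploitation state at an earlier time via observing some $x_{j'}=1$. Either way the state at time $n_k+1$ has no remaining $*$ entry (apart from those locked behind a reward-$1$ action), so by Table~\ref{RecTable} (respectively Table~\ref{tblExtj}) it is terminal.

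The main obstacle is the tight alignment between reachability of the exploration state and positivity of $q_t^j$: the argument requires $q_t^j>0$ to hold throughout the window $(n_{j-1},n_j]$ without any gap, which in turn relies on Lemma~\ref{BigThm2} together with Corollary~\ref{lastAgnetInd} to ensure the exploration windows for successive actions abut one another on the timeline. Once this alignment is in hand, the conclusion follows from the acyclicity of the state graph visible in Table~\ref{RecTable} and the finiteness of $n_k$.
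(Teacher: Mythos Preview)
Your approach mirrors the paper's own argument, which is contained entirely in the short paragraph preceding the theorem: acyclicity of the state graph, positive exploration probability in every non-terminal state, and hence absorption into a terminal state by time $n_k+1$. Your write-up is considerably more detailed than what the paper offers, and the second claim (chaining Lemma~\ref{lemmaRho} with Corollary~\ref{lastAgnetInd}) is a clean way to make the absorption rigorous.

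There is, however, one concrete slip in your first claim. You assert that reachability of the exploration state for action $j$ (when $\vec{z}[1]=0$) requires $t>n_{j-1}$, reasoning that $\vec{z}[j-1]=-1$ forces the exploration of action $j-1$ to have already occurred. But exploration of action $j-1$ happens at time $f^{j-1}(y)$, which by Definition~\ref{defF} ranges over $\{j-1,\ldots,n_{j-1}\}$ as $y$ varies in $[0,1]$; hence the exploration state for $j$ can be entered as early as time $j$, not only after $n_{j-1}$. Consequently Lemma~\ref{BigThm2}, which covers only the window $(n_{j-1},n_j]$, does not by itself deliver $q_t^j>0$ on the full reachable range. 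The fix is immediate: by Theorem~\ref{qBeforeLast} one has $q_t^j=A_t^j$ for $j\leq t<n_j$, and $A_t^j>0$ since both numerator and denominator are positive (recall $\mu_j<0$ gives $1-2p_j^1>0$); together with $q_{n_j}^j>0$ from the very definition of $n_j$, this yields $q_t^j>0$ on the entire window $[j,n_j]$. With this correction your argument goes through and, in fact, gives more detail than the paper's sketch.
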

\subsection{Minimum exploration time}

Two BIC planners may differ only in their recommendations when they are in the exploration states. We would prefer the one that explores the actions "faster", as it would mean finding the optimal action sooner. For this we define a partial order between policies. We say that a policy is \textit{stochastic dominant} over another if it discovers the realizations of the rewards faster.
\begin{definition} A BIC policy algorithm $\pi_A$ is \textit{stochastic dominant} over another BIC policy algorithm $\pi_B$ if for every prior $D$ and for every agent $t$, $\pi_A$ has at least the same probability to observe action $j$'s reward as $\pi_B$, and for some action $j$ a strictly higher probability to know it's reward in time $t$. I.e., for any agent $t$ and action $j$ we have
$\Pr_{\pi_A}[S_t^{\vec{z}}\wedge \vec{z}[j]\ne *]\geq \Pr_{\pi_B}[S_t^{\vec{z}}\wedge \vec{z}[j]\ne *]$, and there exists some action $j$ and agent $t$ for which $\Pr_{\pi_A}[S_t^{\vec{z}}\wedge \vec{z}[j]\ne *]>\Pr_{\pi_B}[S_t^{\vec{z}}\wedge \vec{z}[j]\ne *]$.
\end{definition} 
The following lemma
states that the suggested policy, $\hat{\pi}$ is stochastic dominant over all other BIC policies.
\begin{lemma}\label{stoDom}
Let $\pi_A\ne \hat{\pi}$ be a BIC policy algorithm, with the same recommendations for the exploitation states as in Table ~\ref{RecTable}. Then $\hat{\pi}$ is stochastic dominant over $\pi_A$.
\end{lemma}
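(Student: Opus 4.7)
The plan is to show by induction on the agent index $t$ that for every action $j$, the cumulative exploration probability of $j$ through time $t$ under $\hat{\pi}$ dominates that under $\pi_A$, i.e., $\sum_{\tau\le t} q^{j}_\tau \ge \sum_{\tau \le t} q^{j,A}_\tau$, where $q^{j,A}_\tau$ denotes the exploration rate of $\pi_A$ in exploration state $\Gamma_\tau^{j-}$. Because both policies agree on every exploitation recommendation, the probability that action $j$'s reward has been revealed by time $t$ equals a fixed exploitation contribution (the deterministic $x_1=\cdots=x_{j-1}=-1$ branch, which reaches the state $\langle -1,\ldots,-1,*,\ldots\rangle$ at time $t=j$ with probability $\prod_{i<j}p_i^{-1}$ under both policies) plus the cumulative exploration mass $\sum_{\tau\le t}q^j_\tau$. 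Hence the cumulative-exploration inequality is equivalent to the stochastic dominance statement for action $j$, and strict inequality at some $(t,j)$ will give the strict part of the definition.

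First I would handle the base action $j=2$. Under any BIC policy $\pi_A$ that shares Table~\ref{RecTable}'s exploitation recommendations, the BIC constraint of equation~(\ref{qdefpart1}) evaluated at $\pi_A$'s state distribution at time $t$ has a gain term with only two pieces: the one-shot deterministic contribution from the state $\langle -1,*,*,\ldots\rangle$ at $t=2$, which equals $p_1^{-1}(\mu_2+1)$ under both policies, and the contribution of the terminal states where $x_2=1$ was already observed via prior exploration, whose probability is $p_2^{1}\sum_{\tau<t}q^{2,A}_\tau$ and contributes gain $1-0=1$ per unit (since $x_1=0$ on that branch). Substituting into the BIC inequality, the maximum admissible $q_t^{2,A}$ satisfies the same recursion as the one defining $q_t^2$ in Theorem~\ref{qBeforeLast}, only with $\sum_{\tau<t}q^{2,A}_\tau$ in place of $\sum_{\tau<t}q^{2}_\tau$. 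An elementary induction on $t$ (using that the recursion is monotone in the cumulative sum) then closes the loop and yields $\sum_{\tau\le t}q^{2,A}_\tau \le \sum_{\tau\le t}q^2_\tau$.

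The extension to $j\ge 3$ proceeds by simultaneous induction on $(j,t)$. The gain in $\pi_A$'s BIC constraint at $(t,j)$ has three pieces: the deterministic $\langle -1,\ldots,-1,*,\ldots\rangle$ branch (identical under the two policies), the branch where $x_j=1$ was observed via prior exploration of $j$ (with probability proportional to $\sum_{\tau<t}q^{j,A}_\tau$, bounded by the inductive hypothesis at $j$), and the branch where $x_1=0,\,x_i=-1$ for $1<i<j$ is witnessed because action $j-1$ has already been explored and yielded $-1$ (with probability proportional to $\sum_{\tau<t}q^{j-1,A}_\tau$, bounded by the inductive hypothesis at $j-1$). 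Each contribution under $\pi_A$ is therefore at most its $\hat{\pi}$-counterpart, so the BIC constraint forces $q_t^{j,A}$ to satisfy the same recursion as $q_t^j$ in Theorem~\ref{qBeforeLast} but with a no-larger right-hand side, giving the desired dominance. For the strict part, since $\pi_A\neq\hat{\pi}$ but agrees on exploitation states, $\pi_A$ must underplay exploration at some $(t^*,j^*)$; because $\hat{\pi}$ already saturates equation~(\ref{qdefpart1}) at every step, the deficit is strict at $(t^*,j^*)$ and then propagates through the recursion.

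The main obstacle is the bookkeeping of which terms of $\pi_A$'s BIC constraint are pinned down by the shared exploitation behavior and which are genuinely policy-dependent, together with certifying that the dependent terms are monotone in the cumulative exploration counts. A potentially worrying subtlety is that $\pi_A$ might use a different joint randomization across agents and actions than $\hat{\pi}$'s $y$-coupling; however, equation~(\ref{qdefpart1}) involves only the marginal state probabilities $\Pr_\pi[S_t^{\vec{z}}]$, so $\pi_A$'s specific coupling is irrelevant to the per-action cumulative bound, and the inductive argument goes through without needing to control cross-action correlations.
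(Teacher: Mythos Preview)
Your approach is essentially the same as the paper's: both reduce stochastic dominance to the inequality $\sum_{\tau\le t} q^{j,A}_\tau \le \sum_{\tau\le t} q^{j}_\tau$ on cumulative exploration mass, and both exploit that the BIC bound on the next exploration increment is monotone in the cumulative mass so far. The paper phrases this as a contradiction argument (a first time where $\pi_A$ overtakes $\hat\pi$ cannot exist because $q_t^j$ was chosen maximal), while you phrase it as a direct induction; these are equivalent.

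There is one mislabeling you should fix. Your ``piece 3'' --- the event $x_1=0,\ x_i=-1$ for $1<i<j$ --- is \emph{not} a gain term in the BIC inequality. It is the exploration state $\Gamma_t^{j-}$ itself, and it enters through the \emph{second} constraint of (\ref{qdefpart1}), namely $q_t^{j,A}\le \Pr_{\pi_A}[\Gamma_t^{j-}]$, not through the gain side. The gain side (your pieces 1 and 2) yields $q_t^{j,A}\le A_t^{j,A}$, which depends only on $\sum_{\tau<t}q^{j,A}_\tau$; this alone carries the induction for $t<n_j$, where $q_t^j=A_t^j$. The inductive hypothesis at $j-1$ is needed precisely at and after $t=n_j$, where $q_t^j=B_t^j$: summing the second constraint gives $\sum_\tau q_\tau^{j,A}\le p_{j-1}^{-1}\sum_\tau q_\tau^{j-1,A}\le p_{j-1}^{-1}\rho_{j-1}=\rho_j=\hat S_{n_j}$. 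So your instinct that the $j{-}1$ hypothesis is required is right, but the mechanism is the feasibility cap $B$, not a gain contribution; and the claim ``each contribution under $\pi_A$ is at most its $\hat\pi$-counterpart'' is not the right comparison for this term (the per-$t$ exploration-state probability under $\pi_A$ can go either way --- what is controlled is its \emph{total}). Once you reroute piece~3 through the second constraint, your induction closes exactly as the paper's does.
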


From Lemma \ref{stoDom} we easily obtain that $\hat{\pi}$ maximizes exploration rates of each action $j$ and agent $t$. Due to the use of $y\in[0,1]$ to decide which agent will explore each action, $\hat{\pi}$ manages to maximize exploration rates of all the actions independently. This give us an important result regarding $\hat{\pi}$:
\begin{theorem}\label{corMinTerminal}
The policy $\hat{\pi}$ minimizes the time until terminal state .
\end{theorem}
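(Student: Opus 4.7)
My plan is to derive this theorem as a corollary of Lemma~\ref{stoDom}, exploiting the fact that under $\hat\pi$ the correlated sampling forces actions to be explored in a strict prefix order. First, I would observe that terminality of a state $S_t^{\vec{z}}$ depends only on $\vec{z}$: by inspection of Table~\ref{RecTable} (and its $k$-action analogue in Table~\ref{tblExtj}), $S_t^{\vec{z}}$ is terminal iff either some coordinate $\vec{z}[j]=+1$ or $\vec{z}$ contains no $*$ entries. Crucially, terminal-ness is \emph{monotone} in the amount of revealed information; replacing some $*$ entries by their realized values can only turn a non-terminal state into a terminal one, and never the reverse.

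Next, let $T_\pi$ denote the first time $\pi$ enters a terminal state. I would establish $\Pr_{\hat\pi}[T\le t] \ge \Pr_{\pi_A}[T\le t]$ for every $t$ and every competing BIC policy $\pi_A$; this, together with Theorem~\ref{lemFinishExp} (which guarantees $T_{\hat\pi}$ is a.s.\ finite), shows that $\hat\pi$ stochastically minimizes the exploration time, and in particular minimizes its expectation. Policies $\pi_A$ that differ from $\hat\pi$ already on exploitation states are handled separately by the usual exchange argument: the exploitation entries of Table~\ref{RecTable} are forced whenever they are feasible, so deviating there cannot help.

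The core step is a coupling on the shared realization vector $(x_1,\ldots,x_k)\sim D$. By Corollary~\ref{infeasable}, under $\hat\pi$ the set of actions whose reward is known at time $t$ is always a prefix $\{1,\ldots,j^*(t)\}$ of the priority order. Hence the event ``at least the first $j$ actions are known by time $t$'' under $\hat\pi$ coincides with the event $\vec{z}[j]\neq *$, whose probability is at least $\Pr_{\pi_A}[\vec{z}[j]\neq *]$ by Lemma~\ref{stoDom}. Because terminality is monotone in the set of known actions and in their values (Step 1), and the realizations are identical under the coupling, whenever $\pi_A$ has reached a terminal state by time $t$, $\hat\pi$ has also reached one, giving the desired inequality on $T$.

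The main obstacle I expect is precisely the gap between the per-action form of Lemma~\ref{stoDom} and the joint event that determines terminality. The correlated randomization of $\hat\pi$ through the single shared parameter $y\in[0,1]$ is what bridges this gap: it collapses the multi-dimensional exploration process to a scalar ``prefix length'' $j^*(t)$, so marginal dominance on $\{\vec{z}[j]\neq *\}$ immediately lifts to dominance on the prefix event, which suffices to compare the joint information sets needed for terminality.
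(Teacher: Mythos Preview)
Your plan follows the paper's own route exactly: the paper derives Theorem~\ref{corMinTerminal} as an immediate corollary of Lemma~\ref{stoDom}, remarking only that ``due to the use of $y\in[0,1]$ \ldots\ $\hat\pi$ manages to maximize exploration rates of all the actions independently.'' Your outline is strictly more detailed than that one-line justification---you make explicit the monotonicity of terminality in revealed information and the prefix structure coming from Corollary~\ref{infeasable}, which are precisely the two ingredients the paper is relying on implicitly.

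One caution on execution. The sentence ``whenever $\pi_A$ has reached a terminal state by time $t$, $\hat\pi$ has also reached one'' is a \emph{sample-path} claim, but Lemma~\ref{stoDom} gives only the marginal inequalities $\Pr_{\hat\pi}[\vec z[j]\ne *]\ge\Pr_{\pi_A}[\vec z[j]\ne *]$, and you never say how $\pi_A$'s internal randomness is coupled to $y$. Because $\pi_A$ need not explore in prefix order, sharing the realization vector $(x_1,\dots,x_k)$ alone does not force the pointwise inclusion you assert; a competing policy could, on a particular sample path, stumble onto the action with $x_j=1$ before $\hat\pi$ does. The paper leaves this same step implicit, so you are not omitting anything relative to its proof. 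If you want a fully rigorous version, replace the coupling sentence by a direct distributional comparison: for instance, write $\Pr_{\hat\pi}[\text{not terminal at }t,\,x_1=0]=\sum_{j=2}^{k}B_t^j = p_1^0-\sum_{j=2}^{k-1}p_j^1\sum_{\tau<t}q_\tau^{j}-\sum_{\tau<t}q_\tau^{k}$ and show the analogous expression for $\pi_A$ is termwise no smaller using Lemma~\ref{stoDom}.
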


\subsection{Maximum expected social welfare}
In this section we present the main result: The best BIC policy is the one that minimizes exploration time for every action simultaneously.
\begin{theorem}\label{zeroReg}
Let $\pi_{opt}$ be a BIC policy algorithm that maximizes the expected Social welfare. Then for a large number of agents (specifically, $T\geq \lceil{\frac{1}{p_{k}^1}+n_k-1}\rceil$), it holds that
\[
SW_{T}(\pi_{opt})=SW_{T}(\hat{\pi})_{\textstyle .}
\]
\end{theorem}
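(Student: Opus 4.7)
The plan is to combine an exchange argument on $\pi_{opt}$ with a coupling of realizations that collapses the welfare difference into a clean comparison of terminal entry times, then invoke the minimum-exploration-time property of $\hat\pi$ (Theorem~\ref{corMinTerminal}). First, assume without loss of generality that $\pi_{opt}$ agrees with $\hat\pi$ in every exploitation state: Table~\ref{RecTable} prescribes the pointwise best BIC recommendation in each such state, so substituting the table's choice for any deviation weakly increases welfare and preserves BIC. A further swap argument, built on the same ingredients that underlie Lemma~\ref{f2smallest}, forces $\pi_{opt}$ to explore actions in the a priori order $2,3,\ldots,k$: any policy that explores out of order can be replaced by one that respects the order with weakly higher welfare, since the BIC slack is monotone in the exploration order. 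After these reductions, $\pi_{opt}$ and $\hat\pi$ differ only in how aggressively they randomize within exploration states, and since $\hat\pi$ saturates the BIC bounds of Lemma~\ref{BigThm}, the rates satisfy $\tilde q^j_t \leq q^j_t$ pointwise.

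Next, couple the realizations $(x_1,\ldots,x_k)$ across the two policies. Under the coupling, each policy explores exactly the same sequence of actions---ceasing at the first $x_j=1$ or after exhausting $\{2,\ldots,k\}$---and therefore reaches the same terminal information state $\vec z^\ast$ with the same per-round terminal reward $V_{\mathrm{term}}(\vec z^\ast)$. The pre-terminal reward along the coupled sample path equals $\sum_{j\text{ explored}} x_j$ (wait rounds where Table~\ref{RecTable} recommends action $1$ contribute $x_1=0$ on the hard branch), which depends only on the realization, not on the timing. Thus, provided both policies terminate by time $T$,
\begin{align*}
SW_T(\hat\pi) - SW_T(\pi_{opt}) \;=\; V_{\mathrm{term}}(\vec z^\ast)\,\bigl(\tau(\pi_{opt}) - \tau(\hat\pi)\bigr),
\end{align*}
where $\tau(\pi)$ denotes the first-entry time into the terminal state. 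By Theorem~\ref{corMinTerminal}, $\tau(\hat\pi) \leq \tau(\pi_{opt})$ almost surely. Moreover $V_{\mathrm{term}} = -1$ occurs only when $x_1 = x_2 = \cdots = x_k = -1$, a branch in which no exploration state ever arises and both policies evolve deterministically, so $\tau(\hat\pi) = \tau(\pi_{opt})$ and the right-hand side above is a.s.\ nonnegative, yielding $SW_T(\hat\pi) \geq SW_T(\pi_{opt})$.

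Combined with the optimality of $\pi_{opt}$, this forces $SW_T(\hat\pi) = SW_T(\pi_{opt})$. The threshold $T \geq \lceil 1/p_k^1 + n_k - 1\rceil$ is chosen so that $\pi_{opt}$ also terminates by time $T$: the $n_k$ term bounds $\tau(\hat\pi)$ deterministically, while the $1/p_k^1$ slack caps the maximum delay $\pi_{opt}$ could introduce on the slowest action $k$ before its welfare loss from truncated terminal rounds would already exceed any possible pre-terminal gain, forcing any welfare-optimal $\pi_{opt}$ to finish on time. The main obstacle will be justifying the order-of-exploration reduction for $\pi_{opt}$---the paper only proves the $2,3,\ldots,k$ ordering for $\hat\pi$---and cleanly handling the boundary case in which $\pi_{opt}$ defers exploration into the tail; I expect Lemma~\ref{lemmaRho}'s invariance of the total exploration probability $\rho_j$ to anchor both steps, since it shows that every deferred exploration is still performed with the same aggregate probability, leaving only a $T$-independent bookkeeping term in the final welfare comparison.
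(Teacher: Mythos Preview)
Your approach is genuinely different from the paper's, and the global coupling idea is appealing, but it has real gaps that the paper's proof sidesteps entirely.

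\textbf{The paper's argument.} The paper does \emph{not} couple sample paths or invoke Theorem~\ref{corMinTerminal}. It runs a one-step local improvement: assuming $SW_T(\pi_{opt})>SW_T(\hat\pi)$, it finds the first pair $(t_0,j)$ at which $\pi_{opt}$'s exploration rate $\psi_{t_0}^j$ falls below $q_{t_0}^j$, defines $\pi$ to be $\pi_{opt}$ with that single rate bumped up to $q_{t_0}^j$, and computes directly that
\[
SW_T(\pi)-SW_T(\pi_{opt})\;\ge\;\epsilon\bigl((t_1-t_0)p_j^1+2p_j^1-1\bigr)\;=\;\epsilon\,p_j^1\;>\;0,
\]
where $\epsilon=q_{t_0}^j-\psi_{t_0}^j$ and $t_1=\lceil t_0+1/p_j^1-1\rceil$. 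The threshold $T\ge\lceil 1/p_k^1+n_k-1\rceil$ is used only to guarantee $t_1\le T$, so that enough future rounds exist for the extra $\epsilon$ of exploration to pay off. No statement about $\pi_{opt}$ terminating, or exploring in order, is ever needed.

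\textbf{Where your argument breaks.} First, the step ``by Theorem~\ref{corMinTerminal}, $\tau(\hat\pi)\le\tau(\pi_{opt})$ almost surely'' is not what that theorem gives you. Lemma~\ref{stoDom} and Theorem~\ref{corMinTerminal} yield stochastic dominance of the marginal distribution of the terminal time, not a pathwise inequality under the coupling you have already fixed (namely, the shared draw of $(x_1,\dots,x_k)$). To get an a.s.\ comparison you would have to build a monotone coupling of the two policies' \emph{internal} randomness, conditional on each realization, and then check that your per-path welfare identity survives that construction. You never do this. Second, the clause ``provided both policies terminate by time $T$'' is doing a lot of unearned work: nothing in the hypotheses forces a generic BIC optimum $\pi_{opt}$ to terminate by $T$; the threshold in the theorem statement is calibrated for the paper's local-improvement step, not for a termination guarantee. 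Your last paragraph tries to recover this by saying any delay ``would already exceed any possible pre-terminal gain,'' but that is exactly the local-improvement calculation the paper carries out explicitly---so at that point you are back to the paper's method. Third, as you yourself flag, the in-order exploration reduction for $\pi_{opt}$ is asserted, not proved; the paper only establishes ordered exploration for $\hat\pi$ (Lemma~\ref{f2smallest}), and ``BIC slack is monotone in the exploration order'' is not something established anywhere in the paper.

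In short, the coupling route could perhaps be made to work, but filling the gaps essentially reconstructs the paper's local-improvement bound anyway, and the paper's direct argument is both shorter and avoids the order-of-exploration and termination issues altogether.
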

From the above theorem we deduce the following corollary.
\begin{corollary}\label{maxSW}
Recommendation policy
$\hat{\pi}$ maximizes social welfare for every $T\geq \lceil{\frac{1}{p_{k}^1}+n_k-1}\rceil$.
\end{corollary}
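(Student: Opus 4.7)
The plan is to observe that this corollary is an essentially immediate consequence of Theorem \ref{zeroReg}, once we note that $\hat{\pi}$ itself is a legitimate candidate in the optimization problem that $\pi_{opt}$ solves. Concretely, the argument proceeds in three short steps: (i) verify that $\hat{\pi}$ is BIC, (ii) invoke Theorem \ref{zeroReg} to match its welfare to that of an optimal BIC policy, and (iii) conclude optimality.

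For step (i), $\hat{\pi}$ is defined (at the start of Section 3.3) as the recommendation policy that follows Table \ref{RecTable} (and Table \ref{tblExtj} for $k>3$) and, for every agent $t$ and every action $j$, uses the maximum BIC exploration rate $q_t^j$. By construction, $q_t^j$ is defined via the optimization (\ref{qdefpart1}), whose first constraint is precisely the BIC constraint of the recommendation $\sigma_t=j$ against the benchmark action $1$. As argued in Section 3.3 (using $f^a(y)<f^b(y)$ for $a<b$ from Lemma \ref{f2smallest}), the BIC constraint against action $1$ is the binding one: for any other action $i\ne 1$, the recommendation $\sigma_t=j$ is automatically preferred because the relevant reward of $i$ is either already known to be $-1$ (if $i<j$) or satisfies $\mu_i<\mu_j$ (if $i>j$). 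Hence every recommendation produced by $\hat{\pi}$ is BIC, i.e.\ $\hat{\pi}$ is a BIC policy.

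For steps (ii) and (iii), recall from the model that we restrict attention to BIC policies (agents only follow BIC recommendations, and hence the planner's maximization is over BIC policies). Let $\pi_{opt}$ denote a BIC policy maximizing $SW_T$. For any $T \geq \lceil \frac{1}{p_k^1}+n_k-1\rceil$, Theorem \ref{zeroReg} gives $SW_T(\pi_{opt})=SW_T(\hat{\pi})$. Since $\hat{\pi}$ is itself BIC (step (i)), it is feasible for the optimization defining $\pi_{opt}$, and it attains the optimal value. Therefore $\hat{\pi}$ is itself a maximizer of social welfare on the same range of $T$, which is exactly the statement of Corollary \ref{maxSW}.

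There is no genuine obstacle here: all the heavy lifting has already been done in Theorem \ref{zeroReg} (matching the welfare to the optimum) and in the construction and analysis of $\hat{\pi}$ throughout Section 3 (BIC-ness, well-definedness via Theorem \ref{welldefined}, and termination via Theorem \ref{lemFinishExp}). The only step that deserves explicit mention is confirming that $\hat{\pi}$ is BIC, which is why one may invoke Theorem \ref{zeroReg} in the direction ``$\hat{\pi}$ achieves the optimum, therefore $\hat{\pi}$ is optimal,'' yielding the corollary in a few lines.
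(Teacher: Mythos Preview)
Your proposal is correct and follows essentially the same approach as the paper, which simply states that the corollary is deduced from Theorem~\ref{zeroReg} without further argument. You spell out the one implicit step---that $\hat{\pi}$ is itself a BIC policy and hence a feasible candidate for the optimization defining $\pi_{opt}$---but this is already established earlier in Section~3 and is the obvious way to pass from $SW_T(\pi_{opt})=SW_T(\hat{\pi})$ to the conclusion that $\hat{\pi}$ is optimal.
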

\subsection{Limited number of agents}
The planner's goal is to maximize social welfare. If there is a limited number of agents, she cannot rely on the existence of the agent that balances the loss of social welfare (i.e., agent $t_1$ in the proof for Theorem \ref{zeroReg}).
Our algorithm must be adjusted for that. A natural solution is to limit the recommendation for exploration, so that the planner must give exploration driven recommendation for action $j$ in round $t$ if the gain for the following agents, $p_j^1(T-t-1)$ is high enough to cover for the expected loss of the $t$-th agent, i.e., $\mu_j$. We add the following requirement that must be fulfilled if the algorithm gives an exploration driven recommendations to agent $t$. Namely,
\[
(T-t)\cdot p_{j}^{1}+\mu_j\geq 0 _{\textstyle .}
\]
Alternatively, $(T-t+2)\cdot p_{j}^{1}\ge 1$.

Theorem \ref{zeroReg}'s proof still applies for any pair $\langle j,t\rangle$ that meets the additional requirement. For pairs $\langle j,t\rangle$ that do not meet the requirement, action $j$ is no longer recommended for exploration in round $t$ or afterwards. An exploration driven recommendation for these agents harms the social welfare.
\section{Continuous distribution for the a priori best action's reward}\label{contSec}
In this section we explore the same model with one significant difference. The prior distribution $D_1$ is now a continuous distribution that has full support of $[-1,1]$. (Note that we do not allow mass points.) Consider that the number of agents, $T$, is large enough so that a social planner must complete the exploration of all the actions.

The different type of recommendation policy algorithm we introduce for this setting is a generalization of the partition policy, originally defined in Kremer et al. \cite{Kremer-JPE14}.

\subsection{Partition policy as a recommendation algorithm}
The following two definitions are used to define a partition policy (in Definition \ref{partitionPolicyDef}).
\begin{definition}\label{collection}
$\Theta^j$ is a collection of $T$ \textbf{disjoint} sets, 
$\Theta^j:=\{\theta_t^j\}_{t=2}^{T}$, where $\theta_t^j\subseteq[-1,1]$.
\end{definition}
\begin{definition}\label{noIntersection}
A valid input for any partition policy algorithm is a series $\Theta=(\Theta^j)_{j=2}^k$ s.t. for any pair of actions $i\ne j\in A$, it holds that $\theta_t^j \cap\theta_t^i=\emptyset$ for every agent $t$.
\end{definition}
\begin{definition}\label{partitionPolicyDef}
Given a valid input, $(\Theta^j)_{j=2}^k$, and a realization $X_1=x_1$, a \textit{partition policy} is a recommendation policy that makes the following recommendations. For agent $t$ we have,

\begin{enumerate}
\item\label{sigma11} For $t=1$ we have $\sigma_1=1$.
\item\label{optimalRewFound} If there is an explored action $j$ with a reward of $1$ (i.e., it is optimal), then $\sigma_t=j$.
\item\label{diffBullet} Else, if $x_1\in \theta_t^j$ then $\sigma_t=j$.
(In this case agent $t$ is the first agent for whom $\sigma_t=j$.).
\item\label{mustInclude} $[-1,\mu_j]\subseteq\theta_j^j$.
\item\label{defultExploit} Else, $\sigma_{t}=1$.
\end{enumerate}
\end{definition}
Let us inspect each clause in the above definition with regards to BIC and social welfare.
\begin{enumerate}
\item Since action $1$ is the a priori better action, any BIC policy must recommend to agent $t=1$ action $\sigma_1=1$ (clause (\ref{sigma11})).
\item After finding an explored action with value $1$, to maximize the social welfare we must recommend it (clause (\ref{optimalRewFound})).
\item Clause (\ref{diffBullet}) is where action $j$ is recommended for the first time. 
Once the planner will observe the value of the explored action, if it is $1$ it will be recommended to all future agents (as indicated in clause \ref{optimalRewFound}).
\item Clause (\ref{mustInclude}) deals with the case that the a priori best action has low value. Together with clause \ref{optimalRewFound}, it guarantees that agent $j$ performs action $j$ if none of the explored actions has a higher reward.
\item Clause (\ref{defultExploit}) gives an exploitation recommendation. Note that any explored action $j$, in this case, has $x_1\geq x_j=-1$.
\end{enumerate}
Notice that a valid input, $(\Theta^j)_{j=2}^k$, insures that every agent $t\geq 2$ receives a recommendation for exactly one action.
We can now derive the following lemma:
\begin{lemma}\label{BICisPartition}
The optimal BIC recommendation policy is a partition policy.
\end{lemma}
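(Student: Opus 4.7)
The plan is to verify the five clauses for some optimal BIC policy $\pi^*$, converting $\pi^*$ into a partition policy with at least the same social welfare where needed. Three clauses are immediate. Clause (\ref{sigma11}) holds because at $t=1$ the agent shares the planner's information (the prior), so BIC together with $\mu_1>\mu_i$ forces $\sigma_1=1$. Clause (\ref{optimalRewFound}) holds because once $x_j=1$ is observed, recommending $j$ thereafter is BIC-trivial and strictly maximizes every subsequent reward. Clause (\ref{defultExploit}) is the default case: when none of the earlier clauses apply, every explored action satisfies $x_i=-1\le x_1$, so recommending action $1$ is both BIC and optimal among observed options.

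The heart is clauses (\ref{diffBullet}) and (\ref{mustInclude}). I would first condition on the ``all previously explored actions returned $-1$'' trajectory, which by clause (\ref{optimalRewFound}) is the only branch on which further exploration can happen. On that branch the only non-trivial history variable is $x_1$: the rewards of actions $2,\ldots,k$ are binary, and the set of previously explored actions is determined by $x_1$, the time $t$, and any past randomization of $\pi^*$. Hence $\pi^*$'s decision of whether to first explore action $j$ at time $t$ can be viewed as a (possibly randomized) function of $x_1$ and $t$ alone. Because $D_1$ is atomless with full support on $[-1,1]$, one can match the desired marginal law of first-exploration times by choosing measurable sets $\theta_t^j\subseteq[-1,1]$ with $\Pr[X_1\in\theta_t^j]$ equal to the probability that $\pi^*$ first explores $j$ at $t$, yielding the partition structure of clause (\ref{diffBullet}).

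Clause (\ref{mustInclude}) then follows by an optimality argument. If $x_1\le\mu_j$ and actions $2,\ldots,j-1$ have all returned $-1$, the BIC constraint for recommending $j$ at time $t=j$ is strictly slack ($\mu_j\ge x_1$), so there is no BIC obstruction to exploring $j$ immediately. Delaying this exploration strictly reduces expected social welfare, because the planner would otherwise waste agent $j$ on the deterministic action $1$ with reward $x_1\le\mu_j$ instead of learning $x_j$ for every future agent. Hence optimality of $\pi^*$ forces action $j$ to be explored at $t=j$ whenever $x_1\in[-1,\mu_j]$, giving $[-1,\mu_j]\subseteq\theta_j^j$.

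The main obstacle is the derandomization step: matching marginal exploration probabilities alone is not enough to preserve BIC, because the BIC condition at $\sigma_t=j$ depends on the posterior $\E[X_1\mid\sigma_t=j]$, which in a partition policy equals the $D_1$-weighted average of $x_1$ over $\theta_t^j$ (blended with the exploitation case where $x_j=1$). Two partitions with identical marginals can induce very different posteriors, and only those with $\E[X_1\mid\sigma_t=j]$ sufficiently small keep the BIC constraint satisfied. I would address this by choosing each slice $\theta_t^j$ to consist of the smallest available $x_1$ values within the appropriate exploration regime---this minimizes $\E[X_1\mid\sigma_t=j]$ subject to the marginal constraint, weakly tightening the agents' preference for the unexplored action $j$ and thereby preserving BIC while reproducing the exploration probabilities of $\pi^*$.
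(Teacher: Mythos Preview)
The paper does not supply a formal proof of this lemma; the five numbered remarks following Definition~\ref{partitionPolicyDef} constitute its entire justification, so your outline is already more detailed than what the paper offers. Your treatment of clauses~(\ref{sigma11}), (\ref{optimalRewFound}), (\ref{defultExploit}) and (\ref{mustInclude}) is essentially the paper's informal reasoning made explicit, and is fine.

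The derandomization argument for clause~(\ref{diffBullet}) has a genuine gap. You propose to match the marginal first-exploration probabilities of $\pi^*$ and then realize them with sets $\theta_t^j$ built from the smallest available $x_1$-values. Two problems arise. First, social welfare is \emph{not} determined by these marginals alone: whenever $\sigma_t=1$ the agent collects $x_1$, so reshuffling which $x_1$-values are explored at which time changes the welfare even when all marginals are held fixed. You therefore need a separate argument that ``smallest-$x_1$-first'' weakly improves welfare over $\pi^*$; this is true (the marginal benefit of exploring is $p_j^1(1-x_1)$ per remaining agent, larger for small $x_1$, so assigning small $x_1$ to early exploration is a rearrangement improvement), but it is not implied by matching marginals and you do not say it. Second, your BIC argument only shows that concentrating $\theta_t^j$ on small $x_1$ minimizes $\E[X_1\mid\sigma_t=j]$ \emph{for that single $t$ in isolation}. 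Under ``smallest-first'', the exploration slice at a later time $t$ sits on strictly larger $x_1$-values than $\pi^*$ might have used there, so the exploration-loss term $\int_{\theta_t^j}(x_1-\mu_j)\,dD_1$ can be worse than $\pi^*$'s. The constraint still holds, but only because the exploitation-gain term $p_j^1\int_{\cup_{s<t}\theta_s^j}(1-x_1)\,dD_1$ is simultaneously maximized by the same allocation; one must argue about the sum, not term-by-term. Finally, you do not say how the disjointness requirement across actions at a fixed $t$ (Definition~\ref{noIntersection}) is met---``smallest available'' for action $j$ and ``smallest available'' for action $j'$ compete for the same $x_1$-values, and you need to specify and justify a priority rule.
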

\subsection{The suggested BIC Partition Policy}

Recall that agent $t$ finds that recommendation $\sigma_t$ to be BIC if for any action $i\in A$ we have
\[
\E[u_t(j)-u_t(i)|\sigma_t=j]\geq 0.
\]
Note that this holds if and only if for any action $i\in A$ 
\[
\Pr[\sigma_t=j]\cdot \E[u_t(j)-u_t(i)|\sigma_t=j]\geq 0.
\]
Namely, 
\[
\int_{\sigma_t=j} [X_j-X_i]dD\geq 0.
\]

We now describe how to extract parameters for the suggested policy iteratively, given a prior $D$ for the problem. Then we continue by showing that these collections can be used as a valid input of a partition policy. Finally, we show that using these parameters produces a BIC recommendation policy.

\begin{definition}\label{defIntervals}
The sets $\hat{\Theta}^j=\{\hat{\theta}_t^j\}_{t=2}^{T+1}$ are calculated as follows.\\
Let $\hat{\theta}_t^j$ be the ordered interval $(i_{t}^{j},i_{t+1}^j]$, where: 
\begin{itemize}
\item For $t<j$, 
$\hat{\theta}_t^j=\emptyset$ (this can be done by setting 
$i_t^j:=-1$ for $t\leq j$).
\item For $t=j$, recall that $i_j^j=-1$, and let $\omega_{j+1}^j$ be the solution to:
\begin{equation}\label{defijj}
\prod_{n=2}^{j-1}(p_n^{-1})\int_{\mu_j\geq X_1} [\mu_j-X_1]dD_1=
\int_{\mu_{j}\leq X_1\leq \omega_{j+1}^j} [X_1-\mu_j]dD_1.
\end{equation}
\item For every $t>j$, let $\omega_{t+1}^j$ be the solution to:
\begin{equation}\label{defitj}
p_j^1\int_{-1\leq X_1\leq i_{t}^j} [1-X_1]dD_1=
\int_{i_{t}^j< X_1\leq \omega_{t+1}^j} [X_1-\mu_j]dD_1.
\end{equation}
\item $i_{t+1}^j=min(1,\omega_{t+1}^j)$ for every $t\geq j$. (We have $\hat{\theta}_t^j =(i_{t}^{j},i_{t+1}^j]$.).
\end{itemize}
\end{definition}

Notice that in each step, distribution $D$, the parameters $p_j^1, p_n^{-1}, \mu_j$ and $i_t^j$ are known, therefore one can compute the value of $i_{t+1}^j$.\\
In the next lemma, we show that $i_{t}^{j}\leq i_{t+1}^j$ for every action $j$ and agent $t$. This will allow us to deduce in Corollary \ref{wellDefinedPart} that $\hat{\Theta}_j$ is a collection of disjoint sets, which is required from a valid input for partition policy. 
\begin{lemma}\label{monoincI}
For every action $j\ne 1$ and agent $t\in T$, it holds that $i_{t}^{j}\leq i_{t+1}^j$.
\end{lemma}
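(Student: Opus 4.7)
The plan is to proceed by induction on $t$, proving $i_t^j \leq i_{t+1}^j$ together with the auxiliary invariant $\mu_j < i_{t+1}^j$ for every $t \geq j$ (with the degenerate boundary case $i_t^j = 1$ treated separately). The auxiliary invariant is what will make the inductive step transparent, since it pins down the sign of the integrand $X_1 - \mu_j$ appearing in equation~(\ref{defitj}).

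For $t < j$, Definition~\ref{defIntervals} sets $i_t^j = -1 = i_{t+1}^j$, so the inequality is trivial. For the base case $t = j$, I would apply equation~(\ref{defijj}): its left-hand side $\prod_{n=2}^{j-1}(p_n^{-1})\int_{\mu_j \geq X_1}[\mu_j - X_1]\,dD_1$ is strictly positive because $D_1$ has full support on $[-1,1]$ and $\mu_j > -1$, while each $p_n^{-1} > 0$. Consequently the right-hand side must be strictly positive, which forces $\omega_{j+1}^j > \mu_j$. Since $\mu_j < \mu_1 \leq 1$, applying $\min(1,\cdot)$ preserves the strict inequality and yields $i_{j+1}^j > \mu_j > -1 = i_j^j$, establishing both the desired inequality and the auxiliary invariant.

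For the inductive step at $t \geq j+1$ with $i_t^j < 1$, the inductive hypothesis gives $\mu_j < i_t^j$, so $X_1 - \mu_j > 0$ on the interval $(i_t^j, 1]$. Viewing the right-hand side of~(\ref{defitj}) as a function of its upper limit, by full support of $D_1$ it is continuous and strictly increasing on $[i_t^j, 1]$, starting from $0$ at $i_t^j$. The left-hand side $p_j^1 \int_{-1}^{i_t^j}[1 - X_1]\,dD_1$ is strictly positive because $p_j^1 > 0$, $i_t^j > -1$, and $D_1$ has full support. Hence any solution $\omega_{t+1}^j$ must exceed $i_t^j$; applying $\min(1,\cdot)$ and using $i_t^j < 1$ preserves this strict inequality, giving $i_{t+1}^j > i_t^j > \mu_j$, which closes the induction. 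In the boundary case $i_t^j = 1$, equation~(\ref{defitj}) admits no solution in $[i_t^j,1]$ because its right-hand side vanishes there; I would interpret this consistently with the $\min(1,\cdot)$ truncation as $i_{t+1}^j = 1 = i_t^j$, so the inequality holds with equality.

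The main obstacle I anticipate is not the algebra but the careful bookkeeping around the $\min(1,\omega_{t+1}^j)$ truncation and the propagation of the auxiliary invariant $\mu_j < i_t^j$ from the base case onward. Once that invariant is in place, the sign of the integrand in~(\ref{defitj}) is unambiguous and monotonicity plus continuity of the defining integrals finishes the argument.
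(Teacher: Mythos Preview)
Your proposal is correct and follows essentially the same inductive route as the paper's proof: both argue that the left-hand sides of \eqref{defijj} and \eqref{defitj} are strictly positive, forcing the right-hand integrals to be positive and hence $\omega_{t+1}^j>i_t^j$. Your version is more explicit than the paper's in two respects---you isolate and carry the auxiliary invariant $\mu_j<i_t^j$ (which the paper leaves implicit after noting $\mu_j<i_{j+1}^j$ in the base step), and you spell out the degenerate case $i_t^j=1$---but the underlying argument is the same.
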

\begin{corollary}\label{wellDefinedPart} For every action $j$, since there is no intersection between the ordered intervals $(i_{t}^{j},i_{t+1}^j]$ for every agent $t$, $\hat{\Theta}_j$ is a collection of disjoint sets.
\end{corollary}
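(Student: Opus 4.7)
The plan is to prove the inequality $i_t^j \le i_{t+1}^j$ by induction on $t$, while carrying along the stronger auxiliary claim that $i_t^j \ge \mu_j$ for every $t \ge j+1$. The monotonicity will follow almost immediately once this auxiliary bound is in hand, because it ensures that the integrand $X_1 - \mu_j$ on the right-hand side of the defining equation (\ref{defitj}) is non-negative on the integration interval, which in turn forces $\omega_{t+1}^j \ge i_t^j$.

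For $t \le j-1$, both sides of the desired inequality equal $-1$ by construction, so the claim is trivial. The base of the induction is $t = j$: here $i_j^j = -1$, so $i_{j+1}^j \ge i_j^j$ will follow once we show $\omega_{j+1}^j \ge \mu_j$, because $\mu_j > -1$ and $i_{j+1}^j = \min(1, \omega_{j+1}^j)$. Equation (\ref{defijj}) has a non-negative constant LHS (its integrand $\mu_j - X_1$ is non-negative on the domain $X_1 \le \mu_j$), whereas viewed as a function of $\omega$, its RHS equals $0$ at $\omega = \mu_j$, is continuous, and strictly increases for $\omega > \mu_j$ because $D_1$ has full support on $[-1,1]$. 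Matching the LHS therefore requires $\omega_{j+1}^j \ge \mu_j$, and since $\mu_j < 1$, we conclude $i_{j+1}^j \ge \mu_j$, giving both the base inequality and the auxiliary lower bound.

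For the inductive step at some $t > j$, assume $i_t^j \ge \mu_j$. If $i_t^j = 1$, then $i_{t+1}^j = 1$ as well and there is nothing to prove. Otherwise, look at equation (\ref{defitj}). The LHS is non-negative because $1 - X_1 \ge 0$ on $[-1, i_t^j]$. On the RHS, for any $\omega \ge i_t^j$, the integrand $X_1 - \mu_j$ is non-negative since $X_1 \ge i_t^j \ge \mu_j$; the RHS is continuous in $\omega$, equals zero at $\omega = i_t^j$, and is strictly increasing for $\omega > i_t^j$ by the full-support assumption on $D_1$. Therefore the unique balancing $\omega_{t+1}^j$ satisfies $\omega_{t+1}^j \ge i_t^j$, so $i_{t+1}^j = \min(1, \omega_{t+1}^j) \ge \min(1, i_t^j) = i_t^j$; combining with the inductive hypothesis yields $i_{t+1}^j \ge \mu_j$, closing the induction.

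The main obstacle is really bookkeeping rather than any deep argument: one must ensure that a finite balancing $\omega_{t+1}^j$ exists (handled by the $\min(1, \cdot)$ truncation, which just means exploration of action $j$ is already complete at step $t$), and one must be careful that the sign of the integrand $X_1 - \mu_j$ on the RHS of (\ref{defitj}) is correct, which is precisely why the strengthened inductive claim $i_t^j \ge \mu_j$ is needed alongside monotonicity rather than monotonicity alone.
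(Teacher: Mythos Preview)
Your proposal is correct and follows essentially the same route as the paper: the corollary is an immediate consequence of the monotonicity $i_t^j \le i_{t+1}^j$ (Lemma~\ref{monoincI}), which the paper proves by the same induction on~$t$, using non-negativity of the left-hand sides of (\ref{defijj}) and (\ref{defitj}) to force $\omega_{t+1}^j \ge i_t^j$. The only cosmetic difference is that you carry the auxiliary bound $i_t^j \ge \mu_j$ explicitly through the induction, whereas the paper leaves it implicit (it follows from the base case $i_{j+1}^j > \mu_j$ together with the already-established monotonicity); making it explicit is arguably cleaner, since it is exactly what guarantees the sign of the integrand on the right-hand side of (\ref{defitj}).
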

For $(\hat{\Theta^j})_{j=2}^k$ to be well defined we need to verify that there is no intersection for the same agent $t$ for different actions (as stated in Definition \ref{noIntersection}). In the next lemma we show that for every agent and action $\langle t,j\rangle$ the right bound of $\theta_{t}^{j+1}$ is smaller than the left bound of $\theta_{t}^{j}$.
\begin{lemma}\label{incI}
For every $t\geq j$, it holds that $i_{t}^j\geq i_{t+1}^{j+1}$, and there is an equality only if $i_{t}^j= i_{t+1}^{j+1}=1$.
\end {lemma}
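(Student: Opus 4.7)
The plan is induction on $t \geq j$ for fixed $j$, exploiting the structural parallelism between the defining equations for $\omega_{t+1}^{j}$ and $\omega_{t+2}^{j+1}$. The base value $t = j$ is the convention $i_j^j = i_{j+1}^{j+1} = -1$, which trivializes the claim (both intervals for action $j+1$ are still empty); substantive content appears at $t \geq j+1$, where I aim for strict inequality $i_t^j > i_{t+1}^{j+1}$ unless both have saturated at the cap $1$.

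First comparison, at $t = j+1$. Both $i_{j+1}^j$ and $i_{j+2}^{j+1}$ are defined by (\ref{defijj}). I introduce
\[
g(\omega, \mu) := \int_{\mu \leq X_1 \leq \omega}(X_1 - \mu)\, dD_1,
\]
which is strictly increasing in $\omega$ for $\omega \geq \mu$ and satisfies $g(\omega, \mu') \geq g(\omega, \mu)$ whenever $\mu' \leq \mu \leq \omega$ (lowering the lower limit both enlarges the interval and raises the integrand). Writing $L_\ell := \prod_{n=2}^{\ell-1} p_n^{-1} \int_{X_1 \leq \mu_\ell}(\mu_\ell - X_1)\, dD_1$, equation (\ref{defijj}) becomes $g(\omega_{j+1}^j, \mu_j) = L_j$ and $g(\omega_{j+2}^{j+1}, \mu_{j+1}) = L_{j+1}$. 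We have $L_j > L_{j+1}$ strictly: the product loses a factor $p_j^{-1} < 1$ (from $p_j^1 > 0$), and $\mu_j > \mu_{j+1}$ combined with full support of $D_1$ makes the negative-tail integral strictly larger for action $j$. Chaining
\[
g(\omega_{j+1}^j, \mu_j) = L_j > L_{j+1} = g(\omega_{j+2}^{j+1}, \mu_{j+1}) \geq g(\omega_{j+2}^{j+1}, \mu_j)
\]
and applying strict monotonicity of $g(\cdot, \mu_j)$ forces $\omega_{j+1}^j > \omega_{j+2}^{j+1}$; capping at $1$ gives $i_{j+1}^j \geq i_{j+2}^{j+1}$ with equality only when both equal $1$.

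Inductive step, $t \geq j+1$. Assume $i_t^j \geq i_{t+1}^{j+1}$. Now (\ref{defitj}) governs both $\omega_{t+1}^j$ and $\omega_{t+2}^{j+1}$, and Lemma~\ref{monoincI} combined with the first comparison yields $i_t^j \geq \mu_j > -1$ and $i_{t+1}^{j+1} \geq \mu_{j+1}$, so every integrand is non-negative on its relevant range. The LHS of (\ref{defitj}) at $(t, j)$ strictly dominates that at $(t+1, j+1)$: $p_j^1 > p_{j+1}^1$ (from $\mu_\ell = 2p_\ell^1 - 1$ and $\mu_j > \mu_{j+1}$), the IH gives $\int_{-1}^{i_t^j}(1-X_1)\, dD_1 \geq \int_{-1}^{i_{t+1}^{j+1}}(1-X_1)\, dD_1$ with a non-negative integrand, and this integral is strictly positive since $i_t^j > -1$. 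Suppose, for contradiction, $\omega_{t+1}^j \leq \omega_{t+2}^{j+1}$. Combined with the IH, $(i_t^j, \omega_{t+1}^j]$ embeds in $(i_{t+1}^{j+1}, \omega_{t+2}^{j+1}]$; the pointwise dominance $X_1 - \mu_{j+1} > X_1 - \mu_j$ together with full support of $D_1$ then gives RHS at $(t+1, j+1)$ strictly exceeding RHS at $(t, j)$, contradicting the LHS ordering since each RHS equals its paired LHS. Hence $\omega_{t+1}^j > \omega_{t+2}^{j+1}$, and after capping at $1$ we obtain $i_{t+1}^j \geq i_{t+2}^{j+1}$, with equality only when both are $1$.

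The main obstacle is threading strict inequalities through the chain so that the ``equality only at $1$'' refinement actually emerges: one has to verify that $p_j^1 > p_{j+1}^1$ makes the LHS comparison strict whenever $i_t^j > -1$, and that full support of $D_1$ upgrades every proper containment of integration ranges into a strict dominance of positive integrals. The cap at $1$ must also be threaded through the induction carefully: once $\omega_{t+1}^j \geq 1$, the action-$j$ side saturates at $i_{t+1}^j = 1$ and $i_{t+2}^{j+1} \leq 1$ holds trivially, which is exactly the equality-at-$1$ exception permitted by the lemma. Both checks lean on Lemma~\ref{monoincI} to ensure that the integration intervals in play are non-degenerate.
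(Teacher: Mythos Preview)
Your proof is correct and follows essentially the same inductive strategy as the paper: compare the defining integral equations~(\ref{defijj}) and~(\ref{defitj}) for $\omega_{t+1}^j$ versus $\omega_{t+2}^{j+1}$, use $p_j^1 > p_{j+1}^1$ and $\mu_j > \mu_{j+1}$ to order the left-hand sides, and conclude the ordering of the $\omega$'s. Your treatment is in fact somewhat more careful than the paper's---you introduce the auxiliary function $g$ to make the monotonicity explicit, run the inductive step as a clean contradiction argument, and thread the cap at~$1$ through the ``equality only at $1$'' clause; the paper's proof asserts the conclusion more tersely and (curiously) writes $i_j^j = \mu_j$ in its base case, whereas you correctly take $i_j^j = -1$ per Definition~\ref{defIntervals}.
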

Let $\hat{\pi}$ be a partition policy that uses the suggested parameters, $(\hat{\Theta^j})_{j=2}^k$, as an input.
\begin{corollary}
$\hat{\pi}$ is well defined, as for every $x_1\in[-1,1]$ there exists exactly one action $j\in A$ such that $\sigma_t=j$.
\end{corollary}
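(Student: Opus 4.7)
The plan is to verify that for every agent $t$ and every realization $x_1 \in [-1, 1]$, the partition policy $\hat{\pi}$ assigns a unique action. Looking at Definition \ref{partitionPolicyDef}, clauses (1), (2), and (5) each unambiguously single out a particular action, so the only potential ambiguity lies in clause (3), where the recommendation depends on which interval $\hat{\theta}_t^j$ contains $x_1$. The task therefore reduces to showing two things: (i) for each agent $t$, the intervals $\{\hat{\theta}_t^j\}_{j=2}^{k}$ are pairwise disjoint, so at most one $j\geq 2$ can fire; and (ii) every $x_1 \in [-1,1]$ that lies outside the union of these intervals is handled by the default clause (5).

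For disjointness at a fixed agent $t$, I would establish the stronger claim that the intervals are totally ordered on the real line: for any $2 \leq j < j'$, every point of $\hat{\theta}_t^{j'}$ lies weakly to the left of every point of $\hat{\theta}_t^j$. Since $\hat{\theta}_t^j = (i_t^j, i_{t+1}^j]$ and $\hat{\theta}_t^{j'} = (i_t^{j'}, i_{t+1}^{j'}]$, it suffices to prove $i_{t+1}^{j'} \leq i_t^j$. I would first iterate Lemma \ref{incI} exactly $j'-j$ times, walking simultaneously in $t$ and in $j$, to obtain
\[
i_t^j \;\geq\; i_{t+1}^{j+1} \;\geq\; i_{t+2}^{j+2} \;\geq\; \cdots \;\geq\; i_{t+(j'-j)}^{j'},
\]
and then invoke Lemma \ref{monoincI} (monotonicity of $i_\tau^{j'}$ in $\tau$), together with the fact that $j'-j \geq 1$, to conclude $i_{t+1}^{j'} \leq i_{t+(j'-j)}^{j'}$. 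Chaining these two inequalities yields $i_{t+1}^{j'} \leq i_t^j$, which forces $\hat{\theta}_t^{j'} \cap \hat{\theta}_t^j = \emptyset$.

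For the coverage part, by construction each $\hat{\theta}_t^j$ is a subset of $[-1,1]$, and any $x_1 \in [-1,1]$ that does not fall in any $\hat{\theta}_t^j$ with $j\geq 2$ is caught by clause (5) of Definition \ref{partitionPolicyDef}, which produces the unique recommendation $\sigma_t=1$. Combined with the overriding clauses (1) and (2) (which fix the recommendation when $t=1$ or when an optimal action has already been observed), this shows that exactly one action is recommended in every situation. The main obstacle is the non-adjacent case $j' > j+1$ in the disjointness argument, since Lemma \ref{incI} only compares consecutive actions; this is resolved cleanly by the iteration-plus-monotonicity combination above, and all remaining verifications amount to a routine case split on Definition \ref{partitionPolicyDef}.
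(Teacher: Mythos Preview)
Your proposal is correct and follows the same approach the paper relies on: the paper states this corollary without proof, treating it as an immediate consequence of Lemma~\ref{incI} (the right endpoint of $\hat\theta_t^{j+1}$ is at most the left endpoint of $\hat\theta_t^j$) together with the catch-all clause~(5) of Definition~\ref{partitionPolicyDef}. Your explicit treatment of non-adjacent actions via iterating Lemma~\ref{incI} and then applying Lemma~\ref{monoincI} is exactly the right way to fill in the detail the paper leaves implicit; you may also want to note in passing that when $t<j'$ the interval $\hat\theta_t^{j'}$ is empty by definition, so the only non-trivial case is $t\geq j'>j$, where your chain applies.
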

We now investigate what is required from a BIC partition policy, in order to show that the suggested partition policy is BIC. Following the same exhaustion demonstrated in Example \ref{exampleSigmat2}, for every agent $t\geq2$ a BIC recommendation for action $j\ne 1$ can be either exploration driven or exploitation driven:
\begin{enumerate}
\item \textbf{Exploitation driven recommendation}: Action $j\ne 1$ is the best action given the history. Once again, one of the following holds:
\begin{enumerate}
\item \textbf{A known reward}- Action $j$ has the best possible realized value (i.e., $x_{j}=+1$) after one of the previous agents $j\leq \tau<t$ has explored it.
Formally, the expected "gain" of agent $t$ from choosing action $j$
over of action $1$ in this case is
\[
\int_{X_j=1,X_1\in{\cup_{\tau<t}\theta_j^{\tau}}} [X_j-X_1]dD=
p_j^1\int_{X_1\in{\cup_{\tau<t}\theta_j^{\tau}}} [1-X_1]dD_1.
\]
\item \textbf{An unknown reward}- 
The observed realization $x_1$ yields lower
reward compared to the prior expected reward of action $j$, i.e., $x_1<\mu_j(<\dots<\mu_2)$, and the better a priori actions, $k<j$ have been explored and resulted in minimal reward of $-1$.
Thus action $j$ is better to execute then action $1$, and the expected "gain" here is
\[
\prod_{n=2}^{j-1}(p_n^{-1})\int_{\mu_j\geq X_1} [\mu_j-X_1]dD_1.
\]
\end{enumerate}

\item \textbf{Exploration driven recommendation}: The planner has not yet observed an action with the best possible reward (i.e., $+1$), and $x_1>\mu_j$. This recommendation does not benefit with agent $t$ (but might yield higher expected social welfare hence is possible by the planner). The expected "loss" in this case is:
\[
\int_{\mu_j<X_1,X_1\in{\theta_j^t}} [X_j-X_1]dD=
\int_{\mu_j<X_1,X_1\in{\theta_j^t}} [\mu_j-X_1]dD_1.
\]
\end{enumerate}
Taken together, the summation of the expected gains and loss is equivalent to the left hand sides of the integrals in the next lemma.
\begin{lemma} A recommendation $\sigma_t=j\ne 1$ is BIC w.r.t. action $1$ if for $t=j$ we have,
\begin{equation}\label{twoPartInt1}%taken from kremer 2014 (1)
\int_{\mu_j<X_1,X_1\in{\theta^j_t}} [\mu_j-X_1]dD+
\prod_{n=2}^{j-1}(p_n^{-1})\int_{\mu_j\geq X_1} [\mu_j-X_1]dD_1
\geq 0_{\textstyle ,}
\end{equation}
and for $t>j$ we have,
\begin{equation}\label{twoPartInt}%taken from kremer 2014 (1)
\int_{X_1\in{\theta^j_t}} [\mu_j-X_1]dD+
p_j^1\int_{X_1\in{\cup_{\tau<t}\theta_j^{\tau}}} [1-X_1]dD_1\geq 0_{\textstyle .}
\end{equation}
Notice that for $t>j$, $X_1\in{\theta^j_t}$ yields $X_1\geq \mu_j$ since the sets are disjoint and according to clause (\ref{mustInclude})
in definition \ref{partitionPolicyDef}, $[-1,\mu_j]\subseteq\theta_j^j$.
\end{lemma}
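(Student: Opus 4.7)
The plan is to unfold the BIC condition $\E[u_t(j)-u_t(1)\mid\sigma_t=j]\geq 0$ into an unconditional integral by multiplying through by $\Pr[\sigma_t=j]>0$, giving $\int_{\{\sigma_t=j\}}[X_j-X_1]\,dD\geq 0$, and then to decompose the event $\{\sigma_t=j\}$ under the suggested partition policy $\hat{\pi}$ into the disjoint sub-events enumerated in the paragraph immediately preceding the lemma. Each term on the LHS of the two displayed inequalities will be identified with the contribution of exactly one sub-event to this integral, using independence of the $X_i$'s and the fact that $X_j$ is replaced by its prior mean $\mu_j$ whenever the integration is over a region where $X_j$ has not yet been observed.

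For $t=j$, the ``known reward'' exploitation branch is inactive because action $j$ has not yet been explored; the two remaining cases are (i) exploitation-unknown, on which $X_1\leq\mu_j$ and all better a priori actions $2,\ldots,j-1$ have been explored and returned $-1$, and (ii) exploration, on which $\mu_j<X_1$ with $X_1\in\theta_j^j$. Using independence of the $X_i$'s and substituting $\E[X_j]=\mu_j$, the contribution of (i) to the integral becomes $\prod_{n=2}^{j-1}p_n^{-1}\int_{\mu_j\geq X_1}(\mu_j-X_1)\,dD_1$ and the contribution of (ii) becomes $\int_{\mu_j<X_1,\,X_1\in\theta_j^j}(\mu_j-X_1)\,dD$. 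Summing gives exactly the LHS of the stated $t=j$ inequality, and BIC requires it to be $\geq 0$.

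For $t>j$, the two active cases are (a) exploitation with a known reward---$X_j=1$ realized at a prior agent $\tau<t$ who explored $j$ because $X_1\in\theta_\tau^j$---and (b) exploration at time $t$, on which $X_1\in\theta_t^j$. In case (a), $X_j-X_1=1-X_1$ and the marginal $p_j^1$ comes out of the $dD_j$ integration, yielding $p_j^1\int_{X_1\in\cup_{\tau<t}\theta_\tau^j}(1-X_1)\,dD_1$; substituting $\E[X_j]=\mu_j$ in case (b) gives $\int_{X_1\in\theta_t^j}(\mu_j-X_1)\,dD$. Adding the two contributions yields the LHS of the $t>j$ inequality, and BIC again requires it to be $\geq 0$.

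The main obstacle is the combinatorial bookkeeping of the event $\{\sigma_t=j\}$, since clause~(\ref{optimalRewFound}) of Definition~\ref{partitionPolicyDef} makes the recommendation at time $t$ depend on the full prior history rather than only on $X_1$. One must invoke Lemma~\ref{monoincI} (monotonicity of $i_t^j$ in $t$) together with Lemma~\ref{incI} (disjointness across actions at the same time) and clause~(\ref{mustInclude}) to pin down exactly which histories trigger which recommendation: in particular, that the nesting $\hat{\theta}_j^j\subseteq\hat{\theta}_{j-1}^{j-1}\subseteq\cdots\subseteq\hat{\theta}_2^2$ forces the previously explored actions to be exactly $2,\ldots,j-1$ on the event $\{X_1\in\hat{\theta}_j^j\}$, so that clause~(\ref{optimalRewFound}) is avoided precisely when each of those earlier realizations was $-1$; an analogous argument identifies the cases in which $\sigma_t=j$ for $t>j$. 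Once this accounting is in place, every term in the stated inequalities matches a term in the integral decomposition, completing the proof.
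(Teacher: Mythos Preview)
Your approach is exactly what the paper does---the paper's ``proof'' is nothing more than the case enumeration in the three paragraphs preceding the lemma, capped by the sentence ``Taken together, the summation of the expected gains and loss is equivalent to the left hand sides of the integrals in the next lemma.'' You formalize that same decomposition and, commendably, go further by flagging the bookkeeping about which actions have already been explored, which the paper leaves implicit.

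One point to tighten: your claim that each term of the lemma's LHS is \emph{literally} the contribution of one sub-event to $\int_{\{\sigma_t=j\}}[X_j-X_1]\,dD$ is not exact. On the sub-event $\{\mu_j<X_1\in\theta_j^j\}$ (and likewise on $\{X_1\in\theta_\tau^j,\ \tau<t\}$), the event $\{\sigma_t=j\}$ additionally requires that every action already explored returned $-1$; under $\hat\pi$ your nesting argument shows these are precisely actions $2,\ldots,j-1$, so the true contribution carries an extra factor $\prod_{n=2}^{j-1}p_n^{-1}$ that the lemma's exploration and known-reward terms omit. For $t>j$ this factor is common to both terms and divides out of the sign condition; for $t=j$ the omitted factor appears only on the gain term, so dropping it from the (negative) exploration term only makes the lemma's LHS \emph{smaller} than the true BIC quantity---hence the inequality remains a valid sufficient condition, which is all the lemma asserts. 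Your last paragraph already contains the ingredients for this; just make the role of that factor explicit rather than asserting an exact term-by-term match.
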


Let us compare the BIC constraints in above lemma to the suggested series, $\hat\Theta$.\\
If $i_{t}^{j}=1$, then from Lemma \ref{monoincI} and the definition of $i_{t}^{j}$, it holds that $i_{t+1}^{j}=1$, which yields $\hat{\theta}_t^j=\emptyset$.\\
Since both parts of (\ref{defitj}) are zero in this case, so the (exploitation driven) recommendation $\sigma_t=j$ is BIC.
Else, $i_{t}^{j}=\omega_t^j$, in which case (\ref{defijj}) and (\ref{defitj}) satisfy (\ref{twoPartInt1}) and (\ref{twoPartInt}) respectively. This allows us to derive the following corollary.
\begin{corollary}\label{BICwrt1}
The partition policy $\hat{\pi}$ that uses the series $\hat\Theta$ as input, outputs BIC recommendations $\sigma_t=j$ w.r.t. action $1$ for every agent $t$.
\end{corollary}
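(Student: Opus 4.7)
The plan is to verify the BIC inequalities (\ref{twoPartInt1}) and (\ref{twoPartInt}) by direct substitution of the parameters produced by Definition~\ref{defIntervals}, which were engineered precisely to make those constraints tight whenever possible. Fix an agent $t$ and an action $j \ne 1$; I would split on whether the interval $\hat{\theta}_t^j$ is empty.

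\textbf{Degenerate case} ($i_t^j = 1$). By Lemma~\ref{monoincI} we also have $i_{t+1}^j = 1$, hence $\hat{\theta}_t^j = \emptyset$. The recommendation $\sigma_t = j$ via clause~(\ref{diffBullet}) of Definition~\ref{partitionPolicyDef} never fires from this interval, and the two integrals in the relevant BIC inequality become integrals over empty sets; both sides are zero, so the inequality holds trivially.

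\textbf{Non-degenerate case} ($i_t^j < 1$). Here Definition~\ref{defIntervals} gives $i_{t+1}^j = \min(1, \omega_{t+1}^j)$ and $\hat{\theta}_t^j = (i_t^j, i_{t+1}^j]$. I treat $t=j$ and $t>j$ separately. For $t=j$, the first integral in (\ref{twoPartInt1}) equals $-\int_{\mu_j \le X_1 \le i_{j+1}^j}[X_1-\mu_j]\,dD_1$ (using that $D_1$ has no mass point at $\mu_j$). If $\omega_{j+1}^j \le 1$, then $i_{j+1}^j = \omega_{j+1}^j$ and the defining identity (\ref{defijj}) forces the two summands of (\ref{twoPartInt1}) to cancel, giving equality. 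If $\omega_{j+1}^j > 1$, then $i_{j+1}^j = 1 < \omega_{j+1}^j$, so the range of the negative term shrinks, its magnitude is strictly smaller than the positive exploitation-gain summand, and (\ref{twoPartInt1}) holds strictly. For $t > j$, the auxiliary identity I need is $\bigcup_{\tau < t} \hat{\theta}_j^\tau = (-1, i_t^j]$, which follows from $\hat{\theta}_j^j = (-1, i_{j+1}^j]$ together with the monotone-contiguous structure of the intervals $(i_\tau^j, i_{\tau+1}^j]$ guaranteed by Lemma~\ref{monoincI}. Substituting this into (\ref{twoPartInt}) converts the exploitation-gain term into $p_j^1 \int_{-1 \le X_1 \le i_t^j}[1-X_1]\,dD_1$, at which point (\ref{defitj}) provides exactly the same tight/strict dichotomy as before.

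The main obstacle is purely bookkeeping: first, cleanly recording the nested-interval identity $\bigcup_{\tau<t}\hat{\theta}_j^\tau = (-1, i_t^j]$ so that the union term in (\ref{twoPartInt}) collapses to a single interval matching the left-hand side of (\ref{defitj}); and second, handling the truncation case $\omega_{t+1}^j > 1$, where the defining equation no longer gives equality but still gives the desired inequality because shrinking the domain of a non-positive integrand can only increase the value. Once these two points are in place, the corollary follows by direct comparison with (\ref{defijj}) and (\ref{defitj}).
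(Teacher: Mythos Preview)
Your proposal follows the same approach as the paper's argument (the paragraph immediately preceding the corollary): split on whether $i_t^j = 1$, and in the non-degenerate case match the defining identities (\ref{defijj}), (\ref{defitj}) directly against the BIC inequalities (\ref{twoPartInt1}), (\ref{twoPartInt}). Your treatment is in fact more careful than the paper's, since you spell out the truncation case $\omega_{t+1}^j > 1$ and record the nested-interval identity $\bigcup_{\tau<t}\hat{\theta}_\tau^j = (-1,i_t^j]$ explicitly.

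One small slip in the degenerate case: when $i_t^j = 1$ (which forces $t > j$, since $i_j^j = -1$), the second integral in (\ref{twoPartInt}) is taken over $\bigcup_{\tau<t}\hat{\theta}_\tau^j = (-1,1]$, which is \emph{not} empty---only the first (loss) integral vanishes. The inequality still holds because the surviving gain term $p_j^1\int_{-1\le X_1\le 1}[1-X_1]\,dD_1$ is non-negative, so the repair is immediate; just replace ``both sides are zero'' with ``the loss term vanishes and the gain term is non-negative.'' (The paper's own sentence ``both parts of (\ref{defitj}) are zero'' is similarly loose on this point.)
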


Due to exploration of the a priori better actions earlier, it is sufficient for the suggested partition policy, $\hat{\pi}$ to maintain the BIC constraint w.r.t. action $1$ alone. I.e., if $\sigma_t=j$ is an exploration driven recommendation, then $j$ has the lowest index of an unexplored action (and therefore has the highest expected value among all unknown rewards.). Agent $t$ would not prefer any other action $i\ne j$ as such a behavior would yield either a
lower prior reward for actions $i>j$ or the lowest possible reward (i.e., $-1$) for actions $i<j$.
The following are corollaries of the last lemma.

\begin{theorem}
$\hat{\pi}$ is a BIC partition policy.
\end{theorem}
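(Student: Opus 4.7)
The plan is to assemble the pieces already established and then verify BIC with respect to every action $i \ne 1, j$. By Lemma \ref{monoincI} together with Corollary \ref{wellDefinedPart}, each $\hat{\Theta}^j$ is a collection of disjoint subsets of $[-1,1]$. By Lemma \ref{incI}, for any two distinct actions and the same agent $t$ the relevant intervals do not overlap, so $\hat{\Theta}=(\hat{\Theta}^j)_{j=2}^k$ meets Definition \ref{noIntersection} and is a valid input to a partition policy. Therefore $\hat{\pi}$ is a well-defined partition policy, and by Corollary \ref{BICwrt1} every recommendation $\sigma_t=j$ is BIC with respect to action $1$.

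It remains to show that $\sigma_t=j$ is BIC with respect to every action $i\in A\setminus\{1,j\}$. The key structural fact, which follows from clause (\ref{mustInclude}) of Definition \ref{partitionPolicyDef} together with Lemma \ref{incI}, is that $\hat{\pi}$ explores actions in the order $2,3,\dots,k$: an action $j'$ is only ever recommended for the first time after all actions $2,\dots,j'-1$ have already been observed. I would then split into the cases $i<j$ and $i>j$.

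For $2\leq i<j$: action $i$ has been observed by some earlier agent $\tau\leq i<j\leq t$. If the planner had observed $x_i=1$, clause (\ref{optimalRewFound}) would force $\sigma_t=i$, contradicting $\sigma_t=j$. Hence, conditional on $\sigma_t=j$, we have $x_i=-1$ almost surely, so $\E[u_t(i)\mid \sigma_t=j]=-1$. On the other hand, given $\sigma_t=j$, action $j$ either has the already-observed value $x_j=1$ or is still distributed according to $D_j$; in either case $\E[u_t(j)\mid \sigma_t=j]\geq \mu_j>-1$, so the BIC inequality holds strictly.

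For $i>j$: since the policy only proceeds to exploration of an action $i'>j$ after revealing $x_j=-1$ (which would preclude ever recommending $j$ again), conditional on $\sigma_t=j$ action $i$ has not yet been explored. Therefore the event $\{\sigma_t=j\}$ is measurable with respect to $(X_1,\dots,X_{j})$ alone, and by the product structure of the prior $D$ it is independent of $X_i$. Hence $\E[u_t(i)\mid \sigma_t=j]=\mu_i<\mu_j\leq \E[u_t(j)\mid \sigma_t=j]$. Combined with Corollary \ref{BICwrt1}, this establishes BIC and completes the proof. The most delicate step is this independence claim, which is where the ordered-exploration property of $\hat{\pi}$ is essential; everything else is bookkeeping on top of the previously proved lemmas.
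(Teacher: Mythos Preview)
Your argument is correct and mirrors the paper's own treatment, which presents this theorem as an immediate consequence of Corollary~\ref{BICwrt1} together with the ordered-exploration property (stated right after as Corollary~\ref{ActionsbyOrder}): BIC with respect to action~$1$ comes from the construction of the intervals, and BIC with respect to any other~$i$ follows because either $x_i=-1$ is already known ($i<j$) or $X_i$ is still independent of the recommendation with mean $\mu_i<\mu_j$ ($i>j$). One small slip: the bound $\tau\le i$ is not right in general (action~$i$ is first explored at some time $\tau(i)\ge i$, not $\le i$), but your argument only uses that action~$i$ has been observed before time~$t$, which does follow from Lemmas~\ref{monoincI} and~\ref{incI}.
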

\begin{corollary}\label{ActionsbyOrder}
$\hat{\pi}$ recommends the actions in ascending order, i.e., for every $j<i$, action $j$ is explored before action $i $.
\end{corollary}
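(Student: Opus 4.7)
The plan is to prove the stronger statement that whenever action $i$ is first recommended at some round $t$, action $i-1$ has already been recommended (and hence explored) at some strictly earlier round $\tau < t$; the corollary then follows by iterating this argument down from $i$ to $j+1$. The only tools needed are Lemma \ref{monoincI}, Lemma \ref{incI}, and the boundary conventions from Definition \ref{defIntervals}.

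First I would unpack the hypothesis via clause (\ref{diffBullet}) of Definition \ref{partitionPolicyDef}: $t \geq i$ and $x_1 \in \hat{\theta}_t^i = (i_t^i, i_{t+1}^i]$. Applying Lemma \ref{incI} with the index $j = i-1$ (valid since $t \geq i > i-1$) yields $i_{t+1}^i \leq i_t^{i-1}$, hence $x_1 \leq i_t^{i-1}$. Combined with $x_1 > i_t^i \geq -1$, this places $x_1 \in (-1, i_t^{i-1}]$. Using the boundary value $i_{i-1}^{i-1} = -1$ and the monotonicity provided by Lemma \ref{monoincI}, the intervals $\hat{\theta}_\tau^{i-1} = (i_\tau^{i-1}, i_{\tau+1}^{i-1}]$ for $\tau = i-1, \ldots, t-1$ are disjoint, contiguous, and together equal $(-1, i_t^{i-1}]$. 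Therefore there is a unique $\tau_0 \in \{i-1, \ldots, t-1\}$ with $x_1 \in \hat{\theta}_{\tau_0}^{i-1}$.

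The main obstacle is confirming that the partition policy actually recommends $\sigma_{\tau_0} = i-1$ at round $\tau_0$, rather than being overridden by clause (\ref{optimalRewFound}). Suppose for contradiction that some previously explored action $j'$ had already revealed $x_{j'} = 1$ by round $\tau_0$. Then clause (\ref{optimalRewFound}) would force $\sigma_s = j'$ for every $s \geq \tau_0$, contradicting the assumption that $\sigma_t = i \neq j'$ at the strictly later round $t > \tau_0$. Hence clause (\ref{diffBullet}) does fire at $\tau_0$ and action $i-1$ is genuinely first recommended there. Iterating this step from $i$ down through $i-1, i-2, \ldots, j+1$ produces a strictly decreasing sequence of first-recommendation rounds $\tau_i > \tau_{i-1} > \cdots > \tau_j$, which by transitivity establishes that action $j$ is explored strictly before action $i$ for every $j < i$.
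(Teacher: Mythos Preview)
Your proof is correct and matches the paper's approach: the paper presents this as an immediate corollary of Lemma~\ref{incI} (together with Lemma~\ref{monoincI}) without spelling out the details, and your argument is precisely the careful unpacking of why those two lemmas force $x_1 \in \hat\theta_t^i$ to imply $x_1 \in \hat\theta_{\tau_0}^{i-1}$ for some earlier $\tau_0 < t$. The only trivial edge case you might mention is $j=1$, where action~$1$ is explored at $t=1$ by clause~(\ref{sigma11}) and your inductive step need not be invoked.
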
 
\subsection{Optimality}
We follow the same logic process of the proofs provided in Section \ref{OptimalitySec} with some tiny adjustments. The abstraction of the information states is used here with one small difference- considering all the BIC policies recommend action $1$ to the first agent and observe reward $x_1\in[-1,1]$, we update the definition of \textit{stochastic dominant} to be conditioned on $x_1$ as follows.
\begin{definition} A BIC policy algorithm $\pi_A$ is \textit{stochastic dominant} over another BIC policy algorithm $\pi_B$ if for every prior $D$ and a realization $x_1\in[-1,1]$, and for any agent $t$, $\pi_A$ has at least the same probability to observe action $j$'s reward as $\pi_B$, and for some action $j$ a strictly higher probability to observe it's reward in time $t$. I.e., for any agent $t$, action $j$ and realization $x_1\in[-1,1]$ we have
$\Pr_{\pi_A}[S_t^{\vec{z}}\wedge \vec{z}[j]\ne *|\vec{z}[1]=x_1]\geq \Pr_{\pi_B}[S_t^{\vec{z}}\wedge \vec{z}[j]\ne *|\vec{z}[1]=x_1]$, and there exists some agent $t$ and action $j$ for which $\Pr_{\pi_A}[S_t^{\vec{z}}\wedge \vec{z}[j]\ne *|\vec{z}[1]=x_1]>\Pr_{\pi_B}[S_t^{\vec{z}}\wedge \vec{z}[j]\ne *|\vec{z}[1]=x_1]$.
\end{definition} 
The next step is to show the partition policy with the suggested parameters as input yields stochastic dominance. It is done in the following lemma.
\begin{lemma}\label{stoDomcont}
Let $(\Theta_j)_{j=2}^k$, and the realization $X_1=x_1$, be the input for a (BIC) \textit{partition policy} $\pi_A\ne \hat{\pi}$. Then $\hat{\pi}$ is stochastic dominant over $\pi_A$.
\end{lemma}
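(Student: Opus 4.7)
The plan is to adapt the proof of the discrete analog Lemma~\ref{stoDom} to the continuous setting, proceeding by induction on the agent index~$t$ and exploiting the fact that $\hat{\pi}$'s exploration intervals from Definition~\ref{defIntervals} make the BIC constraints~(\ref{defijj}) and~(\ref{defitj}) bind with equality.

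\emph{Base case and reduction.} For $t=1$, both $\hat{\pi}$ and $\pi_A$ recommend action~$1$ by clause~(\ref{sigma11}) of Definition~\ref{partitionPolicyDef}, so the conditional observation probabilities agree trivially and no action has yet been explored. By Corollary~\ref{ActionsbyOrder}, $\hat{\pi}$ explores actions in ascending index order; a swap argument mirroring the interchange used in the discrete proof of Lemma~\ref{stoDom} shows that WLOG $\pi_A$ may be taken to also explore in ascending order (any out-of-order exploration can be reordered without decreasing the conditional observation probability of any action). Under ascending-order exploration, the event ``action~$j$'s reward is observed by time~$t$'' conditional on $X_1=x_1$ factors as $\mathbf{1}[x_1\in\bigcup_{\tau\le t}\theta_\tau^j]\cdot\prod_{j'<j}p_{j'}^{-1}$, so the stochastic-dominance statement reduces to comparing, for each $j$ and $t$, the cumulative exploration sets $\bigcup_{\tau\le t}\theta_\tau^j$ under the two policies.

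\emph{Inductive step.} Fix $t\ge 2$ and assume the claim for all $t'<t$ and all actions. For an action $j$ with $t\ge j$, the BIC constraint~(\ref{twoPartInt}) bounds the cost $\int_{\theta_t^j}(X_1-\mu_j)\,dD_1$ of the new exploration at time~$t$ by the exploitation budget $p_j^1\int_{\bigcup_{\tau<t}\theta_\tau^j}(1-X_1)\,dD_1$. By Definition~\ref{defIntervals}, $\hat{\pi}$ saturates this constraint, extending $\hat{\theta}_t^j$ as far to the right as the current budget permits. The inductive hypothesis combined with a rearrangement observation yields that $\hat{\pi}$'s budget at time~$t$ is at least as large as $\pi_A$'s: first, $\hat{\pi}$'s cumulative explored mass up to time $t-1$ is at least $\pi_A$'s by the induction; and second, because the weight $(1-x_1)$ is strictly decreasing in $x_1$, left-packing the explored mass (as $\hat{\pi}$ does, since $\bigcup_{\tau<t}\hat{\theta}_\tau^j$ is always the interval $[-1,i_t^j]$) maximizes the budget for any fixed cumulative measure. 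Consequently, applying the tight-constraint property of $\hat{\pi}$ to the inequality~(\ref{twoPartInt}) for $\pi_A$ yields that $\bigcup_{\tau\le t}\hat{\theta}_\tau^j$ weakly contains, in the first-order stochastic sense, the corresponding set for $\pi_A$, closing the induction.

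\emph{Main obstacle.} The subtle step is the rearrangement argument: one must show that among all BIC-admissible cumulative exploration patterns of a given $D_1$-measure, the left-anchored interval $[-1,i_t^j]$ simultaneously minimizes the exploration cost $\int(X_1-\mu_j)\,dD_1$ and maximizes the exploitation budget $\int(1-X_1)\,dD_1$, because both weights are monotone in~$x_1$. This two-sided monotonicity is what converts the inductive hypothesis on cumulative measures into dominance of the next-step exploration, and it also ensures that any rightward shift of mass in $\pi_A$ relative to $\hat{\pi}$ strictly degrades both quantities. The strict-inequality requirement in the definition of stochastic dominance then follows by picking the first $(t^\ast,j^\ast)$ at which the two policies' cumulative exploration sets differ, which exists since $\pi_A\neq\hat{\pi}$.
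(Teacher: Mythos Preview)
Your approach differs substantially from the paper's. The paper argues by contradiction in a few lines: take the least $j$ and least $t_1$ for which $\pi_A$ has explored $j$ by time $t_1$ (conditional on $x_1$) while $\hat\pi$ has not; at the time $t<t_1$ when $\pi_A$ first explores $j$, either $(\sigma_t)_{\hat\pi}=1$ (which the paper dispatches, tersely, via Lemma~\ref{BICisPartition}), or $(\sigma_t)_{\hat\pi}=i<j$, in which case $\pi_A$ is behind on action $i$, contradicting minimality of $j$. You instead run an induction on $t$, first reducing WLOG to an ascending-order $\pi_A$, and then invoking a rearrangement principle that left-anchored intervals simultaneously minimize exploration cost and maximize the exploitation budget.

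There is a genuine gap. The lemma's notion of stochastic dominance is \emph{pointwise in $x_1$}: for every realization, $\hat\pi$ must be weakly ahead. Your rearrangement step, however, only controls the $D_1$-\emph{measure} of the cumulative exploration sets (and integrals of monotone weights against them), yielding at best $D_1\bigl(\bigcup_{\tau\le t}\hat\theta_\tau^j\bigr)\ge D_1\bigl(\bigcup_{\tau\le t}\theta_\tau^j\bigr)$. But partition policies (Definition~\ref{collection}) are not required to use left-anchored intervals: a BIC $\pi_A$ may legitimately place a short set $\theta_t^j\subset(i_{t+1}^j,1]$, since constraint~(\ref{twoPartInt}) is then slack. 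For an $x_1$ in that set, $\pi_A$ has explored $j$ while $\hat\pi$ has not. Your phrase ``weakly contains in the first-order stochastic sense'' does not bridge this, because first-order dominance of the exploration indicators on $[-1,1]$ is precisely set containment, which your monotone-weight argument does not deliver. Measure dominance \emph{does} suffice for the downstream Theorem~\ref{swCont} (social welfare integrates over $x_1$), so your line of argument can be made to work for that target, but it does not establish Lemma~\ref{stoDomcont} as stated.
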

From Lemma ~\ref{stoDomcont} we deduce that $\hat{\pi}$ maximizes exploration for each action $j$ and agent $t$. Due to the use of disjoint sets in the definition of a partition policy, $\hat{\pi}$ manages to maximize exploration rates independently. This gives us the important result of time minimization:
\begin{theorem}
$\hat{\pi}$ minimizes the time until terminal state.
\end{theorem}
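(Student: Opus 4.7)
The plan is to mirror the discrete-case argument of Theorem \ref{corMinTerminal} and reduce the claim to the stochastic dominance just established. By Lemma \ref{BICisPartition}, without loss of generality every competitor policy $\pi_A$ is a BIC partition policy, and by Lemma \ref{stoDomcont}, $\hat{\pi}$ is stochastic dominant over any such $\pi_A$. Writing $\tau^\pi := \min\{t : S_t^{\vec z}\text{ is terminal under }\pi\}$ for the exploration-completion time, the target inequality is $\E[\tau^{\hat\pi}] \le \E[\tau^{\pi_A}]$, which follows once we prove $\Pr[\tau^{\hat\pi}\le t]\ge\Pr[\tau^{\pi_A}\le t]$ for every $t$.

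First I would characterize the terminal event in the continuous setting. Since $D_1$ has no atoms, action $1$ never realizes the value $+1$; hence a state is terminal exactly when some already-explored action $j\ge 2$ has $x_j=+1$, or when every action $j\ge 2$ has been explored. Writing $O_j(t)$ for the event ``action $j$'s reward has been observed by time $t$'', we obtain the monotone decomposition
\[
\{\tau^\pi\le t\} \;=\; \Bigl(\bigcup_{j=2}^k\bigl(O_j(t)\cap\{x_j=+1\}\bigr)\Bigr) \;\cup\; \bigcap_{j=2}^k O_j(t).
\]
Both components on the right-hand side are monotone increasing in the collection of explored actions, which is the bridge to Lemma \ref{stoDomcont}.

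The core step is a coupling between $\hat\pi$ and $\pi_A$ on the same sample $(x_1,x_2,\ldots,x_k)\sim D$. Conditional on $x_1$, a partition policy is entirely determined by the pattern of subsequently revealed rewards, so Lemma \ref{stoDomcont} yields $\Pr_{\hat\pi}[O_j(t)\mid x_1]\ge \Pr_{\pi_A}[O_j(t)\mid x_1]$ for every $j$ and every $t$. Plugging this into the monotone decomposition above, and exploiting Corollary \ref{ActionsbyOrder}, which fixes the exploration order of $\hat\pi$ to $2,3,\ldots,k$ and thereby collapses the decomposition into the single event ``$\hat\pi$ has explored the prefix $2,\ldots,j$ by time $t$ with $x_j=+1$, or it has exhausted the entire prefix $2,\ldots,k$'', I would conclude
\[
\Pr_{\hat\pi}[\tau^{\hat\pi}\le t\mid x_1] \;\ge\; \Pr_{\pi_A}[\tau^{\pi_A}\le t\mid x_1]
\]
for every $t$ and every $x_1\in[-1,1]$. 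Integrating over $D_1$ completes the proof.

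The main obstacle I anticipate is the passage from the per-action marginal dominance supplied by Lemma \ref{stoDomcont} to a joint dominance statement on the terminal event, which mixes several actions at once. Corollary \ref{ActionsbyOrder} is what makes the argument clean: under $\hat\pi$ actions are explored in the fixed order $2,3,\ldots,k$, so ``terminal by time $t$'' reduces to a first-passage-time event for a single monotone hitting time, and stochastic dominance transfers directly to it. Without that ordering guarantee, one would be forced into a delicate inclusion--exclusion over the $k-1$ observation events; Corollary \ref{ActionsbyOrder} avoids this entirely and lets the proof proceed in one clean step.
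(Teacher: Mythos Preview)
Your approach is essentially the paper's: both derive the result directly from the stochastic-dominance Lemma~\ref{stoDomcont}, after restricting attention to BIC partition policies via Lemma~\ref{BICisPartition}. In fact the paper gives no standalone proof for this theorem at all; it simply states, in the sentence preceding the theorem, that stochastic dominance together with the disjointness of the partition sets lets $\hat\pi$ maximize exploration of every action independently, and presents the theorem as the immediate consequence. Your outline is therefore \emph{more} detailed than what the paper supplies: you make explicit the decomposition of the terminal event, you note that the passage from per-action marginal dominance to the joint terminal event is the only nontrivial step, and you correctly point to Corollary~\ref{ActionsbyOrder} (the fixed exploration order $2,3,\dots,k$) as the device that collapses the joint event back to something controlled by the marginals. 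That is exactly the mechanism the paper is relying on when it says exploration is maximized ``independently,'' so your proposal and the paper's argument coincide.
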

Finally, we show the main result for this case as well.
\begin{theorem}\label{swCont}
$\hat{\pi}$ maximizes social welfare for unlimited number of agents.
\end{theorem}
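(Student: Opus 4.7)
The plan is to mimic the optimality argument for the discrete case (Theorem \ref{zeroReg}), assembling the pieces developed in this section. First, Lemma \ref{BICisPartition} tells us that any optimal BIC policy $\pi_{opt}$ may be assumed to be a partition policy, so it suffices to show that $\hat\pi$ weakly dominates every other BIC partition policy in social welfare. Second, Lemma \ref{stoDomcont} says that $\hat\pi$ is stochastic dominant over every other BIC partition policy $\pi_A\neq\hat\pi$. The remaining task is to convert this stochastic dominance into a welfare inequality under unbounded $T$.

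To do so, I condition on $x_1\in[-1,1]$. For a partition policy, given $x_1$, the map ``agent $\mapsto$ action explored'' is deterministic: action $j$ is first recommended to the unique agent $\tau_j^\pi(x_1)$ with $x_1\in\theta_{\tau_j^\pi(x_1)}^j$, or to no agent if exploration halts first because some previously explored action was discovered to yield reward $+1$. Stochastic dominance of $\hat\pi$ over $\pi_A$ is then equivalent to $\tau_j^{\hat\pi}(x_1)\le \tau_j^{\pi_A}(x_1)$ for every action $j$ and $D_1$-almost every $x_1$. I would couple the two processes on the same realizations $(x_1,\dots,x_k)$ and decompose the $x_1$-conditional social welfare into (i) the rewards received by the exploration agents, whose sum equals $\sum_j x_j\,\mathbbm{1}[\tau_j^\pi<\infty]$, and (ii) the rewards received by the exploitation agents, which at each time $t$ equal $\max\bigl(\{x_1\}\cup\{x_i:\tau_i^\pi<t\}\bigr)$ by clauses (\ref{optimalRewFound}) and (\ref{defultExploit}) of Definition \ref{partitionPolicyDef}.

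Under this coupling, the exploitation contribution is pointwise no smaller under $\hat\pi$, since $\tau_j^{\hat\pi}\le\tau_j^{\pi_A}$ makes more information available weakly earlier. The exploration contribution is identical whenever both policies explore the same set of actions; when they do not, it is precisely because $\hat\pi$ discovered an action $j$ with $x_j=1$ sooner and stopped, so $\hat\pi$ merely avoids further exploration losses (each such skipped exploration contributes a reward of $x_i\in\{-1,+1\}$ bounded above by $1$, whereas $\hat\pi$ keeps earning $+1$ per agent). Integrating over $x_1$ and summing over $t\le T$ then gives $SW_T(\hat\pi)\ge SW_T(\pi_A)$; taking $\pi_A=\pi_{opt}$ forces equality.

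The main obstacle is making the coupling precise when the sets $\theta_t^j$ of $\pi_A$ differ from $\hat\theta_t^j$ on a positive-measure slice of $x_1$-values that $\pi_A$ has postponed to a strictly later agent. The property that upgrades stochastic dominance to a welfare inequality (rather than just an information-ordering) is that the suggested parameters bind the BIC constraint with equality: equations \eqref{defijj} and \eqref{defitj} exactly equate the prior expected exploration loss to the prior expected exploitation gain that the discovered realizations will deliver to the tail of agents. Combined with $T$ being large enough that the tail is never truncated, this guarantees that every instance of ``earlier exploration'' under $\hat\pi$ is zero-sum against $\pi_A$ in agent-$\tau_j^{\hat\pi}$'s ex-ante utility but strictly improves the remaining exploitation welfare. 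This is the continuous analogue of the ``balancing agent'' argument behind Theorem \ref{zeroReg}, and completes the proof.
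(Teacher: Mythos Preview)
Your route differs from the paper's. The paper does not go through Lemma~\ref{stoDomcont} at all; instead it runs a short first-deviation argument (the continuous analogue of the swap in Theorem~\ref{zeroReg}): take $\pi_{opt}$, which by Lemma~\ref{BICisPartition} is a partition policy, find the least action $j$ and least time $t_1$ at which $\hat\pi$ recommends $j$ while $\pi_{opt}$ recommends some $i\neq j$, and argue case-by-case that $\pi_{opt}$'s choice hurts social welfare (if $i<j$ then $i$ is already explored with $x_i=-1$; if $i>j$ then $\mu_i<\mu_j$). Your approach---reduce to partition policies, use stochastic dominance to get $\tau_j^{\hat\pi}(x_1)\le\tau_j^{\pi_A}(x_1)$, then couple on realizations---is more systematic and makes explicit use of the machinery the section develops.

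There is, however, a genuine gap in your coupling step. The sentence ``the exploration contribution is identical whenever both policies explore the same set of actions; when they do not, it is precisely because $\hat\pi$ discovered an action $j$ with $x_j=1$ sooner and stopped'' is not correct. A BIC partition policy $\pi_A$ may simply fail to schedule action $j$ for some $x_1$ (nothing forces $\bigcup_t\theta_t^j$ to cover $[-1,1]$), so $\hat\pi$ can perform \emph{strictly more} explorations than $\pi_A$ on realizations where every $x_i=-1$. On those realizations each extra exploration by $\hat\pi$ trades an exploitation reward of $x_1$ for $-1$, so neither the exploration nor the exploitation contribution dominates realization-by-realization; the decomposition you set up does not give a pointwise inequality. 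The comparison genuinely has to be in expectation, using tightness of \eqref{defijj}--\eqref{defitj} and large $T$, as you say in your last paragraph---but that paragraph is doing different (and incompatible) work than the pointwise claims that precede it. A clean fix in your framework is to replace the pointwise coupling by the one-step swap: for each $x_1$, moving one scheduled exploration of action $j$ from $\tau_j^{\pi_A}(x_1)$ to the earlier time $\tau_j^{\hat\pi}(x_1)$ (or scheduling it when $\pi_A$ omits it) preserves BIC and weakly raises expected social welfare for large $T$. That is exactly the paper's argument.
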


\section{Conclusion and open problems}

This paper explores the problem of incentivizing exploration via Bayesian persuasion. 
We consider two different supports for the a priori better action, a discrete version $\{-1,0,1\}$, and a continuous version $[-1,1]$.
In both settings, our optimal policy explores the better a priori actions earlier.
In addition, it maximizes the exploration, subject to the BIC constraints.
This leads to a planner policy that maximize the social welfare.

Our optimal policy also achieves both: (1) minimizing the time until all of the actions are explored, and (2) that all the actions are explored, in case of large enough $T$. 
Our optimal policy requires special correlated randomization to guarantee the optimality.

There are few obvious open problems, First, to extend the support of {\em all} actions to be, for example $[-1,+1]$. Second, to consider stochastic actions, even simple Bernoulli random variables with different probabilities. Third, enabling the agents to receive limited amount of information about the past. (The challenge here is to make the information informative, and still allow the planner to explore all actions, eventually.)

\subsection{The case of continuous distribution for all actions' reward}
Let us discuss a case in which the support of each action's reward is $[-1,1]$. In order to tackle such setting, a change in Definition \ref{partitionPolicyDef} is required.
First, unless all rewards are known there is no action with optimal reward, hence  clause (\ref{optimalRewFound}) must be removed.
Second, clause (\ref{defultExploit}) should be changed to  $\sigma_t=\textnormal{argmax}_j{x_j}$, so that exploitation driven recommendation with known reward would have the best reward.\\
The issue is that while the BIC constraints w.r.t. the a priori best action hold in such policy, the BIC constraints w.r.t. the other actions do not (automatically) apply anymore. 
Consequently, one must also update how the algorithm selects which agent should explore which action.
For example, if agent $5$ receives a recommendation for action $3$ (i.e., $\sigma_5=3$) and there is a very low probability that either $x_2$ or $x_1$ are smaller than $\mu_3$ and $\mu_3$ is slightly bigger than $-1$. In this case, agent $5$ knows that it is highly unlikely for such a recommendation to benefit her, and might prefer action $2$ over this recommendation.

%%
%% The next two lines define the bibliography style to be used, and
%% the bibliography file.
\bibliographystyle{ACM-Reference-Format}
\bibliography{refs}

%%
%% If your work has an appendix, this is the place to put it.
%\appendix
%\newpage
\appendix
\newpage
\section{Missing proofs from Sections 3, 4 and 5}
\label{app:sec3}

\begin{proof*}{Proof of Lemma \ref{BigThm}}
We prove here lemma for $q^2_t$ and $q_t^3$, i.e., $j=3$, and the full proof, for $j\geq 4$, can be found in Appendix \ref{appendix1}.
The values
$q_t^2$'s are stated in the theorem, and we will show that they satisfy the required conditions. In addition we state the values of $q^3_t$, and later show that they also satisfy the required conditions,
\begin{equation}
q^3_{t} = 
     \begin{cases}
       0 &\quad t<3\\
       \min(\frac{2p_{3}^{1}p_{1}^{-1}p_{2}^{-1}}{1-2p_{3}^{1}},p_2^{- 1}q_2^2) &\quad t=3\\
       \min 
(\frac{2p_{1}^{-1}p_{2}^{-1}p_{3}^{1}+p_{3}^{1}\sum_{\tau=3}^{t-1}{q^3_{\tau}}}{1-2p_{3}^{1}} 
,p_2^{-1}\sum_{\tau=2}^{t-1}{q^2_{\tau}}-\sum_{\tau=3}^{t-1}{q^3_{\tau}}) &\quad t>3\\
     \end{cases}
\end{equation}
Finally we need to show that $q_{t}^3\leq p_2^{-1}q_{t-1}^2$.

In the following, we calculate exploration rates for $\hat{\pi}$, and just for notational convenience, we will refer to $\Pr_{\hat{\pi}}[\ ]$ simply as $\Pr[\ ]$.
The first agent, by knowing her place in line, knows that none of the actions have been explored yet. Hence, it is for her best interest to choose action $1$, the action with the maximal prior expected reward. Therefore, it is necessary that $q_{1}^2=0$ for $\hat{\pi}$ to be BIC policy. Consequently, the first agent explores action $1$, and gains reward of $u_1(1)=x_1$.
As for the second agent, the planner can now use her knowledge of $x_1$ for $\sigma_2$.
If $x_1=1$, (the best possible reward), since she wishes to maximize the social welfare, she must recommend action $1$ to the rest of the agents, i.e., $S_t^{\langle 1, *,*\rangle}$ is a terminal state with maximal social welfare.
If $x_1=-1$, action 2 currently has the best expected reward, i.e., $\mu_2>\mu_3>- 1=x_1$. It also does not decrease the reward comparing to the known reward of action 1. Therefore, any social BIC planner (including $\hat{\pi}$) must recommend on action $2$ in this round, i.e., $\sigma_2=2$.
If $x_1=0$, although action $2$ is not the best action for agent $2$, a social planner would probably want to recommend it to the second agent, at least with some probability. For that she uses her advantage of knowing the realization of action $1$'s reward. Note that $q_2^2$ influences the second agent only when $x_1=0$.

To complete the definition of the recommendation for the second agent, we calculate the value of $q_2^2$. The expected utility of agent $2$ for following the recommendation must be at least the expected utility of choosing action $1$. Therefore, $q_2^2$ satisfies the BIC constraint, i.e.,
\[
\E[u_{2}(2)-u_{2}(1)|\sigma_2=2]\Pr[\sigma_2=2] = (\mu_2-(-1))\Pr[S_{2}^{\langle- 1,*,*\rangle}|\sigma_2=2]\Pr[
\sigma_2=2]+\mu_2q^2_{2}\geq 0.
\]
Information state $S_{2}^{\langle- 1,*,*\rangle}$ always leads the planner to recommend $\sigma_2=2$, therefore 
\[\Pr[S_{2}^{\langle- 1,*,*\rangle}|\sigma_2=2]\Pr[\sigma_2=2]=\Pr[S_{2}^{\langle- 1,*,*\rangle},\sigma_2=2]=\Pr[S_{2}^{\langle- 1,*,*\rangle}]
\]
According to Table \ref{RecTable}, the above is true for any action $j$ and exploitation state $S_{t}^{\vec{z}}\in \Gamma_t^{j+}$, which all have in common a corresponding recommendation $\sigma_t=j$. 
For this reason, for $S_{t}^{\vec{z}}\in \Gamma_t^{j+}$, we can replace $\Pr[S_{t}^{\vec{z}}|\sigma_t=j]\Pr[\sigma_t=j]$ by $\Pr[S_{t}^{\vec{z}}]$.

Recall that $q_2^2$ not only must satisfy the above constraint but also satisfy the second part of (\ref{qdefpart1}). Hence,
\[
(1+\mu_2)\Pr[S_{2}^{\langle - 1,*,*\rangle}]+\mu_2q^2_{2}\geq 0\ \quad \textnormal{and} \quad q_2^2\leq \Pr[S_{2}^{\langle0,*,*\rangle}]
\]
Since, $\mu_2=2p_2^1-1$, $\Pr[S_{2}^{\langle0,*,*\rangle}]=p_1^0$, and $\Pr[S_{2}^{\langle - 1,*,*\rangle}]=p_1^{- 1}$, we have
\[
q_2^2=\min(\frac{2p_2^1p_1^{-1}}{1-2p_2^1},p_1^0)\;.
\]
We proceed to calculate the first positive value of $q_t^3$. Due to the assumption $\mu_2>\mu_3$, agent 2 can deduce that action 3 has not been explored yet, consequently she would definitely not follow a recommendation to choose action 3, therefore, $q_{2}^3=0$.
Consider recommending action $3$ to agent $3$, i.e., $\sigma_3=3$.
This recommendation can occur when the planner is in either exploitation state $\langle- 1,- 1,*\rangle$, or exploration state $\langle 0,- 1,*\rangle$ . 
The BIC constraint is:
\[
\E[u_{3}(3)-u_{3}(1)|\sigma_3=3]\Pr[\sigma_t=3]=
2p_{3}^{1}\Pr[S_{3}^{\langle- 1,- 1,*\rangle}]+\mu_3 q^3_{3}\geq 0
\]
We again maximize over $q_3^3$, subject to the second constraint in (\ref{qdefpart1}) as well, i.e.,
\[
q_3^3\leq\frac{2p_{3}^{1}p_{1}^{-1}p_{2}^{-1}}{1-2p_{3}^{1}} \quad \textnormal{and} \quad q_3^3\leq \Pr[S_{3}^{\langle0,-1,*\rangle}]+\Pr[S_{3}^{\langle0,*,*\rangle}]=p_1^0-q_2^2p_2^1
\]
Since for $t=3$ and $j=3$, we assume that $q^2_2=A^2_2$, we have
$$q_3^3\leq A_3^3=\frac{2p_{3}^{1}p_{1}^{-1}p_{2}^{-1}}{1-2p_{3}^{1}}<\frac{2p_{2}^{1}p_{1}^{-1}p_{2}^{-1}}{(1-2p_{2}^{1})}=A_2^2p_2^{-1}=q_2^2p_2^{-1}=
q_2^2(1-p_2^{1})\leq p_1^0-q_2^2p_2^1$$
it implies that,
\begin{equation}\label{q3}
q_3^3=\min(\frac{2p_{3}^{1}p_{1}^{-1}p_{2}^{-1}}{1-2p_{3}^{1}},q_2^2p_2^{-1})
\end{equation}
Since $0<p_1^{-1},p_2^{-1},p_2^1,p_3^1<1$, and $\mu_3=(2p_3^1-1)<0$, we get that both the numerator and the denominator of $q_3^3$ are positive, 
therefore 
\begin{equation}\label{posq3}
q_3^3>0
\end{equation}

We now prove by induction on agent $t$, the following:
\begin{equation}\label{indClaim}
q_{t}^3\leq p_2^{-1}q_{t-1}^2
\end{equation}
This assumption assures that every time $\hat{\pi}$ recommends $\sigma_t=3$, it is done after action $2$ has been explored. (This will be clear after we define $f^2(y)$ and $f^3(y)$, however, observe that the exploration rate of the third action is bounded by the exploration of the second action up to the previous agent times the probability that the second action realization is $-1$.)
This will imply that $\Pr[S_t^{\langle0,*,-1\rangle}]=\Pr[S_t^{\langle0,*,1\rangle}]=\Pr[S_t^{\langle0,1,-1\rangle}]=0$.
This will simplify the derivation, as a recommendation $\sigma_t=2$ can come from only two exploitation states, $S_t^{\langle- 1, 1,*\rangle}, S_t^{\langle0, 1,*\rangle}$, or from $S_t^{\langle0, *,*\rangle}$, the only exploration driven state that may cause recommendation for action $2$.

For the induction base, consider agent $t=3$. Indeed, $q_3^3\leq q_2^2p_2^{-1}$ from (\ref{q3}).
For the induction step, assume that (\ref{indClaim}) holds for every time $t<t_0$.
The exploration rates, $q_t^2$ and $q_{t+1}^3$ for each agent $t=t_0$ can be derived from the following the constraints in (\ref{qdefpart1}).
Starting with $q_t^2$.
\begin{equation}\label{eq2}
\E[u_{t}(2)-u_{t}(1)|\sigma_t=2]\Pr[\sigma_t=2]=2\Pr[S^{\langle- 1,1,*\rangle}_t]+\mu_2q_{t}^2+\Pr[S_t^{\langle0,1,*\rangle}]\geq 0
\end{equation}
The probabilities of each mentioned state are as follows.
\begin{itemize}
\item $\Pr[S^{\langle- 1, 1,*\rangle}_t]=p_1^{-1}p_2^1$. Also, note that $S^{\langle- 1, 1,*\rangle}_t$ is a terminal state.
\item $\Pr[S_t^{\langle0, 1,*\rangle}]$ is the intersection of the following events:
\begin{itemize}
\item Action 2 has been explored before agent $t$, i.e. $\sum_{\tau=2}^{t-1}q_\tau^2$, so (implicitly) action 1 has already sampled and $x_1=0$.
\item $\Pr[X_2=1]=p_2^1$
\item This is a terminal state, so no further events.
\end{itemize}
\end{itemize}
Combining all with (\ref{eq2}),
\[
2p_{1}^{-1}p_{2}^{1}+(2p_{2}^{1}-1)q_{t}^2
+p_{2}^{1}\sum_{\tau=2}^{t-1}{q^2_{\tau}}\geq0
\]
and $q_t^2=\Pr_{\hat{\pi}}[\sigma_t=2, S_t^{\langle0,*,*\rangle}]\leq \Pr_{\hat{\pi}}[S_t^{\langle0,*,*\rangle}]\leq p_1^0-\sum_{\tau=2}^{t-1}{q_{\tau}^2}$. So, 
\[
q^2_t=min(\frac{2p_{1}^{-1}p_{2}^{1}+p_{2}^{1}\sum_{\tau=2}^{t-1}{q^2_{\tau}}}{1-2p_{2}^{1}},
p_1^0-\sum_{\tau=2}^{t-1}{q_{\tau}^2})
\]
As for $q_t^3$, the recommendation $\sigma_t=3$ can come from the exploitation states $S_t^{\langle- 1,- 1, 1\rangle}$ and $S_t^{\langle0,- 1,1\rangle}$ (recall that $\Pr[\sigma_3=3|S_3^{\langle- 1, - 1,*\rangle}]=1$, which implies that agent $3$ will perform action $3$, and therefore any agent $t\geq 4$ has $\Pr[S_t^{\langle- 1, - 1,*\rangle}]=0$), or from the only exploration state relevant for agent $t$, $S_t^{\langle0, - 1,*\rangle}$. Hence, the BIC constraint is
\[
\E[u_t(3)-u_t(1)|\sigma_t=3]\Pr[\sigma_t=3] =2\Pr[S_{t}^{\langle- 1,- 1,1\rangle}]+\mu_3 q_{t}^3+\Pr[S_{t}^{\langle0,- 1,1\rangle}]\geq 0
\]
substituting the probabilities we have
\[
2p_{1}^{-1}p_{2}^{-1}p_{3}^{1}+(2p_{3}^{1}-1)q^3
_{t}+p_{3}^{1}\sum_{\tau=3}^{t-1}{q^3
_{\tau}}=0
\]
Note that $\sum_{\tau=3}^{t-1}{q^3
_{\tau}}$ implicitly states that $x_1=0$ and that $x_2=-1$, or else action 3's reward might has been revealed by exploitation, or it still unknown, but did not revealed by exploration.
In order for $q^3_t$ to be a valid, it must also satisfy
\[
q_t^3=\Pr[\sigma_t=3,S_t^{\langle0,-1,*\rangle}]\leq \Pr[S_t^{\langle0,-1,*\rangle}]=p_2^{-1}\sum_{\tau=2}^{t-1}{q^2_{\tau}}-\sum_{\tau=3}^{t-1}{q^3_{\tau}}\;
\]
where $p_2^{-1}\sum_{\tau=2}^{t-1}{q^2_{\tau}}$ is the probability that there was an agent before agent $t$ that explored action 2, and $x_2=-1$. From this probability we subtract the probability that action 3 has been explored, i.e., $\sum_{\tau=3}^{t-1}{q^3_{\tau}}$, so that up until agent $t$ the state is $\langle0,-1,*\rangle$
Therefore,
\[
q^3_{t}=\min 
(\frac{2p_{1}^{-1}p_{2}^{-1}p_{3}^{1}+p_{3}^{1}\sum_{\tau=3}^{t-1}{q^3
_{\tau}}}{1-2p_{3}^{1}} 
,p_2^{-1}\sum_{\tau=2}^{t-1}{q^2_{\tau}}-\sum_{\tau=3}^{t-1}{q^3_{\tau}})
\]
We can upper bound $q_t^3$ as follows
\[
q^3_{t}\leq \frac{2p_{1}^{-1}p_{2}^{-1}p_{3}^{1}+p_{3}^{1}\sum_{\tau=3}^{t-1}{q^3
_{\tau}}}{1-2p_{3}^{1}}\leq
\frac{2p_{1}^{-1}p_{2}^{-1}p_{2}^{1}+p_{2}^{1}\sum_{\tau=3}^{t-1}{q^3_{\tau}}}{1-2p_{2}^{1}}
\]
the second inequality is correct due to the assumption that $p_3^1<p_2^1$.
From the induction hypothesis (\ref{indClaim}) we know that $q^3_{\tau}\leq p_2^{-1}q^2_{t-1}$ for every $\tau\leq t-1$. Since, for $t$ and $j=3$ we assume that $q^2_{t-1}=A^2_{t-1}$, we have,
\[
q_t^3\leq\frac{2p_{1}^{-1}p_{2}^{-1}p_{2}^{1}+p_{2}^{1}\sum_{\tau=3}^{t-1}{q^3_{\tau}}}{1-2p_{2}^{1}}\leq 
\frac{2p_{1}^{-1}p_{2}^{-1}p_{2}^{1}+p_{2}^{1}p_2^{-1}\sum_{\tau=2}^{t-2}{q^2
_{\tau}}}{1-2p_{2}^{1}}= 
p_2^{-1}A_{t-1}^2=p_2^{-1}q_{t-1}^2\;.
\]
By using the induction hypothesis again, with (\ref{posq3}), we get $0<\sum_{\tau=3}^{t-1}{q^3_{\tau}}<p_2^{-1}\sum_{\tau=2}^{t-2}{q^2_{\tau}}$, thus
\[
q_t^3\leq p_2^{-1}q_{t-1}^2\leq p_2^{-1}q_{t-1}^2+ p_2^{-1}\sum_{\tau=2}^{t-2}{q^2_{\tau}}-\sum_{\tau=3}^{t-1}{q^3_{\tau}}=p_2^{-1}\sum_{\tau=2}^{t-1}{q^2_{\tau}}-\sum_{\tau=3}^{t-1}{q^3_{\tau}}
\]
which completes the proof of (\ref{indClaim}) and the proof of the theorem.
\end{proof*}

\begin{proof*}{Proof of Lemma \ref{qAfternj}}

First, we show that if action $j$ is recommended for agent $t>n_{j-1}$, then action $(j-1)$'s reward has been observed. If $t>n_{j-1}$ then $q_t^{j-1}=0$. From the constraints in (\ref{qdefpart1}) regarding action $j-1$ for $t>n_{j-1}\geq j$, as the ``gain'' part is strictly positive and the ``loss'' part is strictly negative, we get that $\sum_{\Gamma_t^{(j-1)-}}\Pr_{\hat{\pi}}[S_t^{\vec{z}}]=0$.
Therefore, all the states for which action $j$ is explored before action $j-1$ are infeasible for agent $t$ as they were for $t\leq n_{j-1}$ and we get to continue the induction without relying on $q_{t}^{j}\leq p_{j-1}^{-1}A_{t-1}^{j-1}$. Hence, we get exactly the same constraints for action $j$ as for the case of $t\leq n_{j-1}$, i.e.,
\[
\E[u_t(j)-u_t(1)|\sigma_t=j]\Pr[\sigma_t=j] =
2p_j^1\Pi_{i<j}p_{i}^{-1}+(2p_{j}^{1}-1)q^j
_{t}+p_{j}^{1}\sum_{\tau=j}^{t-1}{q^j
_{\tau}}\geq 0
\]
and also,
\[
q_t^j\leq p_{j-1}^{- 1}\sum_{\tau=j-1}^{t-1}{q^{j-1}_{\tau}}-\sum_{\tau=j}^{t-1}{q^j_{\tau}}
\]
which completes the proof of the lemma.
\end{proof*}

\begin{proof*}{proof of Lemma \ref{APositive}}
The proof is by induction over $t$.
For the induction base consider $t=j$.
From Lemma \ref{corq3bigger0}, we have that $A_j^j\geq q_j^j>0$, therefore
\[ 
A_{j}^j =\frac{2p_{j}^{1}\Pi_{i<j}p_{i}^{-1}}{1-2p_{j}^{1}}<
\frac{2p_{j}^{1}\Pi_{i<j}p_{i}^{-1}+p_{j}^{1}{q^j_{j}}}{1-2p_{j}^{1}}=A_{j+1}^j
\]
For the inductions step, we assume that the induction hypothesis holds for $t$ and prove it for $t+1$.
From the inductive hypothesis we have that $A_{t-1}^j< A_{t}^j$. If $q^j_{t}>0$, we get
\[ 
A_{t}^j =\frac{2p_{j}^{1}\Pi_{i<j}p_{i}^{-1}+
p_{j}^{1}\sum_{\tau=j}^{t-1}{q^j_{\tau}}}{1-2p_{j}^{1}}<
\frac{2p_{j}^{1}\Pi_{i<j}p_{i}^{-1}+p_{j}^{1}\sum_{\tau=j}^{t}{q^j_{\tau}}}{1-2p_{j}^{1}}=A_{t+1}^j
\]
which proves the lemma.
\end{proof*}

\begin{proof*}{Proof of Lemma~\ref{BigThm2}}
We show this lemma by induction over action index $j$. 
For the base of the induction, consider action $j=2$.
We prove the case of action $2$ by induction over $t$. For this induction we use the base case of $t=n_1+1=2$ (since agent $1$ always explores action $1$). We have
\begin{enumerate}%[resume]
\item $B_2^2\geq q_2^2 > 0=B_1^2$.
\item $0<q_2^2$, directly from Lemma \ref{qjPositive}.
\item $q_2^1=0$ .
\item If $q_2^2=B_2^2=p_1^0(>0)$, we show by induction that for every $i\geq1$ it holds that $q_{2+i}^2=B_{2+i}^2=0$. For base consider $i=1$. Then by using Lemma \ref{APositive} we get
\[q_{3}^2=min(A_{3}^2,B_{3}^2)=min(A_{3}^2,p_1^0-q_2^2)=min(A_{3}^2,p_1^0-p_1^0)=min(A_3^2,0)=0\;.\]
For the induction step assume this property holds for $i-1$ and show for $i$. From the induction hypothesis and Lemma \ref{APositive} we get
\[q_{2+i}^2=min(A_{2+i}^2,B_{2+i}^2)=min(A_{2+i}^2,B_{2+i-1}^2-q_{2+i-1}^t)=min(A_{2+i}^2,0)=0\]
\end{enumerate}
For the induction step we assume that hypothesis of the lemma holds for action $2$ and for every agent $\leq t-1$, and show it for $t$.
\begin{enumerate}
\item From the induction hypothesis $q_{t-1}^2> 0$, so we get
\[B_{t-1}^2=p_1^0-\sum_{m=2}^{t-2}{q_{m}^2} > p_1^0-\sum_{m=2}^{t-1}{q_{m}^2}= B_t^2\]
\item Since $B_{t-1}^2 > B_t^2$ and that $q_{t-1}^2=A_{t-1}^2$ (or else $t>n_2$), and we know that $A_{t-1}^2>0$ from Lemma \ref{APositive}. Therefore,
\[q_t^j=\min(A_t^j,B_t^2)<\min(A_{t}^j,B_{t-1}^2)=\min(A_{t}^j,q_{t-1}^2)>0\]
\item If $q_t^j=B_t^2(>0)$, we show by induction that for every $i\geq1$ it holds that $q_{t+i}^2=B_{t+i}^2=0$. For base consider $i=1$. Then by using Lemma \ref{APositive} we get
\[q_{t+1}^2=min(A_{t+1}^2,B_{t+1}^2)=min(A_{t+1}^2,B_t^2-q_t^2)=min(A_{t}^2,0)=min(A_t^2,0)=0\;.\]
For the induction step consider that this property holds for $i-1$ and show for $i$. From the induction hypothesis and Lemma \ref{APositive} we get
\[q_{t+i}^2=min(A_{t+i}^2,B_{t+i}^2)=min(A_{t+i}^2,B_{t+i-1}^2-q_{t+i-1}^2)=min(A_{t+i}^2,0)=0\]
\end{enumerate}
So the hypothesis of the lemma holds for action $j=2$.

We now assume that it holds for every action $\leq j-1$ and show it for action $j$, again by induction over $t$.
For base, consider $t=j$
\begin{enumerate}
\item $B_j^j\geq q_j^j > 0=B_{j-1}^j$.
\item $0<q_j^j$, directly from Lemma \ref{qjPositive}.
\item If $q_j^j=B_j^j(>0)$, we show by induction that for every $i\geq1$ it holds that $q_{j+i}^j=B_{j+i}^j=0$. For the induction base consider $i=1$. Then by using Lemma \ref{APositive} and the since we assume that $t=j>n_{j-1}$, it holds that $q^{j-1}_{j}=0$ therefore 
\[q_{j+1}^j=min(A_{j+1}^j,B_{j+1}^j)=min(A_{j+1}^j,B_j^j+p_{j-1}^{- 1}{q^{j-1}_{j}}-q_{j}^j)=min(A_{j+1}^j,p_{j-1}^{- 1}{q^{j-1}_{j}})=min(A_{j+1}^j,0)=0\;.\]
For the induction step consider that this property holds for $i-1$. Since $t=j>n_{j-1}$, it holds that $q^{j-1}_{j+i-1}=0$. Using Lemma \ref{APositive}, we get
\[q_{j+i}^j=min(A_{j+i}^j,B_{j+i}^j)=min(A_{j+i}^j,B_{j+i-1}^j-q_{j+i-1}^j+p_{j-1}^{- 1}{q_{j-1+i}^{j}})=min(A_{j+i}^j,0)=0\]
\end{enumerate}
For the induction step we assume that the hypothesis of the lemma holds for agent $ t-1$ and for both actions $j-1$ and $j$. We now show that it holds for agent $t$ with action $j$.
\begin{enumerate}
\item Since $t>n_{j-1}$ we get
\[\sum_{\tau=j-1}^{t}{q^{j-1}_{\tau}}=\sum_{\tau=j-1}^{t-1}{q^{j-1}_{\tau}}=\sum_{\tau=j-1}^{n_{j-1}}{q^{j-1}_{\tau}}\]
And from the induction hypothesis $0<q_{t-1}^j$, hence
\[B_t^j=p_{j-1}^{- 1}\sum_{\tau=j-1}^{n_{j-1}}{q^{j-1}_{\tau}}-\sum_{\tau=j}^{t-1}{q^j_{\tau}}>
p_{j-1}^{- 1}\sum_{\tau=j-1}^{n_{j-1}}{q^{j-1}_{\tau}}-\sum_{\tau=j}^{t-2}{q^j_{\tau}}=B_{t-1}^j\]
\item From the induction hypothesis and $t\leq n_j$ we get $0<B_t^j$.
From Lemma \ref{APositive}, $0<A_t^j$, therefore $0<q_t^j$. 
\item If $q_t^j=B_t^j(>0)$, we show by induction that for every $i\geq1$ it holds that $q_{t+i}^j=B_{t+i}^j=0$. 
For the induction base consider $i=1$. Then by using Lemma \ref{APositive} and the since $t>n_{j-1}$, it holds that $q^{j-1}_{t}=0$,therefore,
\[q_{t+1}^j=min(A_{t+1}^j,B_{t+1}^j)=min(A_{t+1}^j,B_j^j+p_{j-1}^{- 1}{q^{t-1}_{j}}-q_{t}^j)=min(A_{t+1}^j,p_{j-1}^{- 1}{q^{t-1}_{j}})=min(A_{t+1}^j,0)=0\;.\]
For the induction step consider that this property holds for $i-1$. Since $t=j>n_{j-1}$, it holds that $q^{j-1}_{t+i-1}=0$. Using Lemma \ref{APositive}, we get
\[q_{t+i}^j=min(A_{t+i}^j,B_{t+i}^j)=min(A_{t+i}^j,B_{t+i-1}^j-q_{t+i-1}^j+p_{j-1}^{- 1}{q_{t-1+i}^{j}})=min(A_{t+i}^j,0)=0\]
\end{enumerate}
\end{proof*}

\begin{proof*}{Proof of Lemma~\ref{lemmaRho}}

For every $t_1<t_2$, it holds that $[\sigma_{t_2}=j, S_{t_2}^{\vec{z}}\in \Gamma^{j-}_{t_2}]\cap[\sigma_{t_1}=j, S_{t_1}^{\vec{z}}\in \Gamma^{j-}_{t_1}]=\emptyset$, since if $\sigma_{t_1}=j$ then $S_{t_2}^{\vec{z}}\notin \Gamma^{j-}_{t_2}$. From Theorem \ref{qBeforeLast}, it implies that
\[
\Pr[\exists t: \sigma_t=j, S_t^{\vec{z}}\in \Gamma^{j-}_t]=\sum_{t=1}^{T}\Pr[\sigma_t=j, S_t^{\vec{z}}\in \Gamma^{j-}_t]=\sum_{t=j}^{n_j}{q^{j}_{t}}=B_{n_j}^j+\sum_{t=j}^{n_j-1}{q^{j}_{t}}
\]
We now prove by induction on action $j$ the following:
\[\Pr[\exists t: \sigma_t=j, S_t^{\vec{z}}\in \Gamma^{j-}_t]=\rho_j\]
The induction base is done for action $j=2$ as follows
\[
\Pr[\exists t: \sigma_t=2, S_t^{\vec{z}}\in \Gamma^{2-}_t]=q_{n_2}^2+\Sigma_{t=1}^{n_2-1}q_t^2=p_1^0-\sum_{\tau=2}^{n_2-1}{q^2_{\tau}}+\sum_{t=1}^{n_2-1}q_t^2=p_1^0
\]
Suppose the induction hypothesis is true for action $j-1$. For action $j$ we have,
\[
\Pr[\exists t: \sigma_t=j, S_t^{\vec{z}}\in \Gamma^{j-}_t]=q_{n_j}^j+\Sigma_{t=1}^{n_j-1}q_t^j=p_{j-1}^{-1}\sum_{\tau=j-1}^{n_j-1}{q^{j-1}_{\tau}}-\sum_{\tau=j}^{n_j-1}{q^j_{\tau}}+\sum_{t=1}^{n_j-1}q_t^j=p_{j-1}^{-1}\sum_{\tau=j-1}^{n_j-1}{q^{j-1}_{\tau}}
\]
From the induction hypothesis we have that 
\[
\Pr[\exists t: \sigma_t=j-1, S_t^{\vec{z}}\in \Gamma^{(j-1)-}_t]=\sum_{t=j-1}^{n_{j-1}}{q^{j-1}_{t}}=\rho_{j-1}
\]
%and $\rho_j$'s definition,
We get 
\[
\Pr[\exists t: \sigma_t=j, S_t^{\vec{z}}\in \Gamma^{j-}_t]=p_{j-1}^{-1}\sum_{\tau=j-1}^{n_j-1}{q^{j-1}_{\tau}}=p_{j-1}^{-1}\rho_{j-1}=\rho_j
\]
which completes the proof of the lemma.
\end{proof*}

\begin{proof*}{Proof of Lemma~\ref{f2smallest}}
From the definitions of $f^j(y)$, we get
\[
f^j(y)+1= \textnormal{argmax}_{t+1}(\sum_{\tau=j}^{t-1}q_{\tau}^{j}<y\rho_j)=\textnormal{argmax}_{t+1}(\sum_{\tau=j}^{t-1}p_{j}^{-1}q_{\tau}^{j}<yp_{j}^{-1}\rho_j)
\]
Since $q_{\tau+1}^{j+1}\leq p_j^{-1}q_\tau^j$ for every agent $\tau<t\leq n_{j}$ from Lemma \ref{BigThm}, and $p_{j}^{-1}\rho_j=\rho_{j+1}$, we get
\[
f^j(y)<f^j(y)+1\leq \textnormal{argmax}_{t+1}(\sum_{\tau=j}^{t-1}q_{\tau+1}^{j+1}<y\rho_{j+1})=
\textnormal{argmax}_{t+1}(\sum_{\tau=j+1}^{t}q_{\tau}^{j+1}<y\rho_{j+1})=
\]
\[
\textnormal{argmax}_{t}(\sum_{\tau=j+1}^{t-1}q_{\tau}^{j+1}<y\rho_{j+1})=f^{j+1}(y)
\]
hence $f^{j}(y)<f^{j+1}(y)$ for every $y\in[0,1]$.
\end{proof*}

\begin{proof*}[Proof of Lemma~\ref{welldefined}]
The first part, $f^i(y) \ne f^{j}(y)$, is direct consequence of Lemma \ref{f2smallest}. As for the second part, for every action $j$ and each agent $t$, $q_t^j\geq 0$ and $\sum_{t=1}^{n_j}q_t^j=\rho_j$, by Lemma \ref{lemmaRho}.
We also have monotone increasing exploration between agents- for every $j\leq t < n_j-1$ and for every $j\in\{2,...,k-1\}$, $q^j_t<q^{j}_{t+1}$, as a result of Lemma \ref{APositive} ($A^j_t<A^{j}_{t+1}$) and Lemma \ref{qBeforeLast} ($q^j_t=A^j_t$).

Let $y\in[0,1]$. Then $f^j(y)=\textnormal{argmax}_{t}(\sum_{\tau=1}^{t-1}q_{\tau}^j<y\rho_j)$. From the above get 
\[
0=q_1^j\leq \sum_{\tau=1}^{t-1}q^j_\tau\leq \sum_{\tau=1}^{n_{j}}q^j_\tau= \rho_j
\]
which completes the proof.
\end{proof*}

\begin{proof*}{Proof of Lemma~\ref{stoDom}}

For action $j\in \{2,\dots,k\}$ let $(\psi^j)_{j=1}^k$ denote the exploration rates used in $\pi_A$, i.e., \\$\psi^j_t=\sum_{S_t^{\vec{z}}\in{\Gamma_t^{j-}}}\Pr_{\pi_A}[S_t^{\vec{z}},\sigma_t=j]$.
Both $\pi_A$ and $\hat{\pi}$ are BIC policy algorithms, with the same recommendations for exploitation states. This implies that the only difference between the probabilities of $\pi_A$ and $\hat{\pi}$ to know action $j$'s reward at time $t$ (i.e., the probability $S^{\vec{z}}_t[j]\neq*$) is the difference between the sum of exploration rates of action $j$ until time $t$. Meaning that for every BIC policy $\pi$ with the same recommendations for exploitation states like $\hat{\pi}$,
\[
\Pr_{\pi}[S_t^{\vec{z}}\wedge \vec{z}[j]\ne *]=
\sum_{\tau=1}^{t-1}\sum_{S_\tau^{\vec{z}}\in{\Gamma_\tau^{j-}}}
\Pr_{\pi}[S_\tau^{\vec{z}},\sigma_\tau=j]+
\Pr_{\pi}[\forall i<j: x_i=- 1, j<t]
\]
The sum
$\sum_{\tau=1}^{t-1}
\sum_{S_\tau^{\vec{z}}\in{\Gamma_\tau^{j-}}}
\Pr_{\pi}[S_\tau^{\vec{z}},\sigma_\tau=j]$ is the sum of all exploration rates of action $j$ until $t-1$, therefore
\[
\Pr_{\hat{\pi}}[S_t^{\vec{z}}\wedge \vec{z}[j]\ne *]- \Pr_{\pi_A}[S_t^{\vec{z}}\wedge \vec{z}[j]\ne *]=\Sigma_{\tau=j}^{t-1}{q_\tau^j} -\Sigma_{\tau=j}^{t-1}{\psi_{\tau}^j}
\]
for every agent $t$.

For contradiction, suppose that there exists a prior, $D$, an action $j$ and time $t_1$ such that
\begin{equation}\label{contr}
\Pr_{\hat{\pi}}[S_{t_1}^{\vec{z}}\wedge \vec{z}[j]\ne *]<\Pr_{\pi_A}[S_{t_1}^{\vec{z}}\wedge \vec{z}[j]\ne *]\;,
\end{equation}
where $j$ is the least such action and $t_1$ is the least such agent for action $j$.

Every $q_t^j$ was calculated inductively so that it would attain a maximum value and maintain the constraints in (\ref{qdefpart1}), independent of the other actions. Therefore it is not possible that there exists time $t_0<t_1$ such that $\psi_{t}^j=q_{t}^j$ for every $t<t_0$ and $\psi_{t_0}^j>q_{t_0}^j$.
If $\psi_{t}^j=q_{t}^j$ for every $t<t_1$, then $\psi_{t_1}^j\leq q_{t_1}^j$.

So let $t_0<t_1$ denote the first time that there is lower exploration rate in $\pi_A$ for action $j$ rather than in $\hat{\pi}$, i.e., $t_0=\textnormal{argmin}_{t<t_1}\psi_t^j<q_t^j$ and $\psi_{t}^j=q_{t}^j$ for every $t<t_0$. Hence, the probability that $\pi_A$ is in exploitation state at time $t=t_0+1$ w.r.t. $\hat{\pi}$ is lower, i.e.,
\[
\sum_{S_t^{\vec{z}}\in \Gamma_t^{j+}}\Pr_{\pi_A}[S_t^{\vec{z}}|\sigma_t=j] =
\sum_{S_t^{\vec{z}}\in \Gamma_t^{j+}}\Pr_{\pi_A}[S_t^{\vec{z}}] <
\sum_{S_t^{\vec{z}}\in \Gamma_t^{j+}}\Pr_{\hat{\pi}}[S_t^{\vec{z}}]=
\sum_{S_t^{\vec{z}}\in \Gamma_t^{j+}}\Pr_{\hat{\pi}}[S_t^{\vec{z}}|\sigma_t=j]\;,
\]
and the probability that $\pi_A$ is in exploration state at time $t=t_0+1$ w.r.t. $\hat{\pi}$ is higher, i.e.,
\[
\sum_{S_t^{\vec{z}}\in \Gamma_t^{j-}}\Pr_{\hat{\pi}}[S_t^{\vec{z}}] < \sum_{S_t^{\vec{z}}\in \Gamma_t^{j-}}\Pr_{\pi_A}[S_t^{\vec{z}}]\;.
\]
Since $\E[u_t(j)-u_t(1)|S^{\vec{z}}_t,\sigma_t=j]$ and $\mu_j$ depend only on the prior $D$, their value remain the same. As a result of $\psi_{t_0}^j<q_t^j$, the exploration rate of action $j$ in $\pi_A$ at the next time, $t_0+1$ that maintains the constraint is smaller than it could have been while using $\hat{\pi}$, and for every time $t>t_0$ it holds that $\psi_{t}^j\leq q_t^j$. This is true for every time $t_0<t_1$ such that $\psi_{t_0}^j<q_{t_0}^j$ and therefore contradicts (\ref{contr}). Therefore, $\psi_t^j\leq q_t^j$ for every action $j$ and agent $t$.

The difference between the policies indicates that there is an action $j$ agent $t$ with $\psi_t^j\neq q_t^j$. Since $\psi_t^j\leq q_t^j$, 
it implies that $\psi_t^j < q_t^j$ and we get
\begin{equation}
\Pr_{\hat{\pi}}[S_{t+1}^{\vec{z}}\wedge \vec{z}[j]\ne *]<\Pr_{\pi_A}[S_{t+1}^{\vec{z}}\wedge \vec{z}[j]\ne *]\;,
\end{equation}
Which completes the proof.
%smaller exploration rate for action $j$ in $\pi_A$ w.r.t. $\hat{\pi}$, i.e., $\psi_t^j<q_t^j$.
\end{proof*}

\begin{proof*}{Proof of Theorem~\ref{zeroReg}}
For the sake of contradiction, assume that there exists a prior $D$, such that
\[
SW_{T}(OPT)=\E_D[\Sigma_{t=1}^T u_t(\pi_{opt}(h_{t-1}))]>\E_D[\Sigma_{t=1}^T u_t(\hat{\pi}(h_{t-1}))]=SW_{T}(\hat{\pi})
\]
$\pi_{opt}$ maximizes expected social welfare, therefore it is easy to see that $\pi_{opt}$ must give the same recommendation as in Table ~\ref{RecTable} whenever that social planner is in exploitation state. Since $\pi_{opt}$ and $\hat{\pi}$ are different,there is at least one agent $t$, that might not receive the same recommendation from the two policies i.e., $(v_t)_{\pi_{opt}}\ne(v_t)_{\hat{\pi}}$. We have already established that this scenario can only happen when the planner is in exploration state. Therefore, it is a result of difference between exploration rates in both policies for at least one action $j$. Let $t_0$ and $j$ denote the first indexes of such agent and action, respectively. It is easy to see that if $t_0>n_j$ then $\pi_{opt}$ is not in a terminal state and therefore does not maximizes social welfare. So assume $t_0\leq n_j$. Let $(\psi_{t}^{j})_{\pi_{opt}}$ denote the exploration rate used by $\pi_{opt}$, and let $\epsilon$ denote the difference between this exploration rate, and in $\hat{\pi}$, i.e., $\epsilon=q_{t_0}^{j}-(\psi_{t_0}^{j})_{\pi_{opt}}$. Let $\pi$ be a policy identical to $\pi_{opt}$ that substitutes $(\psi_{t_0}^{j})_{\pi_{opt}}$ with $q_{t_0}^{j}$. Recommendation policy $\pi$ is a well defined BIC policy as $t_0$ is the first index of $j$ for which $(\psi_{t_0}^{j})_{\pi_{opt}}$ is not the maximum value, $q_{t_0}^j$ is a BIC exploration rate, and from Lemma \ref{stoDom}, the rest of the exploration rates can still be used. 

Let $t_1$ be the agent such that 
$t_1=\lceil{t_0+\frac{1}{p_{j}^1}-1}\rceil$. Since $t_0\leq n_j\leq n_k$ and $p_j^1\geq p_k^1$, it holds that $t_1\leq\lceil{n_k+\frac{1}{p_{k}^1}-1}\rceil=T$.
Now, since 
\[
SW_{T}(\pi)-SW_{T}(\pi_{opt})\geq \E_D[\Sigma_{t=t_0}^{t_1} u_t(\pi_{opt}(h_{t-1}))]-\E_D[\Sigma_{t=t_0}^{t_1} u_t(\pi(h_{t-1}))]
\]
We get
\[
SW_{T}(\pi)-SW_{T}(\pi_{opt})\geq \epsilon ((t_1-t_0)p_j^1+2p_j^1-1)=\epsilon((\frac{1}{p_{j}^1}-1)p_j^1+2p_j^1-1)=\epsilon p_j^1
\]
contradicting the optimality of $\pi_{opt}$.
\end{proof*}

\begin{proof*}{Proof of Lemma \ref{monoincI}}
Let $j\ne 1$ be some action. The proof is by induction over $t$.\\
For the induction base, consider $t\leq j-1$, for which $i_{t}^{j}=i_{t+1}^j=-1$.\\
For the induction step, we assume the induction hypothesis holds for $t$ and prove it for $t+1$.\\
The left-hand sides of (\ref{defijj}) and (\ref{defitj}) are non-negative since 
$i_{t}^{j}\leq 1$, $X_1<1$, $0<p_n^1$ and $p_j^1>0$. Consequently, the right-hand sides are also non-negative. So from (\ref{defijj}) we deduce $-1<\mu_j<i_{j+1}^j$ and from (\ref{defitj}) and the induction hypothesis we deduce $\omega_{t+1}^j\geq i_{t}^j$ and as a result, $i_{t+1}^j\geq i_{t}^j$.
\end{proof*}

\begin{proof*}{proof of Lemma \ref{incI}}
Let $j\ne 1$ be some action.The proof is by induction over $t$.\\
For the base case, consider $t=j$, for which $i^{j}_j=\mu_j>\mu_{j+1}=i^{j+1}_{j+1}$.\\
% All is left is to show that for every $t\geq j$, such that $i_{t}^j<1$ and $i_{t+1}^{j+1}<1$
% \begin{equation}\label{eqIjt}
% i_{t+1}^j-i_{t}^j\geq 
% i_{t+1}^{j+1}-i_{t}^{j+1}.
% \end{equation}
For the induction step, we assume the induction hypothesis holds for every agent $\leq t$.\\
Consider $t=j$, then according to (\ref{defijj}) $\omega_{j+1}^j$ is the solution to:
\begin{equation}\label{simplejj}
\prod_{n=2}^{j-1}(p_n^{-1})\int_{\mu_j\geq X_1} [\mu_j-X_1]dD_1=
\int_{\mu_{j}\leq X_1\leq \omega_{j+1}^j} [X_1-\mu_j]dD_1 
\end{equation}
Now, since $0<p_j^{-1}<1$ and $\mu_j>\mu_{j+1}$,
\[
\prod_{n=2}^{j-1}(p_n^{-1})\int_{\mu_j\geq X_1} [\mu_j-X_1]dD_1>\prod_{n=2}^{j}(p_n^{-1})\int_{\mu_{j+1}\geq X_1} [\mu_{j+1}-X_1]dD_1
\]
Putting the above inequality with (\ref{simplejj}) for both $j$ and $j+1$ we get,
\[
\int_{\mu_j\leq X_1\leq \omega_{t+1}^j} [X_1-\mu_j]dD_1 >
\int_{\mu_{j+1}\leq X_1\leq \omega_{t+2}^{j+1}} [X_1-\mu_{j+1}]dD_1 
\]
Since $\mu_j>\mu_{j+1}$ (and therefore $-\mu_j<-\mu_{j+1}$) $\omega_{t+2}^{j+1}<\omega_{t+1}^{j}$ must hold.

\item For every $t>j$, according to (\ref{defitj}) $\omega_{t+1}^j$ is the solution to:
\begin{equation}\label{simpletj}
p_j^1\int_{-1\leq X_1\leq i_{t}^j} [1-X_1]dD_1=
\int_{i_{t}^j< X_1\leq \omega_{t+1}^j} [X_1-\mu_j]dD_1 
\end{equation}

Combining the induction hypothesis ($i_{t}^j>i_{t+1}^{j+1}$) with $p_{j+1}^{1}<p_{j}^{1}$ we get
\[
p_j^1\int_{-1\leq X_1\leq i_{t}^j} [1-X_1]dD_1>
p_{j+1}^1\int_{-1\leq X_1\leq i_{t+1}^{j+1}} [1-X_1]dD_1
\]
Putting the above inequality with (\ref{simpletj}) for both $j$ and $j+1$ we get,
\[
\int_{i_{t}^j< X_1\leq \omega_{t+1}^j} [X_1-\mu_j]dD_1 > 
\int_{i_{t+1}^{j+1}<X_1\leq \omega_{t+2}^{j+1}} [X_1-\mu_{j+1}]dD_1 
\]
Now, since $-\mu_j<-\mu_{j+1}$ and we know that $i_{t}^j>i_{t+1}^{j+1}$ from the induction hypothesis, we get that $\omega_{t+2}^{j+1}<\omega_{t+1}^{j}$.
Consequently, for every $t\geq$ it holds that $i_{t+2}^{j+1}\leq i_{t+1}^{j}$.
\end{proof*}

\begin{proof*}{Proof of Lemma \ref{stoDomcont}}
For contradiction, suppose that there exists a prior, $D$, an action $j$ and time $t_1$ such that
\begin{equation}
\Pr_{\hat{\pi}}[S_{t_1}^{\vec{z}}\wedge \vec{z}[j]\ne *|\vec{z}[1]=x_1]<\Pr_{\pi_A}[S_{t_1}^{\vec{z}}\wedge \vec{z}[j]\ne *|\vec{z}[1]=x_1]\;,
\end{equation}
where $j$ is the least such action and $t_1$ is the least such agent for action $j$.
It means that $\pi_A$ recommends to some agent $j\leq t<t_1$ to explore action $j$ while $\vec{z}[j]=*$ and $(\sigma_t)_{\hat{\pi}}\ne j$. If $(\sigma_t)_{\hat{\pi}}= 1$ it means that $\theta^t_j=\emptyset$, therefore from Lemma \ref{BICisPartition}, $\pi_A$ is not a BIC policy (as it does not a valid partition policy). So consider exploration driven recommendation (i.e., $S_{t}^{\vec{z}}\wedge \vec{z}[i]= *\wedge \vec{z}[j]= *$), $(\sigma_t)_{\hat{\pi}}= i< j$ (since $i>j$ contradictions Corollary \ref{ActionsbyOrder}). In such a case,
\begin{equation}
\Pr_{\hat{\pi}}[S_{t+1}^{\vec{z}}\wedge \vec{z}[i]\ne *|\vec{z}[1]=x_1]>\Pr_{\pi_A}[S_{t+1}^{\vec{z}}\wedge \vec{z}[i]\ne *|\vec{z}[1]=x_1]\;,
\end{equation}
Hence $\pi_A$ is not stochastic dominant over $\hat{\pi}$.
\end{proof*}

\begin{proof*}{Proof for Theorem \ref{swCont}}
For the sake of contradiction, assume that there exists a prior $D$ and a realization $X_1=x_1$, such that
\[
SW_{T}(OPT)=\E_D[\Sigma_{t=1}^T u_t(\pi_{opt}(h_{t-1}))]>\E_D[\Sigma_{t=1}^T u_t(\hat{\pi}(h_{t-1}))]=SW_{T}(\hat{\pi})
\]
$\pi_{opt}$ maximizes expected social welfare, so according to Lemma \ref{BICisPartition} it must be a partition policy. By differing from $\hat{\pi}$, there exist an action $j$ and a time $t_1$ in which $(\sigma_{t_1})_{\hat{\pi}}=j$ and $(\sigma_{t_1})_{\pi_{OPT}}=i\ne j$, where $j$ is the least such action and $t_1$ is the least such agent for action $j$. 
If $i<j$ then action $i$ is already explored by both planners and $x_i=-1$, and the result is a lower $SW$ for OPT.
If $i>j$, since $\mu_i<\mu_j$ the result is a lower social welfare for OPT.
Overall we get that $SW_{T}(OPT)<SW_{T}(\hat{\pi})$.
\end{proof*}
\section{K actions- Full Algorithm and Missing Proofs}\label{appendix1}
% Table
\begin{table}
    
	\begin{minipage}{\columnwidth}
		\begin{center}
          \begin{tabular}{l|l|l|l|l}
          	\toprule
$X_1$ &$X_j$& $\sigma_t$ & 
\textit{$\E[u_t(j)-u_t(1)|S^{\vec{z}}_t,\sigma_t=j]$} & $\Pr[S_t^{\vec{z}}]$\\ \hline
$-1$ & $*$&$j$ &$2p_j^1$&$\mathbbm{1}[t=j]\Pi_{i<j}p_{i}^{-1}$\\ \hline
$0$ &$1$& $j$ &1&$p_{j}^{1} \sum_{\tau=j}^{t-1}{q^j_{\tau}}$\\
\hline
$-1$ & $1$ & j & $2$ &$\mathbbm{1}[t=j]p_{j}^{1}\Pi_{i<j}p_{i}^{-1}$\\
\hline
$0$ & $*$ & $f^j_t(y)\in \{1,j\}$ & $2p_j^1-1$ &$p_{j-1}^{- 1}\sum_{\tau=j-1}^{t-1}{q^{j-1}_{\tau}}-\sum_{\tau=j}^{t-1}{q^j_{\tau}}$\\
\hline
				\bottomrule
			\end{tabular}
		\end{center}
		\bigskip\centering
	\end{minipage}
		\caption{Extension for states that may recommend on action $j$}\label{tblExtj}
\end{table}%
In this appendix, we extend the algorithm for 3 actions to handle with any number of actions, k.
Recall that when calculating BIC constraints, we consider 3 different reasons for a recommendation, $\sigma_t=j$, as explained in Example \ref{expstates}. To handle with multiple actions' rewards, we abandon the explicit states, e.g., $S_t^{\langle- 1,1,*\rangle}$.
We intentionally dismiss any states where actions are, as we will see that these states are infeasible by $\hat{\pi}$.

\begin{enumerate}
\item \textbf{Exploitation driven recommendation}, action $j$ either has:
\begin{enumerate}
\item \textbf{A known reward}- The planner has already observed action $j$'s reward and it is indeed the maximum possible reward, i.e., $x_j=1$. As we are about to show, such a scenario is possible only when $x_1\in\{0,1\}$ and for every action $1<i<j$, it's reward has been observed, and $x_i=- 1$.
\item \textbf{An unknown reward}- action $j$ is yet to be explored, and every action $i<j$, has been explored and $x_i=-1$.
\end{enumerate}
\item \textbf{Exploration driven recommendation} Action $j$ has not been explored yet, every $i<j$ has been explored, and $x_i=-1$ for $i<j$, whereas $x_1=0$. This implies that the planner is in the exploration state of action $j$. 
\end{enumerate}

In Table \ref{tblExtj}, we are extending the algorithm described in Table \ref{RecTable} to all states that may result in a recommendation for action $j$ when using $\hat{\pi}$. We also added the gain of each state compared to action $1$ (i.e., $\E[u_t(j)-u_t(1)|S^{\vec{z}}_t,\sigma_t=j]$), as well as the probability that the planner is in these states in round $t$ (i.e., $\Pr[S_t^{\vec{z}}|\sigma_t=j]$).

A BIC exploration rate $q_j^t$ for our algorithm is still the maximum value that satisfy the same constraints as before (i.e., (\ref{qdefpart1})). 
\begin{theorem}
For k actions, given $q_2^2, \ldots,q_{t-1}^2,\ldots,q_{j-1}^{j-1},\ldots,q_{t-1}^{j-1}$, for $j>2$,
\begin{equation}
q_{t}^2 = 
     \begin{cases}
       0 &\quad t=1\\
       \min (\frac{2p_{1}^{-1}p_{2}^{1} }{1-2p_{2}^{1}},p_1^0) &\quad t=2\\
        \min(\frac{2p_{1}^{-1}p_{2}^{1}+p_{2}^{1}\sum_{m=2}^{t-1}{q^2_{m}}}{1-2p_{2}^{1}},
p_1^0-\sum_{m=2}^{t-1}{q_{m}^2})&\quad t>2\\
     \end{cases}
\end{equation}
\begin{equation}\label{qj}
q^j_{t} = 
     \begin{cases}
       0 &\quad t<j\\
       \min(\frac{2p_{j}^{1}\Pi_{i<j}p_{i}^{-1}}{1-2p_{j}^{1}},p_{j-1}^{- 1}q^{j-1}_{j-1}) &\quad t=j\\
       \min 
(\frac{2p_{j}^{1}\Pi_{i<j}p_{i}^{-1}+p_{j}^{1}\sum_{\tau=j}^{t-1}{q^j_{\tau}}}{1-2p_{j}^{1}} 
,p_{j-1}^{- 1}\sum_{\tau=j-1}^{t-1}{q^{j-1}_{\tau}}-\sum_{\tau=j}^{t-1}{q^j_{\tau}}) &\quad t>j\\
     \end{cases}
\end{equation}

In addition we show that 
\begin{equation}\label{inc}
q_{t+1}^{j+1}\leq  p_j^{-1}q_{t}^j
\end{equation}
\end{theorem}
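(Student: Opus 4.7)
The proof will go by induction on the action index $j$, with a nested induction on the agent index $t$; it generalizes the $k=3$ argument of Lemma~\ref{BigThm}. The case $j=2$ was already handled there. For the outer induction, fix $j\geq 3$ and assume the entire claim, \emph{including} the auxiliary bound $q_\tau^{i+1}\leq p_i^{-1}q_{\tau-1}^i$ from~(\ref{inc}), holds for every $i\leq j-1$ and every $\tau$. The auxiliary bound is the structural engine of the proof: iterating it shows that under $\hat{\pi}$, whenever action $i$ is recommended for exploration at some time $\tau$, actions $2,\ldots,i-1$ have already been sampled. In particular, any information state $S_t^{\vec{z}}$ with $\vec{z}[j]=*$ yet $\vec{z}[i]\ne *$ for some $i>j$ has probability zero under $\hat{\pi}$, so the only exploitation and exploration states contributing to a recommendation $\sigma_t=j$ are exactly the four rows in Table~\ref{tblExtj}. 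This same ordering argument, as explained after~(\ref{qdefpart1}), also shows that the BIC constraints of $\sigma_t=j$ with respect to actions $i\ne 1$ follow automatically from the one against action $1$, so only the latter needs to be enforced.

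I then perform the inner induction on $t\geq j$. For the base $t=j$, the only non-trivial exploitation state has $\vec{z}=\langle -1,-1,\ldots,-1,*,\ldots,*\rangle$ with probability $\Pi_{i<j}p_i^{-1}$ and gain $2p_j^1$; the unique exploration state has $\vec{z}[1]=0$, $\vec{z}[i]=-1$ for $2\leq i<j$, with probability $p_{j-1}^{-1}q_{j-1}^{j-1}$ and gain $2p_j^1-1$. Substituting these into~(\ref{qdefpart1}) and maximizing over $q$ yields the min formula for $q_j^j$ in~(\ref{qj}). For the step $t>j$, the BIC constraint is the weighted sum over all four rows of Table~\ref{tblExtj}; isolating $q_t^j$ against the feasibility upper bound $p_{j-1}^{-1}\sum_{\tau=j-1}^{t-1}q_\tau^{j-1}-\sum_{\tau=j}^{t-1}q_\tau^j$ reproduces the stated min. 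These computations mirror the $k=3$ derivation and are essentially bookkeeping.

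It remains to establish the auxiliary bound~(\ref{inc}) at the current rank. Taking the first branch of the min in~(\ref{qj}) for $q_{t+1}^{j+1}$ and invoking both $p_{j+1}^1<p_j^1$ (monotonicity of the map $p\mapsto p/(1-2p)$ on the relevant range) and the induction hypothesis $q_\tau^{j+1}\leq p_j^{-1}q_{\tau-1}^j$ for $\tau\leq t$, I estimate
\begin{align*}
q_{t+1}^{j+1}
&\leq \frac{2p_{j+1}^1\Pi_{i\leq j}p_i^{-1}+p_{j+1}^1\sum_{\tau=j+1}^{t}q_\tau^{j+1}}{1-2p_{j+1}^1} \\
&\leq \frac{p_j^{-1}\bigl(2p_j^1\Pi_{i<j}p_i^{-1}+p_j^1\sum_{\tau=j}^{t-1}q_\tau^j\bigr)}{1-2p_j^1}
= p_j^{-1}q_t^j,
\end{align*}
and one checks that $p_j^{-1}q_t^j$ also respects the feasibility upper bound on $q_{t+1}^{j+1}$, so the min is indeed attained. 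This closes both the outer and inner inductions.

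The main obstacle is the choreography of the two inductions: the clean BIC derivation at rank $j$ relies on exploration proceeding in prior order, which is what~(\ref{inc}) at rank $j-1$ supplies; meanwhile (\ref{inc}) at rank $j$ is itself derived from the explicit formula at rank $j$ together with the bound at rank $j-1$. The order must therefore be (i) inherit the ordering structure at rank $j-1$, (ii) derive the formula~(\ref{qj}) at rank $j$, (iii) use it to establish~(\ref{inc}) at rank $j$, which then becomes available when processing rank $j+1$. Once this setup is fixed, the remaining calculations are routine manipulations of the probabilities recorded in Table~\ref{tblExtj}.
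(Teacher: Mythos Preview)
Your plan and the key inequality chain match the paper's, but the choreography in step~(iii) is off by one action index and, as written, is circular. At rank $j$ of the outer induction you try to establish~(\ref{inc}) in the form $q_{t+1}^{j+1}\leq p_j^{-1}q_t^j$, and the first line of your display is $q_{t+1}^{j+1}\leq A_{t+1}^{j+1}$, i.e., the first branch of the min in~(\ref{qj}) at rank $j+1$. But~(\ref{qj}) at rank $j+1$ has not been derived yet; deriving it is precisely what needs the ordering that~(\ref{inc}) at rank $j$ is meant to supply. The paper avoids this loop by proving, at rank $j$ and interleaved with the inner induction on $t$, the index-shifted statement $q_t^j\leq p_{j-1}^{-1}q_{t-1}^{j-1}$ (this is~(\ref{inc}) with $j\mapsto j-1$), which uses only the formula at rank $j$ just derived together with the inner hypothesis $q_\tau^j\leq p_{j-1}^{-1}q_{\tau-1}^{j-1}$ for $\tau\leq t-1$. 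Your displayed chain becomes exactly the paper's once you substitute $j+1\to j$ and $t+1\to t$; it then sits inside the inner induction at rank $j$ rather than reaching forward to rank $j+1$.

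A second, smaller point: your last equality needs $A_t^j=q_t^j$, i.e., that the BIC branch of the min is the binding one. In the paper this is carried as an explicit hypothesis in Lemma~\ref{BigThm} (``assuming $q_{t-1}^{j-1}=A_{t-1}^{j-1}$ and $t\leq n_{j-1}$''), and the complementary regime $t>n_{j-1}$ is handled separately in Lemma~\ref{qAfternj}, where $q_t^{j-1}=0$ forces the exploration state for action $j-1$ to be empty so the auxiliary bound is no longer required. Your sketch should either carry this hypothesis alongside the inner induction or note where the two regimes split.
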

\begin{proof}
The proof is done by induction over action $j$ and agent $t$.
The induction base and $q_t^2$'s part are provided by Theorem \ref{BigThm}. For the induction step we assume that (\ref{qj}) and (\ref{inc}) holds for any $t_0<t$ and $j_0<j$.

As for $q_t^j$, by using the induction hypothesis, we know that an exploitation driven recommendation $\sigma_t=j$ can only come from the first three exploitation states described in Table \ref{tblExtj}. We also know from it that an exploration driven recommendation can only come from the last state  in Table \ref{tblExtj}. For this we notice that $\Pr[S_t^{\vec{z}}]=\Pr[S_t^{\vec{z}},\sigma_t=j]$ for every $S_t^{\vec{z}}\in \Gamma_t^{j+}$ and that $\Pr[S_t^{\vec{z}},\sigma_t=j]=q_t^j$
Hence, the BIC constraint for any agent $t\geq j$ is
\[
\E[u_t(j)-u_t(1)|\sigma_t=j]\Pr[\sigma_t=j] =
2p_j^1\Pi_{i<j}p_{i}^{-1}+(2p_{j}^{1}-1)q^j
_{t}+p_{j}^{1}\sum_{\tau=j}^{t-1}{q^j
_{\tau}}\geq 0
\]
Notice that the first and third state in Table \ref{tblExtj} have the same value for 
\[\Pr_{\hat{\pi}}[S_t^{\vec{z}}|\sigma_t=j]\;\E[u_t(j)-u_t(1)|S^{\vec{z}}_t,\sigma_t=j]\]
Therefore we merged them.
In order for $q^j_t$ to be a valid, it must also satisfy
\[
q_t^j=\Pr[\sigma_t=j,S_t^{\vec{z}}\in \Gamma_t^{j-}]\leq \Pr[S_t^{\vec{z}}\in \Gamma_t^{j-}]
\]
And by substituting $S_t^{\vec{z}}\in \Gamma_t^{j-}$ with the probability for the last state in Table \ref{tblExtj} we get
\[
q_t^j\leq p_{j-1}^{- 1}\sum_{\tau=j-1}^{t-1}{q^{j-1}_{\tau}}-\sum_{\tau=j}^{t-1}{q^j_{\tau}}
\]
Therefore,
\[
q_t^j= \min 
(\frac{2p_{j}^{1}\Pi_{i<j}p_{i}^{-1}+p_{j}^{1}\sum_{\tau=j}^{t-1}{q^j_{\tau}}}{1-2p_{j}^{1}} 
,p_{j-1}^{- 1}\sum_{\tau=j-1}^{t-1}{q^{j-1}_{\tau}}-\sum_{\tau=j}^{t-1}{q^j_{\tau}})
\]
We can upper bound $q_t^j$ as follows
\[
q^j_{t}\leq \frac{2p_{j}^{1}\Pi_{i<j}p_{i}^{-1}+p_{j}^{1}\sum_{\tau=j}^{t-1}{q^j_{\tau}}}{1-2p_{j}^{1}} \leq
\frac{2p_{j-1}^{1}\Pi_{i<j}p_{i}^{-1}+p_{j-1}^{1}\sum_{\tau=j}^{t-1}{q^j_{\tau}}}{1-2p_{j-1}^{1}} 
\]
the second inequality is correct due to the assumption that $p_j^1<p_{j-1}^1$.
From the induction hypothesis, we know that $q_{\tau}^{j}< p_{j-1}^{-1}q_{\tau}^{j-1}$ for every $\tau\leq t-1$. Hence,
\[
q^j_{t}\leq 
\frac{2p_{j-1}^{1}\Pi_{i<j}p_{i}^{-1}+p_{j-1}^{1}\sum_{\tau=j}^{t-1}{q^j_{\tau}}}{1-2p_{j-1}^{1}} \leq 
\frac{2p_{j-1}^{1}\Pi_{i<j}p_{i}^{-1}+p_{j-1}^{- 1}p_{j-1}^{1}\sum_{\tau=j-1}^{t-2}{q^{j-1}_{\tau}}}{1-2p_{j-1}^{1}}=
p_{j-1}^{-1}A_{t-1}^{j-1}=p_{j-1}^{-1}q_{t-1}^{j-1}\;.
\]
By using the induction hypothesis again, we get $0<\sum_{\tau=j}^{t-1}{q^j_{\tau}}<p_j^{-1}\sum_{\tau=j-1}^{t-2}{q^{j-1}_{\tau}}$, thus
\[
p_{j-1}^{-1}q_{\tau}^{j-1}\leq p_{j-1}^{-1}q_{\tau}^{j-1}+ p_{j-1}^{-1}\sum_{\tau=j-1}^{t-2}{q^{j-1}_{\tau}}-\sum_{\tau=j}^{t-1}{q^j_{\tau}}=p_{j-1}^{- 1}\sum_{\tau=j-1}^{t-1}{q^{j-1}_{\tau}}-\sum_{\tau=j}^{t-1}{q^j_{\tau}}
\]
which completes the proof.
\end{proof}

\end{document}